\documentclass[11pt]{article}
\usepackage[numbers]{natbib}
\usepackage{xspace}
\usepackage[usenames,dvipsnames]{xcolor}
\usepackage{graphicx}
\usepackage{amsmath,amssymb,amsthm}
\usepackage[margin=1in]{geometry}
\usepackage[algoruled,noend]{algorithm2e}
\usepackage{pstricks}
\usepackage{comment}
\usepackage{caption}
\usepackage{subcaption}
\usepackage{dsfont}
\usepackage{tikz}
\usepackage{tikz-dependency}
\DeclareMathOperator*{\argmax}{arg\,max}

\usepackage{thmtools} 
\usepackage{thm-restate}

\usepackage[colorlinks=true,urlcolor=blue,linkcolor=RoyalBlue,citecolor=OliveGreen]{hyperref}

\newtheorem{theorem}{Theorem}

\newtheorem{claim}{Claim}
\newtheorem{lemma}{Lemma}
\newtheorem{corollary}{Corollary}
\newtheorem{proposition}{Proposition}




\usepackage{tikz}
\usetikzlibrary{patterns}
\usetikzlibrary{arrows,decorations.pathreplacing}

\newenvironment{talign}
 {\align}
 {\endalign}
\newenvironment{talign*}
 {\csname align*\endcsname}
 {\endalign}

\newcommand{\rev}[1]{\textsc{Rev} ( #1 )}
\newcommand{\brev}[1]{\textsc{BRev} ( #1 )}
\newcommand{\srev}[1]{\textsc{SRev} ( #1 )}
\newcommand{\secondprev}[1]{\textsc{RonenRev} ( #1 )}
\newcommand{\secondpricerev}[1]{\textsc{SecondPrice} ( #1 )}
\newcommand{\BICrev}[1]{\textsc{BICRev} ( #1 )}
\newcommand{\drev}[1]{\textsc{DSICRev} ( #1 )}
\newcommand{\arev}[1]{\textsc{AdditiveRev} ( #1 )}

\usepackage{enumitem}

\newcommand{\realD}{\mathcal{D}}
\newcommand{\wiggleD}{\hat{\realD}}
\newcommand{\trig}{\text{trig}}
\newcommand{\gap}{\text{gap}}

\allowdisplaybreaks

\begin{document}
\title{Smoothed Analysis of Multi-Item Auctions with Correlated Values}
\date{}

\author{
Christos-Alexandros Psomas
\thanks{Computer Science Department, Carnegie Mellon University, cpsomas@cs.cmu.edu.} \and 
Ariel Schvartzman
\thanks{Department of Computer Science, Princeton University, acohenca@cs.princeton.edu.} \and
S. Matthew Weinberg
\thanks{Department of Computer Science, Princeton University, smweinberg@princeton.edu. Supported by NSF CCF-1717899}
}

\maketitle

\begin{abstract}
Consider a seller with $m$ heterogeneous items for sale to a single additive buyer whose values for the items are arbitrarily correlated. It was previously shown that, in such settings, distributions exist for which the seller's optimal revenue is infinite, but the best ``simple'' mechanism achieves revenue at most one (\citet{BriestCKW15, HartN13}), even when $m=2$. This result has long served as a cautionary tale discouraging the study of multi-item auctions without some notion of ``independent items''.

In this work we initiate a smoothed analysis of such multi-item auction settings. We consider a buyer whose item values are drawn from an arbitrarily correlated multi-dimensional distribution then randomly perturbed with magnitude $\delta$ under several natural perturbation models. On one hand, we prove that the~\cite{BriestCKW15, HartN13} construction is surprisingly robust to certain natural perturbations of this form, and the infinite gap remains. 

On the other hand, we provide a smoothed model such that the approximation guarantee of simple mechanisms is smoothed-finite. We show that when the perturbation has magnitude $\delta$, pricing only the grand bundle guarantees an $O(1/\delta)$-approximation to the optimal revenue. That is, no matter the (worst-case) initially correlated distribution, these tiny perturbations suffice to bring the gap down from infinite to finite. We further show that the same guarantees hold when $n$ buyers have values drawn from an arbitrarily correlated $mn$-dimensional distribution (without any dependence on $n$).

Taken together, these analyses further pin down key properties of correlated distributions that result in large gaps between simplicity and optimality.
\end{abstract}

\thispagestyle{empty}
\newpage
\setcounter{page}{1}

\section{Introduction}\label{sec:intro}
How should a revenue-maximizing seller sell heterogeneous goods to interested buyers? This problem has been extensively studied by economists and computer scientists alike, from a variety of perspectives. One major highlight from these works is numerous impossibility results, essentially proving that one cannot hope to find a mechanism that is simple, yet optimal in general settings~\cite{BriestCKW15, HartN13, HartR15, DaskalakisDT14}. A major highlight on the flip side are numerous approximation results, where simple mechanisms are now known to be approximately optimal in quite general (but still structured) settings~\cite{ChawlaHK07, HartlineR09, ChawlaHMS10, HartN12, BabaioffILW14}. 

Perhaps the clearest example demonstrating the interaction of these two lines of work is the following. Consider a single additive buyer whose values for the two items are drawn jointly from a two-dimensional distribution $\realD$\footnote{That is, $(v_1,v_2) \sim \realD$, and the buyer's value for receiving both items is $v_1 + v_2$ (and receiving just item $i$ is $v_i$).}. Then there exist correlated distributions $\realD$ such that the revenue-optimal mechanism achieves infinite revenue, yet the revenue of any simple mechanism is at most $1$~\cite{BriestCKW15, HartN13}. Without getting into details of exactly what ``simple'' means\footnote{Formally, these results show lower bounds on a measure of simplicity called the menu-size complexity.}, this impossibility result rules out any hope of simple mechanisms that are even approximately optimal for all two-dimensional $\realD$. Still, a fantastic complementary result shows that if the two item values are drawn independently, selling separately (post a price $p_i$ for each item and let the buyer purchase a single item $i$ for $p_i$, or both for $p_1 + p_2$) guarantees a $2$-approximation~\cite{HartN12}. So while the impossibility results for arbitrary $\realD$ are quite strong, compelling positive results still exist under natural assumptions.

This avenue turned out to be quite productive: A long line of work recently culminated in a simple and approximately optimal mechanism for any number of buyers with subadditive valuations over any number of independent items~\cite{ChawlaHK07, ChawlaHMS10, ChawlaMS15, KleinbergW12, HartN12, LiY13, BabaioffILW14, Yao15, RubinsteinW15, CaiDW16, ChawlaM16, CaiZ17}. Modeling assumptions aside, this body of works constitutes a major contribution to the theory of optimal auction design.

The impact of these works notwithstanding, one key direction is left largely unaddressed: even as these works generalized in various directions (arbitrary feasibility constraints, combinatorial valuations, etc.), the ``independent items'' assumption remained. Even the few works that pose models of limited correlation have some underlying notion of independence (e.g. there are ``independent features,'' and item values depend linearly on features)~\cite{ChawlaMS15, BateniDHS15}. While ``independent items'' is a perfectly natural assumption (and we have greatly deepened our understanding of mechanism design under it), it was never intended to be ubiquitous in all future works. This is especially true due to the nature of the motivating impossibility result: the $\realD$ witnessing these impossibilities is so carefully crafted (we overview the construction in Section~\ref{sec:rectangle}) that it is far removed from any ``real-world'' motivation. That is, the constructions provided in~\cite{BriestCKW15, HartN13} require carefully building $\realD$ by perfectly placing the infinitely many points in its support just so, and even a tiny deviation in the construction would cause the entire argument to collapse. The thoughtful reader may at this point be thinking: if this construction is so fragile that even a tiny deviation breaks it, perhaps a smoothed analysis might prove insightful. Indeed, this is the focus of this paper. 

So, what might a ``smoothed'' distribution look like? Given an arbitrarily correlated distribution $\realD$, its smoothing $\wiggleD$ first draws a valuation vector $\mathbf{v}$ from $\realD$, and then randomly perturbs $\mathbf{v}$ to $\hat{\mathbf{v}}$ (where the size of the perturbation is parameterized by some $\delta > 0$). This makes sense for the same reason that it makes sense in all other applications of smoothed analysis: these distributions come from somewhere (e.g. past data), inevitably in the presence of tiny noise.

\subsection{What Makes a Good Smoothed Model?}

Once we've decided to do smoothed analysis, we also need to pick a model of random perturbation. We'll be interested in balancing relevance (e.g. how natural a model is) with transparent analysis (e.g. what insights can we derive?). To help illustrate this point, consider the following perturbation proposal: assume that $\realD$ is supported on $[0,v_{\max}]^2$,\footnote{The constructions of~\cite{BriestCKW15, HartN13} work subject to this, but one has to replace ``infinite gap'' with ``unbounded gap''.} and perturb $\mathbf{v}$ by adding a uniformly random vector from $[0,\delta v_{\max}]^2$. 

This is indeed a natural starting point. Unfortunately, the model lends itself to a trivial analysis. Specifically, we can first conclude that the optimal revenue for $\realD$ (respectively $\wiggleD$) is at most $2v_{\max}$ (respectively $2v_{\max}(1+\delta)$). Moreover, for $\wiggleD$, we can set a price of $\delta v_{\max}$ on the grand bundle (that is, allow the buyer to pay $\delta v_{\max}$ to receive all items or $0$ to receive nothing), which sells with probability at least $1/2$. Therefore, the revenue achieved by bundling all items together (henceforth, $\brev{\wiggleD}$) is at least $\delta v_{\max}/2$, immediately guaranteeing an $O(1/\delta)$-approximation. 
For the sake of completeness, we include an improved analysis in Appendix~\ref{app:silly} showing that $\brev{\wiggleD}$ is in fact a $O(\ln(1/\delta))$-approximation in this model (and that this is nearly tight). So, while technically there's a ``positive result'' here, we simply don't learn much from this exercise. For readers especially interested in the modeling aspect, Appendix~\ref{app:badsmoothed} overviews similar issues with alternative models.

From here, we consider two natural modifications. We call the first \emph{Rectangle-Shift}, which essentially replaces the additive shift of the previous model with a multiplicative shift. In this model, values $(v,w)$ are perturbed to $(v+\varepsilon_1v, w+\varepsilon_2w)$, where each $\varepsilon_i$ is drawn independently from $U[0,\delta]$. Our first main result proves that in fact the infinite gap persists in this model! That is, for all $\delta < 1/2$, there exists a correlated $\realD$ such that the corresponding $\wiggleD$ satisfies $\rev{\wiggleD} = \infty$ but $\brev{\wiggleD} \in O(1)$ (which implies that any ``simple'' mechanism achieves finite revenue as well). 

\begin{restatable}{thm}{mainLB}
\label{thm:LBRect}
For all $\delta < 1/2$, there exists a bivariate distribution $\realD$  such that for its corresponding perturbed distribution $\wiggleD$ in the Rectangle-Shift model, $\brev{\wiggleD} \in O(1)$ and $\rev{\wiggleD} = \infty$.
\end{restatable}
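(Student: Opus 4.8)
The plan is to re-use (with a small, carefully chosen modification) the ``rectangle'' construction of \cite{BriestCKW15, HartN13} that is overviewed in Section~\ref{sec:rectangle}: first show that $\brev{\wiggleD}$ stays $O(1)$ by a one-line reduction to $\brev{\realD}$, and then show $\rev{\wiggleD}=\infty$ by exhibiting an explicit lottery menu for $\wiggleD$ that survives every Rectangle-Shift perturbation. The conceptual reason the infinite gap should persist is that a \emph{multiplicative} perturbation of magnitude $\delta<1/2$ never changes the order of magnitude (the ``scale'') of a type: a point near $(x_k,y_k)$ is moved into the box $B_k:=[x_k,(1+\delta)x_k]\times[y_k,(1+\delta)y_k]$, and if the levels of the construction are spaced geometrically (or faster) by a factor much larger than $1+\delta$, these boxes remain disjoint and well separated. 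Since the construction's power comes precisely from price-discriminating across infinitely many well-separated scales, and the perturbation can neither collapse two scales together nor ``fill in the gaps'' between them, one expects the construction to be essentially unharmed --- in sharp contrast to the additive model, where noise erases the fine structure near $0$.

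For the $\brev{\wiggleD}=O(1)$ half: Rectangle-Shift only increases values, and by at most a $(1+\delta)$ factor, so $\hat v_1+\hat v_2=(1+\varepsilon_1)v_1+(1+\varepsilon_2)v_2\le(1+\delta)(v_1+v_2)$ pointwise. Hence for every bundle price $p$ we have $\Pr_{\wiggleD}[\hat v_1+\hat v_2\ge p]\le\Pr_{\realD}[v_1+v_2\ge p/(1+\delta)]$, and taking suprema over $p$ gives $\brev{\wiggleD}\le(1+\delta)\,\brev{\realD}$. Since the base construction has $\brev{\realD}=O(1)$ and $\delta<1/2$, we conclude $\brev{\wiggleD}=O(1)$; and since any $(C{+}1)$-menu mechanism extracts revenue at most $C\cdot\brev{\wiggleD}$ (each non-null option $i$ contributes $p_i\Pr[\text{buys }i]\le p_i\Pr[\hat v_1+\hat v_2\ge p_i]$ and the events are disjoint), every bounded-menu mechanism on $\wiggleD$ is also finite.

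For the harder half, $\rev{\wiggleD}=\infty$: the base construction places mass $q_k$ (geometrically decreasing in $k$) at points $(x_k,y_k)$ (with $x_k,y_k$ geometrically --- or faster --- increasing) together with a lottery menu $\{L_k=(\pi^k_1,\pi^k_2,p_k)\}_k$ under which type $k$ optimally buys $L_k$ at price $p_k=\Omega(x_k+y_k)$, so that $\rev{\realD}\ge\sum_k q_kp_k=\infty$, while the marginal of $v_1+v_2$ is equal-revenue-like (which is what keeps $\brev{\realD}$ bounded). For $\wiggleD$ I would use a \emph{refined} menu: partition each box $B_k$ into finitely many cells $c$ of multiplicative width at most $1+\eta$ for a tiny $\eta$, and for each $(k,c)$ include a ``scale-corrected'' copy $L_{k,c}$ of $L_k$, i.e.\ $L_k$ with its price and allocation rescaled to the bottom-left corner of $c$ (infinite menus are permitted, so this is legal). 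One then argues: (i) every perturbed type lies in some cell $c$ of some $B_k$ and, because the levels are separated by a factor far larger than $1+\delta$, it prefers some lottery from the family $\{L_{k,\cdot}\}$ to every lottery from a family $\{L_{j,\cdot}\}$ with $j\ne k$ --- this is where the original incentive constraints, equipped with a little built-in slack, are invoked; and (ii) within level $k$ it purchases a lottery of price $\Omega(p_k)$. Summing, $\rev{\wiggleD}\ge\sum_k\sum_c q_{k,c}\cdot\Omega(p_k)=\sum_k\Omega(p_k)\,q_k=\infty$. (A cruder variant, which may suffice if the original slack is good enough, just shades every price from $p_k$ to $(1-c\delta)p_k$: the $(1+\delta)$ factor of extra value created by the perturbation then pays for the shading, and the geometric separation prevents any perturbed type in $B_k$ from preferring $L_j$, $j\ne k$.)

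The main obstacle is that these unbounded-revenue constructions are built on a razor's edge: the binding incentive constraints (typically type $k$ against $L_{k-1}$ and $L_{k+1}$) hold with \emph{equality}, so even an infinitesimal multiplicative perturbation of the types could flip an indifference, let a buyer masquerade across levels while paying far less, and collapse the menu's revenue. The crux of the proof is therefore to produce a variant of the rectangle construction whose level-$k$ incentive and individual-rationality constraints hold with a \emph{robust} multiplicative slack of order $\delta\cdot(x_k+y_k)$, uniformly in $k$, while \emph{simultaneously} keeping $p_k=\Omega(x_k+y_k)$ (so $\rev{}$ stays infinite) and the marginal of $v_1+v_2$ equal-revenue-like (so $\brev{}$ stays $O(1)$). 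Quantifying this slack and checking that it absorbs the worst-case $(\varepsilon_1,\varepsilon_2)\in[0,\delta]^2$ at every level at once is the technical heart of the argument; the $\brev{}$ bound and the reduction are routine by comparison.
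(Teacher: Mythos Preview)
Your $\brev{\wiggleD}\in O(1)$ argument is fine and matches the paper. For $\rev{\wiggleD}=\infty$, however, the plan has a genuine gap: it is organized around the wrong invariant.

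You argue robustness via \emph{scale} separation: the levels $M_k$ grow super-geometrically, so a $(1{+}\delta)$-multiplicative perturbation cannot move a type across levels. That handles only the ``downward'' IC constraints (type $k$ will not buy $L_j$ for $j>k$, whose price is prohibitive). The binding constraints in the Hart--Nisan menu are the ``upward'' ones: type $k$ must not prefer $L_j$ with $j<k$, where the price $M_j\,\gap^j$ is negligible and the constraint reduces to $\hat{q}^{\,k}\cdot(q^k-q^j)\ge\gap^k$ for the \emph{normalized} allocation directions $q^i\in[0,1]^2$. Scale separation is irrelevant here; what you need is \emph{angular} separation of the $q^i$ that survives perturbation. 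Neither of your fixes supplies this: partitioning $B_k$ into cells produces lotteries whose allocations all point near $q^k$, so the cross-level comparison to $q^j$ is unchanged; and price shading by $(1-c\delta)$ fails because in the original construction $\gap^k$ is of order $(\text{angular spacing})^2\to 0$, whereas Rectangle-Shift can move $\hat{q}^{\,k}\cdot(q^k-q^j)$ by order $\delta\cdot(\text{angular spacing})$, which dominates $\gap^k$ for large $k$. So the unmodified Hart--Nisan points do not have the slack you need, and your proposal does not say how to manufacture it.

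The paper's construction is built around precisely this angular picture. The key calculation (Claim~\ref{claim:lb on theta}) is that Rectangle-Shift perturbs the \emph{angle} $\theta$ of a point by at most $\delta\theta$ --- a multiplicative perturbation of the angle itself. One therefore places the $q^i$ on concentric shells with angles in geometric progression of ratio $1-3\delta$ (so $|\theta_i-\theta_j|\ge 3\delta\theta_i$, Claim~\ref{claim:ezpz}), leaving a $\delta\theta_i$ buffer after perturbation and yielding $\gap^i_\delta\in\Omega(\theta_i^2)$ (Claim~\ref{claim:gap in omega}). Stopping shell $N$ once $\theta$ drops below $\sqrt{c_N}$ with $c_N=1/(N\log^2 N)$ packs $\Theta(\log N)$ points per shell, each with $\gap^i_\delta\in\Omega(c_N)$, giving $\sum_i\gap^i_\delta\ge\sum_N c_N\log N=\sum_N 1/(N\log N)=\infty$. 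This angular construction is exactly the ``variant with robust slack'' you correctly flagged as the crux --- but it is not obtained by refining boxes or shading prices, and nothing in your outline points toward it.
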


We call our second model \emph{Square-Shift}, which essentially replaces the scale $v_{\max}$ in the initial model with $\max\{v,w\}$. In this model, values $(v,w)$ are perturbed to $(v + \varepsilon_1\max\{v,w\}, w + \varepsilon_2 \max\{v,w\})$, where each $\varepsilon_i$ is drawn independently from $U[0,\delta]$. Our second main result shows that $\brev{\wiggleD} \in \Omega(\delta \cdot \rev{\wiggleD})$ (and this is nearly tight, see Appendix~\ref{app:lower bound square two}).
\begin{restatable}{thm}{mainUB}
\label{thm: n=1,m=2 box}
In the Square-Shift model, for a single additive buyer and $2$ items 
\[ \rev{\wiggleD} \leq \frac{\sqrt{2} \pi (1+\delta)^3 \log{(1+\delta)}}{\delta^2} \brev{\wiggleD} \in O\left(\frac{1}{\delta} \cdot\brev{\wiggleD}\right). \]
\end{restatable}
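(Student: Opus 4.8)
The plan is to prove the asymptotic form $\rev{\wiggleD}\in O(1/\delta)\cdot\brev{\wiggleD}$; the explicit constant then follows by carefully tracking the estimates below. We may assume $\brev{\wiggleD}<\infty$, as otherwise the statement is vacuous; write $B:=\brev{\wiggleD}$. Two elementary facts anchor the argument. First, since $B$ is the revenue of the best posted price on the grand bundle, $\Pr_{\wiggleD}[\hat v+\hat w\ge q]\le B/q$ for all $q>0$ (the bundle value is no heavier-tailed than an equal-revenue distribution); consequently \emph{any} single-option mechanism $(\mathbf x,q)$ with $\mathbf x\in[0,1]^2$ earns at most $q\,\Pr[\hat{\mathbf v}\cdot\mathbf x\ge q]\le q\,\Pr[\hat v+\hat w\ge q]\le B$, because $\hat{\mathbf v}\cdot\mathbf x\le\hat v+\hat w$. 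Second, the local structure of the Square-Shift perturbation: conditioned on the underlying draw $(v,w)$ with $t:=\max\{v,w\}$, the perturbed type $\hat{\mathbf v}$ is uniform on the axis-aligned square $[v,v+\delta t]\times[w,w+\delta t]$; taking $v\ge w$ by symmetry, the larger coordinate is smeared purely multiplicatively, $\hat v=v(1+\varepsilon_1)$ with $\varepsilon_1\sim U[0,\delta]$, while the angular coordinate of $\hat{\mathbf v}$ is smeared over a window of width $\Theta(\delta)$ inside the range $[0,\tfrac{\pi}{2}]$.

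The reduction I would carry out is to decompose the optimal mechanism's menu by allocation \emph{direction}. Partition the options $\{(\mathbf x_j,q_j)\}_j$ into $O(1/\delta)$ buckets, each an angular window of width $\Theta(\delta)$ --- the angular range $[0,\tfrac{\pi}{2}]$ being the source of a factor of order $\pi$. Then $\rev{\wiggleD}$ equals the sum over buckets of the revenue collected from buyers purchasing in that bucket, and restricting the menu to a single bucket yields a mechanism whose total revenue is at least the revenue the full mechanism collected from that bucket's buyers (each such buyer keeps their chosen option, which still beats opting out). So it suffices to bound the revenue of a mechanism all of whose options point within a $\Theta(\delta)$-angular window of a fixed direction $\mathbf b_\ell\in[0,1]^2$: up to $(1+O(\delta))$ factors, every such option behaves like ``sell an $r_j$-fraction of the bundle $\mathbf b_\ell$ at price $q_j$'', i.e.\ like a menu for a \emph{single} good of value $\hat{\mathbf v}\cdot\mathbf b_\ell\le\hat v+\hat w$, for which the optimal (even randomized) mechanism is a posted price by Myerson's theorem, earning at most $\sup_q q\,\Pr[\hat{\mathbf v}\cdot\mathbf b_\ell\ge q]\le B$. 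Summing $O(1/\delta)$ buckets each contributing $O(B)$ gives $\rev{\wiggleD}\in O(B/\delta)$; the additional $(1+\delta)^3$ and the $\log(1+\delta)/\delta$ factor (the latter recognizable as $\E[1/(1+\varepsilon_1)]$ for $\varepsilon_1\sim U[0,\delta]$) appear when converting between perturbed and underlying scales in the quantitative per-bucket estimate, and the $\sqrt 2$ from the Euclidean geometry of the angular windows.

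The crux, and the step I expect to be the main obstacle, is making the per-bucket estimate honest --- and this is exactly where the Square-Shift smoothing is indispensable. Options inside a bucket genuinely differ in direction by $\Theta(\delta)$, introducing additive errors of order $\delta(\hat v+\hat w)$ into the buyer's comparisons; since $\hat v+\hat w$ may have an equal-revenue-heavy tail, even $\E[\delta(\hat v+\hat w)]$ need not be finite, so these errors cannot simply be absorbed. The resolution is that each underlying type is smeared over a full two-dimensional square of side $\delta t$, so its angular smear has width comparable to the bucket width: within a bucket a smeared type looks essentially uniform over the bucket's directions, hence the mechanism cannot discriminate among types at angular resolution finer than $\delta$, and the ``single good of value $\hat{\mathbf v}\cdot\mathbf b_\ell$'' comparison holds up to bounded error. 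Without this --- e.g.\ under a perturbation that leaves the smaller coordinate essentially undisturbed, like Rectangle-Shift --- the menu stays free to discriminate at arbitrarily fine angular resolution, which is exactly why the infinite gap survives there (\thmref{thm:LBRect}).
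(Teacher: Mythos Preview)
Your decomposition is by the \emph{allocation's} direction; the paper's is by the \emph{buyer's} polar angle. This matters, and your version has a real gap at exactly the step you flag. A mechanism whose options all lie in a $\Theta(\delta)$ angular window is still genuinely two-dimensional: with two options pointing at angles $\phi$ and $\phi+\delta$, the buyer's choice depends on the component of $\hat{\mathbf v}$ orthogonal to $\mathbf b_\ell$, so the full type is in play and Myerson does not apply. Your proposed fix --- that a smeared type ``looks essentially uniform over the bucket's directions, hence the mechanism cannot discriminate'' --- is an intuition, not a revenue inequality; turning it into a bound $\rev{M_\ell,\wiggleD}\le O(B)$ still requires controlling errors of order $\delta\|\hat{\mathbf v}\|$ that you yourself note need not be integrable, and no concrete argument is offered for how the angular smearing converts into such a bound.

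The paper sidesteps this entirely by conditioning on the buyer's angle $\theta$: it writes $\rev{\wiggleD}\le\int_0^{\pi/2}\hat g(\theta)\,\rev{\wiggleD_\theta}\,d\theta$, and since $\wiggleD_\theta$ is one-parameter (only the length $r$ varies once $\theta$ is fixed), Myerson applies \emph{exactly} and $\rev{\wiggleD_\theta}$ is achieved by a posted price $r_\theta$ with no approximation error whatsoever. The Square-Shift structure then enters only to bound the density along each ray, and very concretely: $\hat f(x,y)\le\frac{(1+\delta)^2}{\delta^2\max\{x,y\}^2}\int_{R(x,y)} f$, then Fubini, then the geometric fact that a ray from the origin meets any perturbation square over a length-ratio at most $1+\delta$ (this is where $\log(1+\delta)$ appears). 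What remains is $r_\theta$ times a bundle-tail probability, bounded by $(1+\delta)\sqrt 2\,\brev{\wiggleD}$, and integrating over $\theta\in[0,\pi/2]$ supplies the factor $\pi/2$. The smoothing is never used to argue that ``the mechanism cannot discriminate''; it is used to control how much mass $\wiggleD$ can place on any single ray from the origin.
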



Before continuing, we briefly share the distinguishing feature causing Rectangle-Shift to admit an infinite gap, yet Square-Shift to admit finite approximations: the \emph{angle} by which a buyer's valuation vector may be perturbed. Observe that in the Rectangle-Shift model, valuation vectors extremely close to an axis remain extremely close to that axis after perturbation. This fact turns out to be crucial in enabling a lower bound construction. On the other hand, valuation vectors in the Square-Shift model are likely to have their angle non-trivially perturbed. This property turns out to be crucial in establishing our approximation guarantee. Of course, both results require much more than this simple observation, but this property is the distinguishing factor causing results in the two models to diverge.

We conclude with two additional technical observations. First, note that in both models $\brev{\realD} \approx \brev{\wiggleD}$ (unlike the initial additive model). This means that the entirety of our analysis rests on studying $\rev{\wiggleD}$, and determining whether the perturbations guarantee that it's finite. Second, note that in both models $\wiggleD$ \emph{stochastically dominates} $\realD$. Therefore, when we prove in Theorem~\ref{thm: n=1,m=2 box} that $\rev{\wiggleD}$ is bounded, this is \emph{not} because we perturbed the buyer into valuing the items less, but really because this perturbation negates whatever bizarre properties of the original $\realD$ led to $\rev{\realD} =\infty$. This serves as another example of revenue non-monotonicity~\cite{HartR15}.\\

\noindent\textbf{Brief Discussion of Models.} The refuted additive-noise model perhaps seems most natural from a mathematical perspective (indeed, it is the most similar to models used in prior smoothed analysis). From an economic perspective, this is somewhat natural as well (each consumer in the population makes errors on the scale of, e.g., \$10, while no consumer values any item more than \$1 million). The ``Rectangle-Shift'' model is also uncontroversially natural from both perspectives: a consumer's value for item $j$ is inaccurately measured proportional to her value for item $j$. The ``Square-Shift'' model requires a touch more thought. From a mathematical perspective, it is simply replacing the universal \$1 million scale for all consumers in the population with a scale proportional to that consumer's value for the items. From an economic perspective, it may initially seem uncompelling that a high value for one item may cause larger error in estimating the value of the other item. However, numerous works in the behavioral economics literature suggest that bidders indeed value items differently in the presence or absence of more/fewer valuable items~\cite{AdavalWyer, NunesBoatwright04, KWYA08, MKA13}. For example, in an experiment of~\cite{NunesBoatwright04}, the authors observed that a buyer's willingness to pay for a cheap item (a CD) was (statistically significantly) higher when a second item (a sweater) was being sold at 80\$ than when it was being sold at 10\$. 



\subsection{Extensions: Many Buyers and Many Items}
After resolving the single-buyer, two-item case, the natural question is whether a similar analysis extends to multiple buyers or multiple items. For the Rectangle-Shift model, the impossibility results extend to the multi-item case, so we focus on the Square-Shift model. For $n$ buyers and two items, we consider an arbitrarily correlated $2n$-dimensional distribution $\realD$ (denoting $n$ buyers' values for two items). Our perturbation then draws $\varepsilon_{ij}$ independently and uniformly from $[0,\delta]$ for all buyers $i$, and $j \in \{1,2\}$, and maps $v_{ij}$ to $v_{ij} + \varepsilon_{ij} \max_{j' \in \{1,2\}} v_{ij'}$. In other words, while the buyers' values are correlated, the perturbations are independent (and only depend on that buyer's values). 

In this model, we're again able to prove that $\secondprev{\wiggleD} \in \Omega(\delta  \cdot \drev{\wiggleD} )$ (and this is again nearly tight). Here, $\drev{\wiggleD}$ denotes the revenue of the optimal dominant-strategy truthful mechanism\footnote{That is, it is in each buyer's interest to report their true value, no matter the behavior of the other buyers. Contrast this with Bayesian truthful, where it is in each buyer's interest to report their true value as long as the other buyers do so as well. Note that we cannot hope to replace $\drev{\wiggleD}$ with $\BICrev{\wiggleD}$ here due to~\cite{FuLLT17} - see Section~\ref{sec:notation} for further discussion.}, and $\secondprev{\wiggleD}$ denotes the revenue achievable by running Ronen's simple single-item auction~\cite{Ronen01} for the grand bundle, i.e. treat the grand bundle as a single-item auction and run Ronen's auction. See Section~\ref{sec:notation} for the definition of Ronen's auction --- it is a second-price auction with reserve, but the reserve depends on the other buyers' bids. If $\realD$ happens to be independent across buyers, but still correlated across items, we further get $\secondprev{\wiggleD} \in \Omega(\delta \cdot \BICrev{\wiggleD})$, where $\BICrev{\wiggleD}$ is the revenue of the optimal BIC mechanism. Note that the guarantee does not depend on $n$. 

Finally, we consider the single-buyer, multi-item case. The extension of the Square-Shift model is the obvious one: perturb each value independently by $U([0,\delta])\cdot \max_j v_j$. Here, we show that our techniques extend, but give $\brev{\wiggleD} \in \tilde{\Omega}(\delta^m\cdot \rev{\wiggleD})$. 
Furthermore, the exponential dependence on $m$ is unavoidable: a simple counterexample provides a $\realD$ such that for $\delta \in O(1)$, $\brev{\wiggleD} < \frac{m}{2^m} \cdot \rev{\wiggleD}$ (Appendix~\ref{app:lower bound square}). Our analysis  extends to the $m$-item $n$-buyer case, again obtaining an approximation guarantee exponential in $m$. 

\begin{restatable}{thm}{singlebidUB}
\label{thm: n=1,general m box}
In the Square-Shift model, for a single additive buyer and $m$ items
$$\rev{\wiggleD} \leq \left( \sqrt{\frac{\pi e}{2}} \frac{(1+\delta) }{\delta} \right)^m (1+\delta) \log(1+\delta)  m \sqrt{m} \cdot \brev{\wiggleD}. $$
\end{restatable}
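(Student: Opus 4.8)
The plan is to lift the two-item argument behind \thmref{thm: n=1,m=2 box} to general $m$. I start from the structural observation that, conditioned on the unperturbed draw $v\sim\realD$, the perturbed vector $\hat v$ is uniform on the axis-aligned box $\prod_{j=1}^{m}[v_j,\,v_j+\delta R]$ with $R=\max_j v_j$. In particular $\wiggleD\mid v$ is a \emph{product} distribution of $m$ nearly-uniform marginals, $R\le\sum_j v_j\le mR$, and $\sum_j(\hat v_j-v_j)=R\sum_j\varepsilon_j$ equals $R\delta$ times an Irwin--Hall$(m)$ variable. Writing $\hat v=\rho\,\omega$ in polar form with $\rho=\|\hat v\|_2$ and $\omega$ a direction in the positive orthant $S^{m-1}_+$: since the perturbation has magnitude $\Theta(\delta\rho)$ in \emph{every} coordinate, it moves the direction $\omega$ by $\Theta(\delta)$, and it does so even when $v$ hugs an axis (there a tiny coordinate gets displaced by the full $\delta R$). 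This is exactly the property the introduction singles out, and the only one the Rectangle-Shift model fails to have. Two more facts I will lean on: $\brev{\wiggleD}\ge\brev{\realD}$ (a grand-bundle price $p$ still sells to every type with $\sum_j v_j\ge p$), and the a priori difficulty that one cannot bound $\rev{\wiggleD}$ by the expected bundle value $\E_{\hat v}[\sum_j\hat v_j]$, since the latter may be infinite while $\brev{\wiggleD}$ is finite (the equal-revenue phenomenon). The proof must therefore exploit the incentive constraints linking types of very different magnitudes.

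The concrete plan for the upper bound on $\rev{\wiggleD}$ is an LP-duality / virtual-surplus certificate: exhibit a dual flow that, inside each perturbation box, transports mass ``downward'' along each of the $m$ coordinate axes exactly as one-dimensional Myerson ironing prescribes---legitimate because $\wiggleD\mid v$ is a product distribution---so that the induced virtual surplus is nonpositive outside the slab $\{\hat v:\sum_j\hat v_j\ge p\}$ for the grand-bundle price $p$ we finally choose, whence $\rev{\wiggleD}$ is at most the revenue of that single price times a bounded overhead. An equivalent, more geometric way to organize the same estimate (and, I expect, the form the two-item proof takes) is to partition the positive orthant into angular cells of diameter $\Theta(\delta)$ and the radius into geometric shells of ratio $1+\delta$: inside a single cell-by-shell, the Square-Shift noise smears the conditional mass almost uniformly, so the optimal revenue restricted there is within a constant of the revenue of one bundle price at that scale. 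The $(1+\delta)^{m+1}\log(1+\delta)$ in the statement comes out of this bookkeeping: one loses a factor $1+\delta$ per coordinate in replacing a box by its corner scale and one more in rounding the optimal bundle price to a power of $1+\delta$, while $\log(1+\delta)$ is the logarithmic width of a single shell, which is the scale over which the perturbation can be treated as uniform.

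Reassembling, the multiplicative overhead is controlled by how many angular cells of diameter $\Theta(\delta)$ cover $S^{m-1}_+$: this is the solid angle of the positive orthant divided by $\delta^{m-1}$, and a Stirling estimate of $\Gamma(m/2)$ turns that solid angle into the leading term $\big(\sqrt{\pi e/2}\big)^{m}$; tracking the remaining polynomial factors, together with the peak of the Irwin--Hall density that governs how much of a shell a single price captures, supplies the residual $m\sqrt m$ and the final factor of $1/\delta$ versus $\log(1+\delta)$. Because the shells telescope against the \emph{optimal} bundle price there is no double counting across scales, and one arrives at $\rev{\wiggleD}\le\big(\sqrt{\pi e/2}\,(1+\delta)/\delta\big)^{m}(1+\delta)\log(1+\delta)\,m\sqrt m\cdot\brev{\wiggleD}$, recovering \thmref{thm: n=1,m=2 box} at $m=2$ up to an absolute constant. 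The $n$-buyer $m$-item version runs the same per-buyer argument with ``bundle value'' replaced by ``second-highest buyer's bundle value'' and the grand-bundle price replaced by Ronen's auction, yielding the analogous exponential-in-$m$ guarantee of $\secondprev{\wiggleD}$ against $\drev{\wiggleD}$.

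The step I expect to be hardest is precisely the one that lets us beat $\E_{\hat v}[\sum_j\hat v_j]$. The dual flow must cancel exactly the virtual value a clever mechanism would extract by ``identifying which box $\hat v$ landed in and pricing to it,'' and verifying that one flow does this uniformly over \emph{all} correlated $\realD$, with bookkeeping that does not degrade by more than $\big(\Theta(1)/\delta\big)^m$ across $m$ dimensions, is the technical core. A secondary nuisance is that the marginal of a Square-Shift distribution on any $m-1$ of its coordinates is not itself Square-Shift---the $\max$ couples all $m$ coordinates---so there is no clean induction on $m$, and the decomposition has to be carried out directly in $\mathbb{R}^m$. Finally, there is essentially no slack in the $\delta$-dependence: the construction in \appref{app:lower bound square} forces the loss to be $\Omega(\delta^m)$, so the $m$-dimensional volume and density estimates must be done to the right order rather than bounded crudely.
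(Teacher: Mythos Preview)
Your geometric intuition is pointed in the right direction---the $\big(\sqrt{\pi e/2}\big)^m$ does indeed come from Stirling applied to the surface area of $S^{m-1}_+$, and the $\log(1+\delta)$ does measure the radial extent of a perturbation box---but the proof the paper gives is not an LP-duality/dual-flow argument, and the ``angular cells plus radial shells'' picture, while morally close, misidentifies where the actual work is done.

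The paper's argument is a direct continuous integration in spherical coordinates, and its crux is the following chain (the $m$-dimensional lift of the Angle-Shift proof in \secref{subsec:angle shift}): write $\rev{\wiggleD}=\int_{\boldsymbol{\theta}}\hat g(\boldsymbol{\theta})\,\E_{r\sim\hat g(\cdot\mid\boldsymbol{\theta})}[p^*(r,\boldsymbol{\theta})]\,d\boldsymbol{\theta}$; observe that for each fixed direction $\boldsymbol{\theta}$ the inner expectation is the expected payment of a truthful mechanism on a \emph{single-parameter} distribution, hence bounded by $\rev{\wiggleD_{\boldsymbol{\theta}}}$, which by Myerson is a single posted price $r_{\boldsymbol{\theta}}$. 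Everything then reduces to bounding $\int_{r\ge r_{\boldsymbol{\theta}}/\|\mathrm{trig}(\boldsymbol{\theta})\|_1}\hat g(r,\boldsymbol{\theta})\,dr$. This is where the smoothing enters: convert $\hat g$ to $\hat f$ via the Jacobian $r^{m-1}\prod(\sin\theta_j)^{m-1-j}$, use the pointwise density inequality $\hat f(\hat x)\le\frac{(1+\delta)^m}{(\delta\max_j\hat x_j)^m}\int f(x)\,I\{x\in R(\hat x)\}\,dx$, swap order via Fubini, and observe that for fixed $(x,\boldsymbol{\theta})$ the ray through $\boldsymbol{\theta}$ intersects the cube $R^{-1}(x)$ over a radial interval of ratio at most $1+\delta$, so $\int\frac{1}{r}I\{x\in R(r\cdot\mathrm{trig}(\boldsymbol{\theta}))\}\,dr\le\log(1+\delta)$. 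The $(\sqrt m)^m$ in the intermediate bound cancels against the $1/(\sqrt m)^m$ coming from $\int_{[0,\pi/2]^{m-1}}\prod(\sin\theta_j)^{m-1-j}d\boldsymbol{\theta}=\tfrac{\pi}{2}(\tfrac{\sqrt\pi}{2})^{m-2}/\Gamma(m/2)$, leaving exactly the stated factor after Stirling.

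Two concrete gaps in your plan: (i) there is no LP dual or ironing anywhere; the single-parameter reduction is \emph{conditioning on direction}, not a flow inside a perturbation box, and the Irwin--Hall density plays no role. (ii) Your cell-by-shell sketch asserts ``optimal revenue restricted to a cell is within a constant of one bundle price'' and ``shells telescope against the optimal bundle price,'' but neither claim is substantiated, and the paper avoids both by never discretizing: the per-angle posted price $r_{\boldsymbol{\theta}}$ is compared directly to $\brev{\wiggleD}$ via $r_{\boldsymbol{\theta}}\Pr[\|\hat v\|_1\ge r_{\boldsymbol{\theta}}/((1+\delta)\sqrt m)]\le(1+\delta)\sqrt m\,\brev{\wiggleD}$, and the integral over $\boldsymbol{\theta}$ is computed exactly. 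The ``hardest step'' you flag (a uniform dual certificate over all correlated $\realD$) simply does not arise in this route.
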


\begin{restatable}{thm}{multiUB}
\label{thm:multi buyer multi item UB}
For the $n$ additive buyer, $m$ item, Square-Shift model,
\[ \rev{\wiggleD} \leq 4 \left( \sqrt{\frac{\pi e}{2}} \frac{(1+\delta) }{\delta} \right)^m (1+\delta) \log(1+\delta)  m \sqrt{m} \cdot \secondprev{\wiggleD}, \]
where $\rev{\wiggleD} = \drev{\wiggleD}$ for correlated buyers, and $\rev{\wiggleD} = \BICrev{\wiggleD}$ for independent buyers.
\end{restatable}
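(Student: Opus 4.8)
The plan is a two–step reduction. Treating the grand bundle as a single good, I would first bound $\rev{\wiggleD}$ by a single–good quantity at the cost of the item–dimension factor $C_m:=\big(\sqrt{\pi e/2}\,(1+\delta)/\delta\big)^{m}(1+\delta)\log(1+\delta)\,m\sqrt m$ coming from \thmref{thm: n=1,general m box}, and then apply Ronen's classical $2$-approximation~\cite{Ronen01}. Write $\hat V_i:=\sum_j \hat v_{ij}$ for buyer $i$'s value for the grand bundle, let $\mathbf{\hat V}=(\hat V_1,\dots,\hat V_n)$, and let $\mathrm{OPT}_1(\mathbf{\hat V})$ denote the optimal revenue for selling a single good to $n$ buyers with correlated values $\hat V_i$. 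Concretely I would establish $\rev{\wiggleD}\le 2\,C_m\cdot \mathrm{OPT}_1(\mathbf{\hat V})$ and then invoke $\mathrm{OPT}_1(\mathbf{\hat V})\le 2\,\secondprev{\wiggleD}$, which is exactly Ronen's theorem applied to the single good with values $\hat V_i$ (this is the definition of $\secondprev{\wiggleD}$), giving the claimed $4C_m$.

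Fix the revenue-optimal truthful $M$ (DSIC for correlated buyers, BIC for independent buyers), with payments $p_i$ and allocations $x_{ij}\in[0,1]$, $\sum_i x_{ij}\le 1$. Let $i^\star=\argmax_i\hat V_i$, write $t_k:=\max_{i\ne k}\hat V_i$, and decompose $\rev{\wiggleD}=\E[p_{i^\star}]+\E\big[\sum_{i\ne i^\star}p_i\big]$. For the second piece, individual rationality and feasibility give, pointwise, $\sum_{i\ne i^\star}p_i\le\sum_j\sum_{i\ne i^\star}\hat v_{ij}x_{ij}\le\sum_j\max_{i\ne i^\star}\hat v_{ij}\le m\,\hat V_{(2)}$, where $\hat V_{(2)}$ is the runner-up bundle value; since the second-price auction on $\mathbf{\hat V}$ is a valid single-good mechanism, $\E[\hat V_{(2)}]\le\mathrm{OPT}_1(\mathbf{\hat V})$, so this piece is at most $m\cdot\mathrm{OPT}_1(\mathbf{\hat V})\le C_m\cdot\mathrm{OPT}_1(\mathbf{\hat V})$.

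The main piece is $\E[p_{i^\star}]=\sum_k\E\big[p_k\,\mathds{1}[\hat V_k\ge t_k]\big]$. Conditioning on the other buyers' perturbed valuations $\hat{\mathbf v}_{-k}$, the map $\hat{\mathbf v}_k\mapsto(\text{allocation},p_k)$ is a single-buyer, $m$-item truthful mechanism run on a buyer whose type is drawn from $\wiggleD_k\mid\hat{\mathbf v}_{-k}$; because the Square-Shift perturbations are independent across buyers and touch only the perturbed buyer's own coordinates, $\wiggleD_k\mid\hat{\mathbf v}_{-k}$ is itself a Square-Shift perturbation (of $\realD_k\mid\hat{\mathbf v}_{-k}$), so \thmref{thm: n=1,general m box} is in scope. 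What is actually needed is a \emph{floor-price refinement} of that theorem: for a Square-Shift single-buyer distribution $\mathcal G$, a threshold $t$, and any single-buyer mechanism on $\mathcal G$, the revenue collected only from types with $\hat V\ge t$ is at most $C_m\cdot\max_{q\ge t}q\,\Pr_{\mathcal G}[\hat V\ge q]$ (the case $t=0$ being \thmref{thm: n=1,general m box}, which bounds revenue by $C_m\,\brev{\mathcal G}$). Granting this, $\E[p_k\,\mathds{1}[\hat V_k\ge t_k]\mid\hat{\mathbf v}_{-k}]\le C_m\max_{q\ge t_k}q\,\Pr[\hat V_k\ge q\mid\hat{\mathbf v}_{-k}]$; summing over $k$ and taking expectations over $\hat{\mathbf v}_{-k}$ gives $\E[p_{i^\star}]\le C_m\,\secondprev{\wiggleD}\le C_m\,\mathrm{OPT}_1(\mathbf{\hat V})$, because $\sum_k\E_{\hat{\mathbf v}_{-k}}[\max_{q\ge t_k}q\,\Pr[\hat V_k\ge q\mid\hat{\mathbf v}_{-k}]]$ is precisely the revenue of Ronen's lookahead auction for the grand bundle (buyer $k$'s lookahead reserve being the monopoly price of $\hat V_k$ conditioned on the other reports and on $\hat V_k$ exceeding the runner-up). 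Adding the two pieces yields $\rev{\wiggleD}\le 2C_m\,\mathrm{OPT}_1(\mathbf{\hat V})$, hence $\rev{\wiggleD}\le 4C_m\,\secondprev{\wiggleD}$. The independent-buyer case ($\rev=\BICrev$) is identical, except that a BIC mechanism need not be truthful conditionally on a fixed $\hat{\mathbf v}_{-k}$, so one replaces "condition on $\hat{\mathbf v}_{-k}$'' by "pass to buyer $k$'s interim mechanism'', which by independence is a valid single-buyer DSIC mechanism; everything else goes through. (The correlated analogue with $\BICrev{\wiggleD}$ in place of $\drev{\wiggleD}$ genuinely fails, by~\cite{FuLLT17}.)

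The step I expect to be the real obstacle is exactly this floor-price refinement. Equivalently, $\wiggleD_k\mid\hat{\mathbf v}_{-k},\hat V_k\ge t_k$ — to which one would like to apply \thmref{thm: n=1,general m box} directly — is \emph{not} a Square-Shift perturbation of any fixed base distribution, since the event $\{\hat V_k\ge t_k\}$ depends on the perturbation noise and thereby couples that noise with the base valuation; so the theorem cannot be used as a black box here. I would expect the resolution to be one of two routes: either \thmref{thm: n=1,general m box} is itself proved by a scale-by-scale argument (for each dyadic level $r$, bound the revenue from types with $\hat V\approx r$ by $O\!\big(r\,\Pr[\hat V\ge r]\big)$ using that the Square-Shift noise non-trivially perturbs the valuation's angle), in which case the floor-price version follows by summing only over levels $r\ge t$; or one decomposes $\{\hat V_k\ge t_k\}$ into the clean base-distribution event $\{V_k\ge t_k/(1+m\delta)\}$ (which does preserve Square-Shift structure, so the theorem applies) plus a boundary remainder, and then bounds the remainder against $\mathrm{OPT}_1(\mathbf{\hat V})$ — the delicate point being to control that remainder without reintroducing a dependence on $n$, which is also where the precise constants (the $4$, and the absorption of the $m\cdot\mathrm{OPT}_1$ term) are pinned down.
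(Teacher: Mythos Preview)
Your instinct to reduce to the single-buyer $m$-item bound and then invoke Ronen is sound, and you have correctly located the obstacle. However, the floor-price refinement you posit---revenue from types with $\hat V_k\ge t_k$ bounded by $C_m\max_{q\ge t_k}q\,\Pr[\hat V_k\ge q]$ with floor \emph{exactly} $t_k$---is not what the single-buyer machinery delivers, and neither of your two routes closes the gap. If you retrace the proof of \thmref{thm: n=1,general m box} carrying a floor through it, the chain of inequalities (Claim~\ref{claim:x length general box} followed by the simplification $\sum_j\trig_j\le\sqrt m$) degrades that floor by a factor of $(1+\delta)\sqrt m$: what emerges is $C_m\max_{q\ge t_k/((1+\delta)\sqrt m)}q\,\Pr[\hat V_k\ge q]$. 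This slack is fatal for your decomposition. The argmax may now land in $[t_k/((1+\delta)\sqrt m),\,t_k)$, i.e.\ below the runner-up bundle value, so the resulting per-buyer price is not a valid Ronen reserve; summing the bound over $k$ then corresponds to an ``auction'' that may sell to many buyers simultaneously, and the sum picks up a dependence on $n$ (take all $\hat V_k\equiv 1$: every $k$ contributes $\Theta(1)$). Your route~1 does not help because the angle-by-angle argument is precisely where the $(1+\delta)\sqrt m$ loss enters; your route~2's ``boundary remainder'' is exactly the term that blows up with $n$.

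The paper's resolution changes the order of operations and is structurally different from your highest-bidder decomposition. Instead of conditioning on the other buyers' full perturbed values $\hat{\mathbf v}_{-k}$ and then invoking the single-buyer theorem with a floor, the paper first conditions on \emph{every} buyer's angle vector $\boldsymbol\theta_i$ (polar coordinates), reducing to a single-parameter, possibly correlated, instance in the lengths $r_i$, and applies Ronen \emph{there}. Now the reserve $r_i^*(\boldsymbol\theta,\boldsymbol r_{-i})$ already exceeds the runner-up $w_i^*$ by construction, and the Square-Shift machinery is unpacked only on the innermost $r_i$-integral. The $(1+\delta)\sqrt m$ slack is absorbed by a case split on whether $r_i^*>(1+\delta)\sqrt m\cdot w_i^*$: when yes, the degraded price $r_i^*/((1+\delta)\sqrt m)$ still exceeds the runner-up, so that term is a bona fide Ronen price and the sum over $i$ is at most $\secondprev{\wiggleD}$; when no, $r_i^*$ is within $(1+\delta)\sqrt m$ of $w_i^*$ and that contribution is charged directly to the second-price auction for the grand bundle, again at most $\secondprev{\wiggleD}$. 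Doing this split \emph{inside} the single-parameter world, after Ronen has already enforced $r_i^*\ge w_i^*$, is what keeps the final bound independent of $n$.
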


Our analysis further extends to settings beyond those where the buyers' valuation is additive across items. For example, all of our positive results hold for the general class of valuations which are \emph{additive subject to downwards-closed constraints}, at the cost of an additional factor of $m$; we discuss this in detail in Section~\ref{sec:extensions}. 

\subsection{Related Work and Roadmap}\label{subsec:related}
The present paper is the first to explore smoothed analysis in auction design. Other hybrid worst-case/average-case guarantees have been studied, for instance, in the digital goods setting (e.g.~\cite{GoldbergHKSW06, GoldbergH03, ChenGL14}), where at a high level, auctions compete instance-by-instance against any auction that could potentially be optimal in the average-case. Work of~\citet{carroll2017robustness} on robust mechanism design is also thematically related, and argues that simple mechanisms are optimal if the auctioneer wishes to obtain a worst-case guarantee against all value distributions consistent with given marginals (but neither directly nor indirectly addresses the concepts in the constructions of~\cite{BriestCKW15, HartN13}).  

There is a growing body of related literature on multi-item auction design, largely proving  impossibility results for arbitrarily correlated distributions (e.g.~\cite{BriestCKW15, HartN13}), or approximation results for distributions with ``independent items'' (e.g.~\cite{ChawlaHK07, ChawlaHMS10, ChawlaMS15, KleinbergW12, HartN12, LiY13, BabaioffILW14, Yao15, RubinsteinW15, CaiDW16, ChawlaM16, CaiZ17}). Limited work exists on models with limited correlation, but such models make use of the same tools, essentially replacing ``independent items'' with ``independent features'' and items that are linear combinations of features~\cite{ChawlaMS15, BateniDHS15}. The present paper is the first tractable multi-item model of limited correlation that doesn't rely on these tools.

Smoothed analysis is a popular framework for analyzing algorithms on ``real-world worst-case'' inputs. Smoothed analysis most commonly refers to smoothed computational complexity (e.g. an algorithm might run in exponential time in the worst case, but in polynomial time if the worst-case inputs are randomly perturbed), and has been an extremely influential paradigm~\cite{SpielmanT09}. For instance, the Simplex Method for solving LPs is known to take exponential time in the worst case, but has smoothed-polynomial computational complexity~\cite{SpielmanT04}. More similar in spirit to the present paper is prior work that considers the smoothed competitive ratio of online algorithms~\cite{BechettiLMSV06} or smoothed approximation ratio of mechanisms~\cite{DengGZ17}. The motivation for considering smoothed analysis in these works is, of course, similar to ours, but there is no similarity in techniques: the process of proving smoothed guarantees is a domain-specific process. 


\noindent \textbf{Roadmap:} Section~\ref{sec:notation} poses our model and some preliminaries. Section~\ref{sec:rectangle} presents our lower bound for the Rectangle-Shift model. In Section~\ref{sec:square} we present our results in the Square-Shift model, including its extension to multiple buyers and multiple items. Appendix~\ref{app:table} contains a detailed summary of our results in the Square-Shift model, as well as comparisons with known results for independent items. Extensions beyond additive buyers are presented in Section~\ref{sec:extensions}. Nearly tight bounds for the additive noise model can be found in Appendix~\ref{app:silly}. 

\section{Preliminaries}
\label{sec:notation}

Throughout this paper we study auctions with $n$ buyers who are bidding for $m \geq 2$ items. Each buyer $i$ draws its valuation vector $\mathbf{v}_i$ for the items from an $m$-dimensional distribution $\realD_i$ with density $f_i$, and we refer to $\realD$ as the joint $nm$-dimensional distribution. The buyer has value $v_{i,j}$ for the $j$-th item, with marginal distribution $\realD_{i,j}$. Our results do not assume any kind of independence between buyers or items. That is, $\realD_{i,j}$ can be correlated with $\realD_{i,j'}$ (same buyer different item), as well as $\realD_{i',k}$ (different buyer and possibly different item). For now, all buyers considered are additive, quasi-linear, and risk-neutral (or expected utility maximizers). See Section~\ref{sec:extensions} for extensions to non-additive valuation functions.

\paragraph{Mechanisms and Benchmarks.} 
As with many results in approximate mechanism design we compare the optimal revenue attainable to benchmarks that result from simple mechanisms. For a single buyer, let $\rev{\realD}$ be the optimal revenue that can be extracted by a truthful mechanism when valuations are sampled according to $\realD$. Let $\brev{\realD}$ be the optimal revenue that can be attained by selling the grand bundle as a single item --- that is to say, the mechanism posts a single price and allocates all the items if the bidder is willing to pay that price, otherwise it charges nothing and allocates nothing. For multiple buyers, let $\BICrev{\realD}$ denote the optimal revenue that can be extracted by a BIC mechanism (see Appendix~\ref{app:prelim} for a formal definition for correlated buyers), $\drev{\realD}$ denote the optimal revenue that can be extracted by a DSIC mechanism, and $\secondprev{\realD}$ denote the revenue extracted by Ronen's single-item auction (treating the grand bundle as a single item). Ronen's auction will always award the item to the highest bidder (or no one), and the price charged is the maximum of the second-highest bid and a per-buyer reserve (that depends on the other buyers' bids). Ronen shows that for any single-item setting, $2 \secondprev{\realD} \geq \drev{\realD}$.~\citet{FuLLT17} show that $\secondprev{\realD}$ does not generally guarantee any finite approximation to $\BICrev{\realD}$, but that under some assumptions it attains a 5-approximation.\footnote{All of these claims assume that the mechanism is required to be ex-post individually rational, due to seminal work of~\citet{CremerM88}: see Appendix~\ref{app:prelim} for further discussion.}

\paragraph{Models.} 
We consider two different smoothing models and refer to the resulting distributions as $\wiggleD_i$. The magnitude of the perturbation depends on a parameter $\delta$. We write $\hat{f}_i$ for the density of $\wiggleD_i$. $R(\hat{\mathbf{x}},\delta) = \{ \mathbf{x} |  \mathbf{x} \text{ can get mapped to } \hat{\mathbf{x}} \}$ is the set of points that could map to $\hat{\mathbf{x}}$ under a certain model with parameter $\delta$; the model will always be clear from the context and is omitted from notation. We also drop $\delta$ when clear from the context, and write $R^{-1}(\mathbf{x},\delta)$ for the set of points that $\mathbf{x}$ maps to.
The models we consider are (see Figures~\ref{fig:sub3},~\ref{fig:sub2} in Appendix~\ref{app:table} for illustrations): 

\begin{itemize}[leftmargin=*]
\item \textbf{Rectangle-Shift:} buyer $i$'s value $\mathbf{v}_i$ is replaced by $\boldsymbol{\hat{v}}_i$ with $\hat{v}_{ij} = v_{ij}+\delta_j v_{ij}$ where each $\delta_j$ is sampled independently from $U[0, \delta]$.  Intuitively, this spreads the mass $f(\mathbf{v})$ uniformly on the $m$-parallelepiped with side lengths $\delta \mathbf{v}$ and $\mathbf{v}$ as its smallest vertex. 
\item \textbf{Square-Shift:} buyer $i$'s value $\mathbf{v}_i$ is replaced by $\boldsymbol{\hat{v}}_i$ with $\hat{v}_{ij} = v_{ij}+\delta_j \max_k v_{ik}$ where each $\delta_j$ is sampled independently from $U[0, \delta]$. Intuitively, this spreads the mass $f(\mathbf{v})$ uniformly on the $m$-dimensional cube with side length $\delta \max_k v_{ik}$ and $\mathbf{v}$ as its smallest vertex. 
\end{itemize}

As mentioned earlier, it is known that even for the single buyer, two items case, the gap between $\rev{\realD}$ and $\brev{\realD}$ is unbounded when there is correlation between the items~\cite{BriestCKW15, HartN13}. In this paper we compare $\rev{\wiggleD}$ and $\brev{\wiggleD}$ as a function of $\delta$. 

\section{Persistence of Infinite Gap in the Rectangle-Shift Model}\label{sec:rectangle}

In this section we first overview the key ideas from~\cite{BriestCKW15, HartN13}, and then present our construction witnessing an infinite gap in the Rectangle-Shift model. Proofs missing from this section can be found in Appendix~\ref{app: rectangle lower bounds}. Recall this section's main result:
\mainLB*




The key insight behind the~\cite{BriestCKW15,HartN13} construction is the following: assume that you have $k$ points in the positive orthant of the unit hypercube, ${\mathbf{x}}_1,\ldots, {\mathbf{x}}_k$, such that ${\mathbf{x}}_i \cdot {\mathbf{x}}_j < {\mathbf{x}}_i \cdot {\mathbf{x}}_i -\varepsilon$ for all $i, j$. If the valuation vectors in the support of our distribution are exactly ${\mathbf{x}}_1,\ldots, \mathbf{x}_k$, we can design a mechanism such that when the bidder reports a valuation of ${\mathbf{x}}_i$, the mechanism offers a randomized allocation (or \emph{lottery}) of ${\mathbf{x}}_i$ for price $\varepsilon$. When the buyer has valuation vector ${\mathbf{x}}_i$, they get utility ${\mathbf{x}}_i \cdot {\mathbf{x}}_i - \varepsilon$ for the lottery designed for them, which is greater than the utility of reporting some other valuation ${\mathbf{x}}_j$, ${\mathbf{x}}_i \cdot {\mathbf{x}}_j$. Therefore, their utility for purchasing the lottery tailored for them exceeds their \emph{value} for any other lottery (and therefore their utility as well). This guarantees that the buyer will always purchase the lottery designed for them. Observe that simply selling the grand bundle at price $\varepsilon$ achieves just as much revenue, so this idea is just the first building block.

From here,~\cite{BriestCKW15, HartN13} modify this construction to increase $\textsc{Rev}$ while keeping $\textsc{BRev}$ small. The second insight is that if the valuation vectors in the support are actually $2^i {\mathbf{x}}_i$, then we can offer the lottery ${\mathbf{x}}_i$ for price $\varepsilon \cdot 2^i$ and the buyer will still always prefer their tailored lottery to any other one, because their utility will be $2^i( \mathbf{x}_i \cdot \mathbf{x}_i - \varepsilon) > 2^i\mathbf{x}_i \cdot \mathbf{x}_j$, which is their value for the lottery $\mathbf{x}_j$. Therefore, if the buyer has value $2^i \cdot \mathbf{x}_i$ with probability $2^{-i}$, \textsc{Rev} can be quite large, while \textsc{BRev} will remain small. Of course, there are still some details left to work out, such as exactly how to analyze \textsc{Rev} and \textsc{BRev}, and we've also simplified the key ideas at the cost of technical accuracy, but these are the main ideas we'll borrow from previous work: the point is that these constructions are packing points inside the unit hypercube with small pairwise dot products. Our construction is a ``perturbation robust'' version of that of~\cite{HartN13}, as we have to show not only that ${\mathbf{x}}_i$ will prefer the intended option, but that any smoothing of ${\mathbf{x}}_i$ will prefer it as well. 

The intuition of the preceding paragraphs is captured by Lemma~\ref{lemma:lb auction} below. For a set of points $\mathbf{x}_1, \mathbf{x}_2, \dots \in [0,1]^2$ let $\text{gap}^i_{\delta} = \min_{j<i, \hat{\mathbf{x}}_i \in R^{-1}(\mathbf{x}_i,\delta)} \hat{\mathbf{x}}_i \cdot \left( \mathbf{x}_i - \mathbf{x}_j \right)$. That is, $\text{gap}^i_{\delta}$ roughly captures the minimum angle between any smoothing of $\mathbf{x}_i$ and any $\mathbf{x}_j$, $j < i$. First, we show that given any such set of points, there exists a distribution $\realD$ such that $\brev{\wiggleD}$ is at most a constant, but there exists an auction that extracts revenue $\sum_{i=1}^{\infty} \text{gap}^i_\delta$. Roughly speaking, $\realD$ has a type $\mathbf{v}_i$ for every point $\mathbf{x}_i$, such that all post-perturbation types $\hat{\mathbf{v}}_i$ prefer randomized allocation $\mathbf{x}_i$ at price $q_i = \gap^i_\delta$ over any other (allocation, price) pair $(\mathbf{x}_j,q_j)$. The proof of Lemma~\ref{lemma:lb auction} is similar to Proposition~9 of~\cite{HartN13}\footnote{This is clearer when compared to Proposition~$5.1$ in the arXiv version of~\cite{HartN13} uploaded April 22nd, 2013.} after adjusting for our perturbations.

\begin{lemma}
\label{lemma:lb auction}
In the Rectangle-Shift model, for all $\delta < 1/2$, and sequences of points $\mathbf{x}_1, \mathbf{x}_2,... \in [0,1]^2$, there exists a bivariate distribution $\realD$, such that for its corresponding perturbed distribution $\wiggleD$ it holds that $\brev{\wiggleD} \in O(1)$ and $\rev{\wiggleD} \geq \sum_{i=1}^{\infty} \text{gap}^i_\delta$.
\end{lemma}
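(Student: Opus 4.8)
The plan is to adapt the menu-of-lotteries construction of~\cite{HartN13} (Proposition~9 there) to our perturbed setting, replacing their dot-product separation condition with the perturbation-aware quantity $\gap^i_\delta$. The one structural feature of the model we exploit is that the Rectangle-Shift map is \emph{scale-invariant}: the set of points a scaled vector $c\,\mathbf{x}$ can map to is exactly $\{c\,\hat{\mathbf{x}} : \hat{\mathbf{x}}\in R^{-1}(\mathbf{x},\delta)\}$. Hence if we build $\realD$ by blowing up the given points $\mathbf{x}_i\in[0,1]^2$ to types $c_i\mathbf{x}_i$, the realized perturbed types have the convenient form $c_i\hat{\mathbf{x}}_i$ with $\hat{\mathbf{x}}_i\in R^{-1}(\mathbf{x}_i,\delta)$, so that $\hat{\mathbf{x}}_i\cdot(\mathbf{x}_i-\mathbf{x}_j)\ge\gap^i_\delta$ for all $j<i$ by definition of $\gap^i_\delta$.

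First I would pin down the distribution and the menu. We may assume each $\gap^i_\delta>0$ (deleting any offending $\mathbf{x}_i$ only enlarges the remaining gaps, hence $\sum_i\gap^i_\delta$) and set $\gap^1_\delta:=0$. Define scales recursively by $c_i:=c_{i-1}\cdot\max\{2,\,4(1+\delta)/\gap^i_\delta\}$ for $i\ge2$, so the $c_i$ are increasing with $c_i\ge 2c_{i-1}$; then choose $c_1$ so that $S:=\sum_{i\ge1}1/c_i=1/2$, which is possible with $c_1\in[2,4]$ since the ratio $c_1/c_i$ does not depend on $c_1$ and lies in $(0,2^{-(i-1)}]$, whence $c_1 S\in[1,2]$. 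Let $\realD$ put probability $p_i:=2/c_i$ on type $\mathbf{v}_i:=c_i\mathbf{x}_i$ (so $\sum_i p_i=2S=1$); its perturbation realizes $c_i\hat{\mathbf{x}}_i$. The mechanism is the menu that, for each $i$, offers the randomized allocation $\mathbf{x}_i\in[0,1]^2$ at price $q_i:=c_i\gap^i_\delta$ (with $q_1=0$), together with the null option.

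Next I would verify that this menu is incentive compatible and individually rational and that every realized type $c_i\hat{\mathbf{x}}_i$ buys option $i$. Individual rationality and the ``no downward deviation'' inequalities are immediate: using $\hat{\mathbf{x}}_i\cdot(\mathbf{x}_i-\mathbf{x}_j)\ge\gap^i_\delta$ and $q_j\ge0$ gives $c_i\hat{\mathbf{x}}_i\cdot\mathbf{x}_i-q_i\ge c_i\hat{\mathbf{x}}_i\cdot\mathbf{x}_j-q_j$ for $j<i$, and taking $j=1$ gives $c_i\hat{\mathbf{x}}_i\cdot\mathbf{x}_i-q_i\ge c_i\hat{\mathbf{x}}_i\cdot\mathbf{x}_1\ge 0$. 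The ``no upward deviation'' inequalities $q_j-q_i\ge c_i\hat{\mathbf{x}}_i\cdot(\mathbf{x}_j-\mathbf{x}_i)$ for $j>i$ are where the growth rate of $c_i$ matters: the right side is at most $c_i\,\hat{\mathbf{x}}_i\cdot\mathbf{x}_j\le 2c_i(1+\delta)$ since $\hat{\mathbf{x}}_i\in[0,1+\delta]^2$ and $\mathbf{x}_j\in[0,1]^2$, while $\gap^i_\delta\le\|\mathbf{x}_i\|^2\le 2$, so it suffices that $c_j\gap^j_\delta\ge 4(1+\delta)\,c_i$, which holds because $c_j\ge(4(1+\delta)/\gap^j_\delta)\,c_{j-1}\ge(4(1+\delta)/\gap^j_\delta)\,c_i$. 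Consequently $\rev{\wiggleD}\ge\sum_i p_i q_i=\sum_i(2/c_i)\,c_i\gap^i_\delta=2\sum_i\gap^i_\delta\ge\sum_i\gap^i_\delta$.

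Finally I would bound $\brev{\wiggleD}$. Fix any bundle price $t>0$ and let $i^*$ be the least index with $2c_{i^*}(1+\delta)\ge t$; since a realized type $c_i\hat{\mathbf{x}}_i$ has bundle value at most $2c_i(1+\delta)$ and the $c_i$ are increasing, no type $i<i^*$ ever buys, so the revenue at price $t$ is at most $t\sum_{i\ge i^*}p_i\le 2c_{i^*}(1+\delta)\cdot 2\sum_{i\ge i^*}1/c_i\le 2c_{i^*}(1+\delta)\cdot 2\cdot(2/c_{i^*})=8(1+\delta)=O(1)$, using the geometric tail bound $\sum_{i\ge i^*}1/c_i\le 2/c_{i^*}$; hence $\brev{\wiggleD}\in O(1)$. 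I expect the main obstacle to be choosing the growth schedule $\{c_i\}$ so that all three requirements hold simultaneously — fast enough growth for the upward-deviation constraints, at-least-geometric growth for the $\brev$ tail bound, and the normalization $S\le 1$ for the revenue lower bound — together with checking incentive compatibility against the entire continuum of perturbed types $c_i\hat{\mathbf{x}}_i$ rather than only the nominal types; the latter is exactly what the worst-case-over-perturbations definition of $\gap^i_\delta$ is designed to handle, and is the only place the argument genuinely departs from~\cite{HartN13}.
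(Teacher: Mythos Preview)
Your proposal is correct and follows essentially the same approach as the paper's proof: both construct $\realD$ by placing type $c_i\mathbf{x}_i$ with probability $\Theta(1/c_i)$ for a rapidly growing sequence of scales (the paper uses $M_i = 4^i/\prod_{j<i}\gap^j_\delta$, you use a recursive definition with the same qualitative growth), offer the menu of lotteries $\{(\mathbf{x}_i,\,c_i\gap^i_\delta)\}_i$, verify downward IC directly from the definition of $\gap^i_\delta$ and upward IC from the growth rate of $c_i$, and bound $\brev{\wiggleD}$ via the geometric tail of the $1/c_i$. The only cosmetic differences are that the paper puts leftover probability on the zero type rather than normalizing via the choice of $c_1$, and your treatment of the degenerate cases ($\gap^1_\delta$, indices with $\gap^i_\delta\le 0$) is slightly more explicit.
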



Once we have Lemma~\ref{lemma:lb auction}, the crux of the proof becomes constructing a set of points such that $\sum_{i=1}^{\infty} \text{gap}^i_\delta$ diverges. We indeed follow an outline similar to the construction of~\cite{HartN13}, but the requirement that any \emph{smoothing} of $\mathbf{x}_i$ has sufficiently large angle with all $\mathbf{x}_j$ (versus just $\mathbf{x}_i$ itself) poses additional challenges. We include a fairly detailed overview below, with proofs of the key steps. 


\begin{lemma}\label{lemma:points}
For all $\delta < 1/2$, there exist a countably infinite set of points $\mathbf{x}_1, \mathbf{x}_2, \dots\in [0,1]^2$, such that $\sum_i \text{gap}^i_\delta$ diverges.
\end{lemma}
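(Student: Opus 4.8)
The plan is to reduce, via \lemref{lemma:lb auction}, to the purely geometric task of producing points $\mathbf{x}_1,\mathbf{x}_2,\dots\in[0,1]^2$ with $\sum_i\gap^i_\delta=\infty$, and then to build such a sequence as a perturbation-robust version of the point-packing of~\cite{HartN13}. The first thing I would do is make the objective concrete. Writing $\mathbf{x}_i=(v_i,w_i)$, a preimage $\hat{\mathbf{x}}_i\in R^{-1}(\mathbf{x}_i,\delta)$ is exactly $((1+a)v_i,(1+b)w_i)$ with $a,b\in[0,\delta]$, so
\[
\gap^i_\delta \;=\; \min_{j<i}\;\min_{a,b\in[0,\delta]}\Big[(1+a)\,v_i(v_i-v_j)+(1+b)\,w_i(w_i-w_j)\Big],
\]
where the inner minimization is explicit (take $a=0$ when $v_i\ge v_j$ and $a=\delta$ otherwise, similarly for $b$ with the $w$-coordinate). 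Thus $\gap^i_\delta$ is the original HN ``exposure'' quantity $\min_{j<i}\langle\mathbf{x}_i,\mathbf{x}_i-\mathbf{x}_j\rangle$ perturbed by an error of size at most $\delta\big(|v_i(v_i-v_j)|+|w_i(w_i-w_j)|\big)$, and the whole game is to keep this error from swallowing the exposure. The lever for that is the angle of $\mathbf{x}_i$: if $\mathbf{x}_i$ lies in a narrow cone about an axis (say $w_i\ll v_i$), then $\hat{\mathbf{x}}_i$ still has $\hat w_i\le(1+\delta)w_i\ll\hat v_i$, so all of $R^{-1}(\mathbf{x}_i,\delta)$ lies in a slightly wider but still narrow cone about the same axis — Rectangle-Shift multiplies each coordinate by a bounded factor and cannot swing a near-axis vector across the quadrant. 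This is precisely the property Square-Shift lacks.

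For the construction itself I would organize the points into generations, following the~\cite{HartN13} outline but with enlarged margins. Generation $t$ would be a batch of points near a fixed axis, mutually well-separated on a multiplicative scale (consecutive angles in a ratio $C=C(\delta)$ large enough that after any admissible coordinate rescaling by factors in $[1,1+\delta]$ the angular ordering and the pairwise angle ratios are preserved up to a universal constant), all placed at a common radius $\rho_t$, with $\rho_1<\rho_2<\cdots$ increasing and confined to a fixed subinterval of $(0,1]$ so every point stays in $[0,1]^2$. Same-generation pairs are then controlled by the angular separation; against \emph{every} point of every earlier generation — which sits at a strictly smaller radius — the radial increment $\rho_t-\rho_{t-1}$ contributes a positive term of order $\rho_t(\rho_t-\rho_{t-1})$ to $\langle\hat{\mathbf{x}}_i,\mathbf{x}_i-\mathbf{x}_j\rangle$ for \emph{every} admissible $\hat{\mathbf{x}}_i$, dominating the $O(\delta)$ angular cross-terms (small exactly because all angles involved are tiny). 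I would then tune the free parameters — batch sizes, the radius sequence $\{\rho_t\}$, and the angular scale — so that every $\mathbf{x}_i$ satisfies $\gap^i_\delta\ge g_i$ with $\sum_i g_i=\infty$ (aiming for generation $t$ to contribute on the order of $1/t$, giving a divergent harmonic-type total). Finally, feeding this sequence into \lemref{lemma:lb auction} yields $\brev{\wiggleD}=O(1)$ and $\rev{\wiggleD}\ge\sum_i\gap^i_\delta=\infty$, which proves \lemref{lemma:points} (and, together with \lemref{lemma:lb auction}, \thmref{thm:LBRect}).

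I expect the main obstacle to be the parameter-balancing at the end of the construction step. The robustness tax forces the angular margins to be at least a fixed multiple of $\delta$, so angular resolution cannot be refined past that scale; pushing points toward the axis makes the perturbation cone \emph{absolutely} thin but simultaneously shrinks the usable angular gaps (quadratically in the angle), while the radial budget is only $O(1)$. Showing these constraints can be reconciled so that the per-generation contribution to $\sum_i\gap^i_\delta$ does not decay fast enough to become summable is the crux — it is where the~\cite{HartN13} recursion has to be genuinely re-engineered rather than quoted, and it is also where one must verify, term by term, that the adversarial choice of $(a,b)\in[0,\delta]^2$ in the displayed formula never drives any $\langle\hat{\mathbf{x}}_i,\mathbf{x}_i-\mathbf{x}_j\rangle$ negative. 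The remaining bookkeeping (bounding the angular cross-terms, checking $\brev{\wiggleD}=O(1)$ survives through \lemref{lemma:lb auction}, and confirming all points stay in $[0,1]^2$) is routine by comparison.
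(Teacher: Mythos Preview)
Your plan is essentially the paper's construction: points arranged in concentric ``shells'' (your generations) at increasing radii $\rho_t\in(0,1]$, with angles on each shell in geometric progression (ratio depending on $\delta$), so that same-shell gaps are controlled by the multiplicative angular margin and cross-shell gaps by the radial increment $\rho_t-\rho_{t-1}$. The paper resolves the parameter-balancing you flag as the crux by taking $c_N=\frac{1}{N\log^2 N}$, $\rho_N\propto\sum_{k\le N}c_k$, angular ratio $(1-3\delta)$, and cutting off shell $N$ once the angle drops below $\sqrt{c_N}$; this yields $\Theta(\log N)$ points per shell each with $\gap\in\Omega(c_N)$, so $\sum_i\gap^i_\delta\gtrsim\sum_N\frac{1}{N\log N}=\infty$.
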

\begin{proof}
The points will lie on a sequence of spherical shells restricted to the non-negative orthant, with increasing radii. Specifically, shell $N$ will contain a consecutive subsequence of points, all with the same magnitude, $\ell_N$. We first make sure that the shells are sufficiently far apart from each other (i.e. that $\ell_N - \ell_{N-1}$ is sufficiently large) so that the distance between $\mathbf{x}_i$ on shell $N$ is large when compared to any $\mathbf{x}_j$ on shell $< N$. Specifically, we define $\ell_N = \frac{1}{\alpha}\sum_{k=2}^N c_k$, where $c_k = \frac{1}{k\log^2 k}$, and $\alpha = \sum_{k=2}^\infty c_k$. Note crucially that $\alpha$ is finite (by the integral test), and therefore $\ell_N$ is well-defined and bounded above by $1$ for all $N$. 

Now for each shell $N$, we'll place points with magnitude $\ell_N$ starting from angle $\pi/2$ and going down to angle $0$ such that \emph{within each shell}, the points are sufficiently far apart from each other. Specifically, we'll place points so that the angles form a geometric progression: the $k^{th}$ point on shell $N$ will have angle $\frac{\pi}{2}\cdot (1-3\delta)^{k-1}$, and we'll stop placing points on shell $N$ once the angle drops below $\sqrt{c_N}$ (we'll determine later exactly what the maximum $k$ is as a function of $N$).


From here we have two tasks of analysis. First, we must figure out what is $\gap^i_\delta$ for some point $\mathbf{x}_i$ on shell $N$. Second, we must figure out how many points we can pack in each shell with the above construction. We begin by analyzing the gap for a point on shell $N$ with the following lemma.

\begin{lemma}\label{lem:gap}
Let $\mathbf{x}_i$ be a point on shell $N$. Then $\gap^i_\delta \in \Omega(c_N)$ (recall $c_N = \frac{1}{N\log^2 N}$). 
\end{lemma}
\begin{proof}
In order to evaluate $\gap^i_\delta$ for a point in shell $N$, we'll compare it to all points $\mathbf{x}_j$ on a shell $\leq N$. Note that technically we don't need to compare to every point on shell $N$, but the analysis is more straightforward this way. For every point $\mathbf{x}_j$ we also need to find the point $\hat{\mathbf{x}}_i \in R^{-1}(\mathbf{x}_i,\delta)$ such that $ \text{gap}^i_\delta$ is minimized.

First, consider $\mathbf{x}_i$ and the set of points $\Theta_i$ that have the same angle as $\mathbf{x}_i$ but are on earlier shells. Then, certainly the point $\mathbf{x}_j$ in $\Theta_i$ that minimizes $\mathbf{\hat{x}}_i\cdot(\mathbf{x}_i - \mathbf{x}_j)$ is the one on shell $N-1$. Furthermore, the worst possible perturbation for that point is exactly $\mathbf{\hat{x}}_i =\mathbf{x}_i$, since $\mathbf{x}_i$ has both coordinates greater than $\mathbf{x}_j$. Therefore, the gap between these points is $\mathbf{x}_i (\mathbf{x}_i - \mathbf{x}_j) = |\mathbf{x}_i|^2 - |\mathbf{x}_i| |\mathbf{x}_j| = \ell_N^2 - \ell_N \ell_{N-1} = \frac{1}{\alpha} \ell_N c_N \in \Omega(c_N)$ (the final relation is because $\ell_N \geq 1/4$ for all $N$). Therefore, when comparing to any $\mathbf{x}_j$ with the same angle as $\mathbf{x}_i$, the $\gap$ is indeed $\Omega(c_N)$. 

Now, consider some $\mathbf{x}_j$ with a different angle. By construction, for any angle $\theta$ such that there exists a point on a shell $\leq N$ with angle $\theta$, there also exists a point $\mathbf{x}_j$ on shell $N$ with angle $\theta$. Therefore, such a $\mathbf{x}_j$ inducing the smallest $\gap$ is clearly on shell $N$. Moreover, it will either be the point immediately clockwise of $\mathbf{x}_i$, or immediately counterclockwise (so as the maximize the dot product and minimize the gap). 

Let  $\theta_i$ be the angle of $\mathbf{x}_i$ on the $N$-th shell, $\theta_j$ be the angle of the $\mathbf{x}_{j}$ we're comparing to (also on the $N$-th shell), and $\hat{\theta}_i$ the angle of a point $\hat{\mathbf{x}}_i \in R^{-1}(\mathbf{x}_i,\delta)$. Our plan is to show first that $\theta_i$ must be far from $\theta_j$ (Claim~\ref{claim:ezpz}), that $\hat{\theta}_i$ must be close to $\theta_{i}$ (Claim~\ref{claim:lb on theta}), and finally that conditioned on these facts, $\mathbf{\hat{x}}_i \cdot(\mathbf{x}_i - \mathbf{x}_{i-1})$ must be large (Claim~\ref{claim:gap in omega}).

\begin{claim}\label{claim:ezpz} $|\theta_i - \theta_j| \geq 3\delta \theta_i$.\end{claim}
\begin{claim}\label{claim:lb on theta} $|\hat{\theta}_i- \theta_i|\leq \delta \theta_i$. \end{claim} 
\begin{claim}\label{claim:gap in omega} Let $\mathbf{x}_i$ and $\mathbf{x}_j$ both lie on shell $N$, and $\mathbf{\hat{x}}_i \in R^{-1}(\mathbf{x_i},\delta)$. Then $\mathbf{\hat{x}}_i\cdot(\mathbf{x}_i - \mathbf{x}_{j-1}) \geq  \frac{2}{\pi^2} \ell^2_N\delta^2 \theta_i^2$. \end{claim}

Given Claim~\ref{claim:gap in omega}, we simply need to observe that most of these terms are in fact independent of $N$ (up to constant factors). Specifically, $\frac{2\delta^2}{\pi^2}$ is obviously independent of $N$. Moreover, $\ell_N \in (1/4, 1)$ for all $N$. So Claim~\ref{claim:gap in omega} indeed concludes that $\mathbf{\hat{x}}_i\cdot(\mathbf{x}_i-\mathbf{x}_j) \in \Omega(\theta_i^2)$. 

This completes the proof of Lemma~\ref{lem:gap}: we have shown that the gap to any point on a lower shell is $\Omega(c_N)$, and the gap to any point on the same shell is $\Omega(\theta_i^2)=\Omega(c_N)$. The final equality is simply because our construction stops placing points on shell $N$ once the angle drops below $\sqrt{c_N}$.\end{proof}


Now that we've computed $\gap_\delta^i$ for any point on shell $N$, we need to compute the number of points on shell $N$ and then we can analyze $\sum_i \gap_\delta^i$. 

\begin{lemma}\label{lem:pointspershell}
The number of points on shell $N$ is $\Theta(\log(1/c_N)) = \Theta(\log (N))$.
\end{lemma}
\begin{proof}
Here, we simply need to figure out the largest value of $k$ such that the angle $\theta_k \geq \sqrt{c_N}$. So we seek the maximum $k$ such that $\pi/2 (1-3\delta)^{k-1} \geq \sqrt{c_N}$. Taking logs of both sides and rearranging yields (note that $\log(1-3\delta) < 0$): $k \leq \frac{\log(c_N)/2 -\ln(\pi/2)}{\ln(1-3\delta)} +1 \in \Theta(\log(1/c_N)) = \Theta(\log (N))$. 
\end{proof}

We can now complete the proof of Lemma~\ref{lemma:points}. We've just shown that our construction has $\Theta(\log N)$ points on shell $N$ (Lemma~\ref{lem:pointspershell}), and each such point has gap at least $\Omega(c_N)$ (Lemma~\ref{lem:gap}). This means that:
 $$\sum_i \gap_\delta^i \in \Omega \left(\sum_N c_N \log N \right) = \Omega \left(\sum_N \frac{1}{N \log N}\right).$$

$\sum_N \frac{1}{N \log N}$ diverges by the integral test, meaning that $\sum_i \gap_\delta^i$ also diverges. 
\end{proof}




Theorem~\ref{thm:LBRect} follows from combining Lemma~\ref{lemma:lb auction} and Lemma~\ref{lemma:points}. A technical takeaway from this section is that the ability to construct bad examples is heavily tied to the ability to pack points in a sphere ``sufficiently far away'' from each other. Without any smoothing, it's possible to pack infinitely many such points~\cite{HartN13}, because the points claim none of the region around them. With rectangle smoothing, the points now claim some of the region around them, and constructions become trickier, but still exist (note, in particular, that because our choice of angles form a geometric progression, our sequence of points is heavily concentrated near the x-axis, where the angle is barely perturbed). On the other hand, in the Square-Shift model, the points now claim a large region around them and it becomes unclear how to pack so many points. Of course, this is just intuition for why these specific constructions no longer work; the following sections show that the Square-Shift model indeed allows for smoothed-finite approximations.


\section{Square-Shift Upper Bounds}\label{sec:square}
In this section we present our positive results for the Square-Shift model. We start with a complete proof in a toy ``Angle-Shift'' model. This will help us highlight some of the key insights without the technical barriers. In this model a buyer's value $(x,y) = (r \sin \theta, r \cos \theta)$ is perturbed to a point $( r \sin(\theta+\varepsilon), r \cos(\theta + \varepsilon) )$, where $\varepsilon$ is drawn from $U[-\delta,\delta]$. In other words, we output a vector $(\hat{x}, \hat{y})$ with the same length as $(x,y)$, i.e. $x^2+y^2 = \hat{x}^2+\hat{y}^2$, whose angle is uniformly distributed in the interval $[\max(0,\theta-\delta), \min(\pi/2, \theta+\delta)]$, where $\theta$ is the angle of $(x,y)$ (See Figure~\ref{fig:sub1} in Appendix~\ref{app:table}).

In Section~\ref{subsec:all our results} we restate all our results for the Square-Shift model.
The proof for a single buyer and two items  can be found in Section~\ref{section:missing from one buyer two items}.
We present the proof for multiple buyers and two items in Section~\ref{subsec: square many two}.
We include the single buyer, multi-item case in Section~\ref{sec:one agent many items}, and the multi-buyer, multi-item case (with arbitrary correlation across everything) in Section~\ref{section:generalCase}. Throughout this section we assume familiarity with polar coordinates. For a brief review, see Appendix~\ref{app:all upper bounds}.

\subsection{Angle-Shift Upper Bounds}\label{subsec:angle shift}

In this section we describe our approach for the Angle-Shift model. We show that for this perturbation model, bundling recovers a fraction of the optimal revenue for a single buyer interested in two correlated items. This proof highlights some key insights behind Theorem~\ref{thm: n=1,m=2 box} without most of the technical barriers. 

\begin{restatable}{thm}{angleUB}
\label{thm:angle theorem new}
For the Two-Dimensional Angle-Shift, $\rev{\wiggleD} \leq \frac{\pi}{2\delta} \brev{\wiggleD}.$
\end{restatable}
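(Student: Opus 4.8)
The plan is to bound $\brev{\wiggleD}$ from below essentially for free, and to bound $\rev{\wiggleD}$ from above by slicing the angular range into $\Theta(1/\delta)$ pieces, each contributing at most $\brev{\wiggleD}$. Two structural features of Angle-Shift drive everything. First, the perturbation preserves Euclidean norm: $\|\hat{\mathbf v}\|_2=\|\mathbf v\|_2=:r$ for every realization, so writing $\hat{\mathbf v}=(r\sin\hat\theta,r\cos\hat\theta)$ gives $\hat v_1+\hat v_2=r(\sin\hat\theta+\cos\hat\theta)\in[r,\sqrt{2}\,r]$; in particular $\brev{\wiggleD}=\sup_{p}p\Pr[\hat v_1+\hat v_2\ge p]\ge\sup_{p}p\Pr[r\ge p]$, and the bundle value is pinned to within a $\sqrt{2}$ factor of the un-perturbed radius. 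Second, conditioned on the original direction $\hat\theta$ is uniform on a subinterval of $[0,\pi/2]$ of length at least $\delta$, so the density of $\hat\theta$ conditioned on any radius (hence also the marginal density of $\hat\theta$) is at most $1/\delta$.

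For the upper bound, write $\rev{\wiggleD}=\E_{\hat{\mathbf v}\sim\wiggleD}[p(\hat{\mathbf v})]$ for the price function $p$ of the optimal truthful mechanism, partition $[0,\pi/2]$ into $K:=\lceil\pi/(2\delta)\rceil$ windows $W_1,\dots,W_K$ of width $\le\delta$, and split $\rev{\wiggleD}=\sum_{k=1}^{K}R_k$ with $R_k:=\E[p(\hat{\mathbf v})\,\mathds{1}[\hat\theta\in W_k]]$ the revenue the mechanism collects from buyers whose perturbed direction lands in $W_k$. The crux is a lemma asserting $R_k\le\brev{\wiggleD}$ for each $k$ (up to a $1+O(\delta)$ slack a careful argument removes); summing the $K\approx\pi/(2\delta)$ copies then gives the theorem. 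The intuition behind the lemma is twofold. On one hand, within a $\delta$-width window all valuation vectors are nearly parallel to a common direction $\mathbf u_k$, so a buyer's value for any fractional allocation $\mathbf a\in[0,1]^2$ is $r\langle\mathbf a,\mathbf u_\phi\rangle=r\langle\mathbf a,\mathbf u_k\rangle+O(\delta r)$ — at the level of the grand bundle the window looks one-dimensional, with optimal revenue a single posted bundle price, and a posted bundle price restricted to the sub-population $\{\hat\theta\in W_k\}$ can only do worse than against all of $\wiggleD$, i.e.\ at most $\brev{\wiggleD}$. On the other hand, the smeared angular distribution is \emph{flat} on each window (conditional density $\le1/\delta$), which is precisely what precludes the finely-tuned lottery menus underlying the \citet{BriestCKW15,HartN13} infinite-gap construction — whose building block is a family of \emph{near-orthogonal} directions, something a thin angular wedge cannot contain.

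The main obstacle is making the per-window lemma rigorous. The naive estimate $p(\hat{\mathbf v})\le\hat v_1+\hat v_2$ is hopeless: summed over windows it gives only $\rev{\wiggleD}\le\E_{\wiggleD}[\hat v_1+\hat v_2]$, which is infinite whenever the radial marginal has infinite mean — e.g.\ a two-dimensional equal-revenue distribution, for which $\brev{\wiggleD}$ (and, by the theorem, $\rev{\wiggleD}$) is nonetheless finite. So the $O(\delta r)$ ``projection error'' above cannot be absorbed additively; the per-window bound must be genuinely multiplicative (a buyer's payment exceeds her effective bundle value by at most a $1+O(\delta)$ factor), which is also why a thin wedge alone does not suffice and the flatness of the perturbed distribution must be used. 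I would establish it by converting the window-restricted optimal menu into a menu of bundle lotteries and using flatness to bound the revenue lost when each buyer re-optimizes — the ``cascade'' issue that discounting one option poaches buyers off of pricier ones. The two extreme windows near angles $0$ and $\pi/2$, where the perturbation is truncated and one item is nearly worthless, need a separate but easier treatment (the problem degenerates to posting a price on the single valuable item). I expect this single-dimensional-reduction lemma to be the technical heart of the proof, and to be precisely the ingredient that must be redone with heavier machinery and worse constants in the full Square-Shift Theorem~\ref{thm: n=1,m=2 box}.
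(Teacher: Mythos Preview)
Your structural observations are exactly right: norm preservation and the $1/\delta$ bound on the angular density are the two ingredients. But the discretization into $\Theta(1/\delta)$ windows is an unnecessary detour that creates the very ``per-window lemma'' you identify as the hard part --- and you never prove it. Inside a window the distribution is still genuinely two-dimensional, so you cannot invoke single-parameter theory directly; your proposed ``menu-conversion plus flatness'' argument for the cascade issue is a sketch at best, and the $1+O(\delta)$ slack you anticipate would already lose the clean constant.

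The paper sidesteps all of this by conditioning on the \emph{exact} angle $\theta$ rather than on a window. Write $\rev{\wiggleD}=\int_{0}^{\pi/2}\hat g(\theta)\bigl(\int_{0}^{\infty}p^*(r,\theta)\hat g(r\mid\theta)\,dr\bigr)\,d\theta$ and observe that once $\theta$ is fixed the inner expectation is the revenue of a truthful mechanism on a \emph{truly} one-parameter distribution $\wiggleD_{\theta}$, so Myerson applies without any approximation: $\int p^*(r,\theta)\hat g(r\mid\theta)\,dr\le\rev{\wiggleD_{\theta}}=r_{\theta}\Pr[r(\sin\theta+\cos\theta)\ge r_{\theta}\mid\theta]$ for some posted price $r_{\theta}$. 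Now the smoothing enters once, cleanly: $\hat g(r,\theta)\le g(r)/\delta$ (your density bound), which turns the conditional tail into $\frac{1}{\delta}\Pr_{\realD}[\|\mathbf v\|_{2}\ge r_{\theta}/(\sin\theta+\cos\theta)]=\frac{1}{\delta}\Pr_{\wiggleD}[\hat v_{1}+\hat v_{2}\ge r_{\theta}]$, using norm preservation. That last expression times $r_{\theta}$ is at most $\brev{\wiggleD}$, and integrating $d\theta$ over $[0,\pi/2]$ supplies the factor $\pi/2$. No windows, no cascades, no $O(\delta)$ slack, no special treatment of the boundary. The ``technical heart'' you were bracing for dissolves once you realize that exact conditioning already makes the problem single-parameter.
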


\begin{proof}
We break the proof into two big steps. The first step will be almost identical for all our positive results. In the second step we use model-specific facts.

\noindent\textbf{Step one.} We begin by writing the revenue of $\wiggleD$ in polar coordinates. Below, $g$ denotes the density of $\realD$ in polar coordinates, and $\hat{g}$ denotes the density of $\wiggleD$ in polar coordinates. Recall that $g$ is \emph{not} the same as $f$. 
We also use the notation $g(\theta)$ to denote $\int_0^\infty g(w, \theta)dw$, and $g(r|\theta)$ to denote  $g(r, \theta)/g(\theta)$ (note that $g(r|\theta)$ is the density of a one-dimensional distribution, which is the continuous analog of conditioning on $\theta$). $p^*(r, \theta)$ denotes the payment of a buyer with values $(r\cos(\theta), r \sin(\theta))$ in the optimal mechanism.
\[ \rev{\wiggleD} = \int_{\theta = 0}^{\frac{\pi}{2}} \int_{r=0}^{\infty} p^*(r,\theta) \hat{g}(r,\theta) dr d\theta = \int_{\theta = 0}^{\frac{\pi}{2}} \hat{g}(\theta) \int_{r=0}^{\infty} p^*(r,\theta) \hat{g}(r|\theta)dr d\theta. \]

Let $\wiggleD_\theta$ be the single parameter distribution from which we first sample $r$ according to the density function $\hat{g}(r|\theta)$, and then output the vector $(r\cos{\theta}, r\sin{\theta})$. Then, notice that the inner most integral is the expected payments according to $p^*$ for $\wiggleD_\theta$. Observe also that if our optimal mechanism is truthful on the entire domain $\mathbb{R}^2$, it is certainly also truthful on the domain restricted to $\theta$. Therefore, this expected payment is upper bounded by $\rev{\wiggleD_\theta}$, the revenue of the optimal truthful mechanism for the same distribution. Furthermore, since this is a single parameter distribution, its revenue is equal to the revenue of a posted price auction (with some reserve $r_\theta$)~\cite{Myerson81}.

\begin{align}
\rev{\wiggleD} &= \int_{\theta = 0}^{\frac{\pi}{2}} \hat{g}(\theta) \int_{r=0}^{\infty} p^*(r,\theta) \hat{g}(r|\theta)dr d\theta \leq \int_{\theta = 0}^{\frac{\pi}{2}} \hat{g}(\theta) \rev{\wiggleD_{\theta}} d\theta \label{eq:ub step}\\
&= \int_{\theta = 0}^{\frac{\pi}{2}} \hat{g}(\theta) \left( r_\theta \int_{r = \frac{r_\theta}{\sin{\theta} + \cos{\theta}}}^{\infty} \hat{g}(r|\theta) dr \right) d\theta = \int_{\theta = 0}^{\frac{\pi}{2}}  r_\theta \int_{r = \frac{r_\theta}{\sin{\theta} + \cos{\theta}}}^{\infty} \hat{g}(r,\theta) dr d\theta \notag.
\end{align}

In going from line one to two above, we are observing that $\rev{\wiggleD_\theta}$ integrates the density of $\wiggleD_\theta$ over all $r$ with $r(\sin \theta + \cos \theta) \geq r_\theta$. The final step is simply rearranging for convenience later. This concludes step one, which is simply to upper bound the optimal revenue by an integral over one-dimensional revenues. Step Two follows, which appeals to the model at hand to unpack $\hat{g}(r,\theta)$.
\noindent\textbf{Step two: model specific analysis.} For the Angle-Shift model we have that

\[
\hat{g}(r,\theta) = \int_{\phi=\max(0,\theta-\delta)}^{\min(\pi/2, \theta+\delta)}  \frac{g(r,\phi)}{\min(\pi/2,\theta+\delta)-\max(0,\theta-\delta)}d\phi \leq \int_{\phi=\max(0,\theta-\delta)}^{\min(\pi/2, \theta+\delta)}	  \frac{g(r,\phi)}{\delta}  d\phi \leq \frac{g(r)}{\delta}.
\]

\noindent Above, the equality is simply because each $(r, \phi)$ when sampled from $g$ sends a $\Theta(1/\delta)$ fraction of its density to $(r,\theta)$. The first inequality is just observing that the fraction of $(r,\phi)$'s density that goes to $(r, \theta)$ is at most $1/\delta$ (it is exactly $1/\delta$ unless $\phi$ is near an axis). The final inequality just observes that the third term is integrating over a proper subset of possible $\phi$, so the integral is certainly upper bounded by integrating the entire region from $0$ to $\pi/2$ (similarly to above, we let $g(r) = \int_0^{\pi/2} g(r, \theta)d\theta$). This concludes step two, which is simply to use the specifics of the model to upper bound the density of the smoothed distribution by that of the original distribution.

We combine steps one and two directly to continue our derivation:
\begin{align*}
\rev{\wiggleD} &\leq \frac{1}{\delta }\int_{\theta = 0}^{\frac{\pi}{2}}  r_\theta \int_{r = \frac{r_\theta}{\sin{\theta} + \cos{\theta}}}^{\infty} g(r) dr d\theta \\
&\leq \frac{1}{\delta }\int_{\theta = 0}^{\frac{\pi}{2}}  r_\theta Pr\left[ \text{a sample $\mathbf{v} \sim \realD$ has $\|\mathbf{v}\|_2 \geq \frac{r_\theta}{\sin{\theta} + \cos{\theta}}$} \right]   d\theta \\
&= \frac{1}{\delta }\int_{\theta = 0}^{\frac{\pi}{2}}  r_\theta Pr\left[ \text{a sample $\mathbf{v} \sim \wiggleD$ has $\|\mathbf{v}\|_2 \geq \frac{r_\theta}{\sin{\theta} + \cos{\theta}}$} \right]   d\theta \\
&= \frac{1}{\delta }\int_{\theta = 0}^{\frac{\pi}{2}}  r_\theta Pr\left[ \text{a sample $\mathbf{v} \sim \wiggleD$ has $\|\mathbf{v}\|_1 \geq r_\theta$} \right]   d\theta \\
&\leq \frac{1}{\delta }\int_{\theta = 0}^{\frac{\pi}{2}}  \brev{\wiggleD}   d\theta = \frac{\pi}{2 \delta } \brev{\wiggleD}.  \qedhere
\end{align*}

\end{proof}

This completes the proof for the Angle-Shift model. Note that above the first inequality simply combines steps one and two. The second observes that integrating the density of a distribution yields (one minus) its CDF. The next line observes that the described event occurs with equal probability for samples from $\realD$ and $\wiggleD$ (this is the step where angle-shifting saves some technical work over square-shifting). The fourth is basic geometry. The final inequality notes that the expression we are integrating over is the revenue of a posted-price mechanism for the bundle (specifically, with reserve $r_\theta$) and in particular can be no better than $\brev{\wiggleD}$ itself. 

Let us now highlight the key step of the proof where we make use of the smoothed model. Step one applies for any distribution (even unsmoothed), and is in general \emph{very} loose. In particular, it could be as high as the full welfare for some distributions. Smoothing gets us mileage in the first half of step two where we transition from terms that depend on $\hat{g}$ to terms that depend on $g$. Mathematically, we capitalize on the following: if there were no angle-shifting, then we would simply have had $\hat{g}(r, \theta) = g(r,\theta)$ instead of the integral with respect to $\phi$. This is important because {integrals over density functions are probabilities}, but {density functions themselves are not probabilities}! That is, if we concentrate on the first part of step two, this is claiming that an integral of a non-negative function over a smaller range is upper bounded by the integral of the same function over a larger range. Without smoothing, we would instead just have a single point of that function, which \emph{can't} generically be upper bounded by an integral over any region.

\subsection{Square-Shift: Our Results}\label{subsec:all our results}


We remind the reader of our results for the Square-Shift model below. Section~\ref{subsec:angle shift} captures the intuition for one core step which appears in each of the proofs, but in order to prove our main positive results we still need to overcome a number of technical obstacles. 


\mainUB*


\begin{restatable}{thm}{singleUB}
\label{thm:multibidderUB}

In the Square-Shift model, for $n$ additive buyers and $2$ items:
\begin{align*}
&\textstyle \drev \wiggleD \leq \frac{ 9\sqrt{2}(1+\delta)^3 \log(1+\delta)}{ \delta^2 } \secondprev{\wiggleD} \in O(\secondprev{\wiggleD}/\delta) \text{(if buyers are correlated),}\\
& \textstyle \BICrev \wiggleD \leq \frac{ 9\sqrt{2}(1+\delta)^3 \log(1+\delta)}{ \delta^2 } \secondprev{\wiggleD} \in O(\secondprev{\wiggleD}/\delta) \text{(if buyers are independent)}.
\end{align*}
\end{restatable}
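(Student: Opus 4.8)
The plan follows the single-buyer Theorem~\ref{thm: n=1,m=2 box}: use the smoothing to pass, at a cost of only $O(1/\delta^2)$, from the correlated two-item instance to a single-item ``grand bundle'' instance, then invoke Ronen's $2$-approximation ($2\,\secondprev{\cdot}\ge\drev{\cdot}$ for any single-item setting) to land on $\secondprev{\wiggleD}$. For independent buyers one additionally uses the revenue-preserving BIC-to-DSIC reduction for independent buyers (Cai--Daskalakis--Weinberg), so that $\BICrev{\wiggleD}=\drev{\wiggleD}$ and the $\drev$-bound upgrades for free; this is also why the correlated case must be phrased with $\drev$ and not $\BICrev$, since by~\cite{FuLLT17} no $\secondprev$-bound against $\BICrev$ is possible for correlated buyers.

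In detail, I would proceed in four steps. \emph{Step one (angle conditioning).} As in the ``step one'' common to all the positive results, write each buyer's value in polar coordinates and condition on the vector $\boldsymbol{\theta}$ of post-perturbation angles. A DSIC mechanism restricted to any sub-domain remains DSIC, so the restriction of the optimal mechanism to the slice with angles $\boldsymbol{\theta}$ is DSIC for the conditional distribution $\wiggleD_{\boldsymbol{\theta}}$, giving $\drev{\wiggleD}\le\mathbb{E}_{\boldsymbol{\theta}}[\drev{\wiggleD_{\boldsymbol{\theta}}}]$. Conditioned on $\boldsymbol{\theta}$ each buyer has a scalar type $r_i$, with value for either item (and for the bundle) a fixed multiple of $r_i$ --- a single-parameter environment. \emph{Step two (single-item reductions).} Decompose this single-parameter two-item environment into its two single-item environments (selling item $1$ only, item $2$ only), which for single-parameter agents loses at most an absolute constant; to each apply Ronen, obtaining $\drev{\wiggleD_{\boldsymbol{\theta}}}\le O(1)\big(\secondprev{(\wiggleD_{\boldsymbol{\theta}})_1}+\secondprev{(\wiggleD_{\boldsymbol{\theta}})_2}\big)$, where $(\wiggleD_{\boldsymbol{\theta}})_j$ is the $n$-buyer single-item setting from the $j$-th coordinates. \emph{Step three (model-specific density bound).} Using the Square-Shift analogue of the ``density-to-probability'' step behind Theorem~\ref{thm: n=1,m=2 box} --- conditioning on angles inflates the relevant one-dimensional densities by at most $O(1/\delta^2)$ (up to lower-order $(1+\delta)$ and $\log(1+\delta)$ factors) --- show $\mathbb{E}_{\boldsymbol{\theta}}\big[\secondprev{(\wiggleD_{\boldsymbol{\theta}})_j}\big]\le \frac{C(1+\delta)^3\log(1+\delta)}{\delta^2}\,\secondprev{\wiggleD}$, and combine with Steps one--two. \emph{Step four (independent buyers).} Quote Steps one--three via $\BICrev{\wiggleD}=\drev{\wiggleD}$.

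The hard part is Step three, and the pitfall is an $n$-dependent blow-up. Ronen's auction revenue is not linear in the distribution --- its per-buyer reserves are optimized against the conditional law of the highest bidder given the others --- so one cannot bound the \emph{joint} conditional density of all $n$ buyers; doing so would cost a factor $1/\delta^2$ \emph{per buyer}. The way around is that Ronen's lookahead auction charges a price to only one buyer at a time: decompose $\secondprev{(\wiggleD_{\boldsymbol{\theta}})_j}$ by an outer expectation over the identity of the highest bidder and the values of everyone else, leaving an \emph{inner} single-buyer single-parameter posted-price problem, and apply the single-buyer Square-Shift density estimate to that one buyer's conditional magnitude density only. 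Every other buyer enters solely through the outer expectation, which is left intact, so the final $\delta$-exponent equals the single-buyer one and there is no dependence on $n$; the constant $9\sqrt{2}$ then tracks Ronen's factor $2$, the two-item decomposition, and the $\sqrt{2}$ relating $v_{i1}+v_{i2}$ to Euclidean magnitude.

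A secondary technical point is making the two-item decomposition of Step two rigorous for single-parameter environments with correlated agents (via the conditional Myerson payment formula, which is additive in the allocation rule across items) --- this is where the restriction to $m=2$ keeps the overhead at an absolute constant, whereas general $m$ incurs the $m\sqrt{m}$ of Theorem~\ref{thm:multi buyer multi item UB} --- and relating each $\secondprev{(\wiggleD_{\boldsymbol{\theta}})_j}$ back to the grand-bundle benchmark $\secondprev{\wiggleD}$: the bundle values dominate the single-item values, so this should go through, but the lack of monotonicity of Ronen-revenue under correlation means it needs an argument rather than a one-liner.
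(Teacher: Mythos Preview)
Your outline captures the right high-level picture (condition on angles, reduce to single-parameter, use Ronen, apply the single-buyer density bound to one buyer at a time), but it is missing the step that actually makes the paper's proof close, and your Step~four is incorrect.

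\textbf{The missing case split.} After you apply the Square-Shift density estimate to buyer $i$'s magnitude, the take-it-or-leave-it price $r_i^*(\boldsymbol{\theta},\mathbf{r}_{-i})$ gets scaled down by a factor $\sqrt{2}(1+\delta)$ (this is the $\|\cdot\|_1\to\|\cdot\|_2$ loss composed with the $r\mapsto r/(1+\delta)$ from Claim~\ref{claim:a b length}). The scaled price $r_i^*/\big(\sqrt{2}(1+\delta)\big)$ may now fall \emph{below} $w_{-i}^*=\max_{j\neq i}r_j(\cos\theta_j+\sin\theta_j)$, so the ``auction'' you are left comparing to might sell the bundle to several buyers simultaneously and cannot be bounded by $\secondprev{\wiggleD}$. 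The paper handles this by splitting on the indicator $\{r_i^*\le \sqrt{2}(1+\delta)w_{-i}^*\}$: when it holds, replace $r_i^*$ by $w_{-i}^*$ and recognize a second-price auction for the bundle (term~\eqref{eq:first term}, bounded by $2\sqrt{2}(1+\delta)\secondprev{\wiggleD}$); when it fails, the scaled price is still at least $w_{-i}^*$, so after the density bound you still have a Ronen-type auction and can bound by $\secondprev{\wiggleD}$ (term~\eqref{eq:second term}). Your proposal's ``outer expectation over the highest bidder'' does not substitute for this, because it fixes who wins \emph{before} the price shift rather than after. Without the split you cannot get back to $\secondprev{\wiggleD}$.

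\textbf{Step four is not right, and Step two is a detour.} For independent buyers there is no ``$\BICrev{\wiggleD}=\drev{\wiggleD}$'' reduction in multi-item settings; the Cai--Daskalakis--Weinberg machinery does not give that equality. The paper instead observes that \emph{after} conditioning on $\boldsymbol{\theta}$ the instance is single-parameter, and for independent single-parameter bidders Myerson gives $\BICrev{\wiggleD_{\boldsymbol{\theta}}}=\drev{\wiggleD_{\boldsymbol{\theta}}}$, so Ronen's $2$-approximation applies to $\BICrev{\wiggleD_{\boldsymbol{\theta}}}$ directly --- that is literally the only line that changes between the correlated and independent proofs. Separately, your Step~two decomposition into per-item environments is unnecessary: once angles are fixed, each buyer's value for item~$j$ and for the bundle are all scalar multiples of $r_i$, so the paper works with the bundle from the start and never needs to ``relate $\secondprev{(\wiggleD_{\boldsymbol{\theta}})_j}$ back to $\secondprev{\wiggleD}$'', avoiding exactly the monotonicity issue you flag at the end.
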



\singlebidUB*

\multiUB*
\subsection{Square-Shift: One buyer, Two Items.}\label{section:missing from one buyer two items}

\begin{proof}[Proof of Theorem~\ref{thm: n=1,m=2 box}]

Our first few steps are identical to the Angle-Shift model. Recall that the first step on the Angle-Shift model was simply to upper bound the optimal multi-dimensional revenue by that of multiple single-dimensional distributions. We then write the revenue of a single-dimensional distribution as the revenue of a posted-price.

\begin{align}
\rev{ \wiggleD } &= \int_{\theta=0}^{\pi/2} \int_{r=0}^{\infty} p^*(r,\theta) \hat{g}(r,\theta) dr d\theta = \int_{\theta=0}^{\pi/2} \int_{r=0}^{\infty} p^*(r,\theta) \hat{g} (\theta) \hat{g}(r|\theta) dr d\theta \notag\\ 
&\leq^{\text{Lemma~\ref{lemma:anglefocus}}} \int_{\theta=0}^{\pi/2} \hat{g} (\theta)\rev{\wiggleD_{\theta}} d\theta = \int_{\theta=0}^{\pi/2} \hat{g} (\theta) \left( r_{\theta} \int_{r=\frac{r_\theta}{\cos{\theta} + \sin{\theta}}}^{\infty} \hat{g}(r|\theta) dr \right) d\theta \notag\\ 
&= \int_{\theta=0}^{\pi/2} r_\theta \int_{r=\frac{r_\theta}{\cos{\theta} + \sin{\theta}}}^{\infty}  \hat{g}(r, \theta) dr d\theta . \label{eq:continue box}
\end{align}

We now want to proceed with the analog of the second step: upper bound $\int_{r=\frac{r_\theta}{\cos{\theta} + \sin{\theta}}}^{\infty}  \hat{g}(r, \theta) dr$ in terms of the original distribution. We start by moving from polar coordinates to cartesian coordinates, using the transformation $\hat{g}(r, \theta) = r \hat{f}(r \cos{\theta},r \sin{\theta})$. Then, we can connect $\hat{f}$, the density of $\wiggleD$, to $f$, the density of $\realD$, using the following:

\begin{equation}
\label{eq:square shift bound f_hat}
\begin{aligned}
\hat{f}\left(x,y\right) & =  \int_{0}^{\infty} \int_{0}^{\infty} \frac{f(a,b) I\{ (a,b) \in R(x,y) \} da db}{\delta^2 \max\{a,b\}^2} \\
 & \leq \frac{(1+\delta)^2}{\delta^2 \max\{x,y\}^2} \int_{a=0}^{\infty} \int_{b=0}^{\infty} f(a,b) I\{ (a,b) \in R(x,y) \} db da, \\
\end{aligned}
\end{equation}
where $I\{ (a,b) \in R(x,y) \}$ is an indicator for the event that $(a,b)$ belongs in $R(x,y)$. The inequality follows from the fact that for $I\{(a,b) \in R(x,y)\}$ to be 1 we must have that $\max\{a,b\} \geq \frac{\max\{x,y\}}{1+\delta}$. 

\begin{align*}
&\int_{r=\frac{r_\theta}{\cos{\theta} + \sin{\theta}}}^{\infty}  \hat{g}(r, \theta) dr =  \int_{r=\frac{r_\theta}{\cos{\theta} + \sin{\theta}}}^{\infty} r \hat{f}(r \cos \theta, r \sin \theta) dr \\ 
&\quad \leq \frac{(1+\delta)^2}{\delta^2} \int_{r=\frac{r_\theta}{\cos{\theta} + \sin{\theta}}}^{\infty} \frac{r}{r^2 \max\{\cos \theta, \sin \theta\}^2}  \int_{a=0}^{\infty} \int_{b=0}^{\infty} f(a,b) I\{ (a,b) \in R(r,\theta) \} db da dr \\ 
&\quad \leq \frac{2(1+\delta)^2}{\delta^2} \int_{r=\frac{r_\theta}{\cos{\theta} + \sin{\theta}}}^{\infty} \frac{1}{r}  \int_{a=0}^{\infty} \int_{b=0}^{\infty} f(a,b) I\{ (a,b) \in R(r,\theta) \} db da dr .\quad \left[ \max\{ \cos \theta, \sin \theta\}^2 \geq \frac{1}{2} \right]
\end{align*}

This is starting to look familiar to the expression we had at the end of step two for Angle-Shift. The goal of the next few steps is to replace the innermost integral by the integral of the original density on some controlled region and get rid of the indicator function.
At this end of this process, we hope to get a double integral integral (with respect to $a$ and $b$) that we can interpret as a probability.
Our first step is to apply Fubini's Theorem to swap the order of integration:
 
\begin{equation*}
\begin{aligned}
\int_{r=\frac{r_\theta}{\cos{\theta} + \sin{\theta}}}^{\infty}  \hat{g}(r, \theta) dr & \leq \frac{2(1+\delta)^2}{\delta^2}  \int_{a=0}^{\infty} \int_{b=0}^{\infty} f(a,b) \int_{r=\frac{r_\theta}{\cos{\theta} + \sin{\theta}}}^{\infty} \frac{I\{ (a,b) \in R(r,\theta) \}}{r} dr db da. \\
\end{aligned}
\end{equation*}

Now we argue about values of $a$ and $b$ where the indicator $I\{ (a,b) \in R(r,\theta) \}$ is non-zero.

\begin{claim}\label{claim:a b length}
$\forall (a,b) \in R(r,\theta)$, if $r \geq \frac{r_\theta}{\cos{\theta} + \sin{\theta}}$ then $a+b \geq \frac{r_\theta}{(1+\delta)(\cos{\theta} + \sin{\theta})}$.
\end{claim}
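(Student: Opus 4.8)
\textbf{Proof plan for Claim~\ref{claim:a b length}.}

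The plan is to unpack the definition of the Square-Shift preimage set $R(r,\theta)$ and translate the hypothesis $r \geq \frac{r_\theta}{\cos\theta + \sin\theta}$ into a statement about the $\ell_1$-norm of $(a,b)$. Recall that under Square-Shift, a point $(a,b)$ maps to $(\hat{x},\hat{y})$ with $\hat{x} = a + \varepsilon_1\max\{a,b\}$ and $\hat{y} = b + \varepsilon_2\max\{a,b\}$ for $\varepsilon_1,\varepsilon_2 \in [0,\delta]$. So $(a,b) \in R(r,\theta)$ — where $(r,\theta)$ is shorthand for the Cartesian point $(r\cos\theta, r\sin\theta)$ — means exactly that there exist $\varepsilon_1,\varepsilon_2 \in [0,\delta]$ with $r\cos\theta = a + \varepsilon_1\max\{a,b\}$ and $r\sin\theta = b + \varepsilon_2\max\{a,b\}$. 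Summing these two equations gives $r(\cos\theta + \sin\theta) = (a+b) + (\varepsilon_1 + \varepsilon_2)\max\{a,b\}$.

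First I would bound the error term: since $\varepsilon_1,\varepsilon_2 \leq \delta$ and $\max\{a,b\} \leq a+b$ (as $a,b \geq 0$), we get $(\varepsilon_1+\varepsilon_2)\max\{a,b\} \leq 2\delta(a+b)$. Hence $r(\cos\theta+\sin\theta) \leq (1+2\delta)(a+b)$. Actually, a cleaner bound suffices: since the perturbation only increases each coordinate, $\hat{x} \geq a$ and $\hat{y} \geq b$ isn't quite what we want directly, so instead I would use $\max\{a,b\} \leq a+b$ together with a slightly more careful accounting — we have $r\cos\theta = a + \varepsilon_1\max\{a,b\} \leq a + \delta\max\{a,b\}$ and similarly $r\sin\theta \leq b + \delta\max\{a,b\}$, so $r(\cos\theta+\sin\theta) \leq (a+b) + 2\delta\max\{a,b\} \leq (a+b)(1+2\delta)$. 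Combined with the hypothesis $r(\cos\theta+\sin\theta) \geq r_\theta$, this yields $a+b \geq \frac{r_\theta}{1+2\delta}$. This is in the same spirit as the claimed bound; to get precisely $\frac{r_\theta}{(1+\delta)(\cos\theta+\sin\theta)}$ I would instead note that each coordinate is multiplied by at most $(1+\delta)$ relative to itself — i.e. $\hat{x} = a(1 + \varepsilon_1\max\{a,b\}/a) \leq a(1+\delta)$ when $a = \max\{a,b\}$, but this fails when $a < b$. The robust route is: $\hat{x} + \hat{y} = (a+b) + (\varepsilon_1+\varepsilon_2)\max\{a,b\}$, and I'd bound $\max\{a,b\} \le a+b$ giving $\hat{x}+\hat{y} \le (1+2\delta)(a+b)$; since $\delta < 1/2$ and the paper's constant $(1+\delta)$ appears squared elsewhere, I would reconcile the exact constant by tracking whether the intended bound used $\max\{a,b\} \le \min\{\hat{x},\hat{y}\}/1 \cdots$ — in any case, rearranging $r_\theta \leq r(\cos\theta+\sin\theta) = \hat{x}+\hat{y} \leq (1+2\delta)(a+b)$ gives $a+b \geq \frac{r_\theta}{1+2\delta} \geq \frac{r_\theta}{(1+\delta)(\cos\theta+\sin\theta)}$ since $\cos\theta+\sin\theta \geq 1$ and $(1+\delta)^2 \geq 1+2\delta$ — wait, that inequality goes the wrong way, so I'd need $(1+\delta)(\cos\theta+\sin\theta) \geq 1+2\delta$, which holds since $\cos\theta+\sin\theta \geq 1$ only gives $(1+\delta) \cdot 1$, insufficient. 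I would therefore use $\cos\theta + \sin\theta \geq 1$ more carefully in conjunction with the actual chain, or simply derive the bound with constant $(1+2\delta)$ and observe it suffices for all downstream uses.

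The main obstacle is purely bookkeeping of constants: identifying the precise relationship between the multiplicative blow-up factor on the $\ell_1$-norm under Square-Shift and the factor $(1+\delta)(\cos\theta+\sin\theta)$ in the claim, and checking that $\max\{a,b\} \leq a+b$ versus $\max\{a,b\} \leq (1+\delta)^{-1}\max\{\hat x,\hat y\}$ (the fact already used in Equation~\eqref{eq:square shift bound f_hat}) are deployed consistently. I expect no conceptual difficulty — the claim is essentially the statement ``small perturbations of each coordinate change the $\ell_1$-norm by at most a bounded multiplicative factor,'' and the hypothesis $r \geq \frac{r_\theta}{\cos\theta+\sin\theta}$ is exactly engineered to say $\hat x + \hat y \geq r_\theta$, so that after dividing out the blow-up factor one recovers a lower bound on $a+b$ proportional to $r_\theta$.
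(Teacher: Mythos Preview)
Your core argument---sum the two perturbation identities to get $r(\cos\theta+\sin\theta) = (a+b) + (\varepsilon_1+\varepsilon_2)\max\{a,b\} \leq (1+2\delta)(a+b)$, then combine with the hypothesis $r(\cos\theta+\sin\theta)\geq r_\theta$---is correct and yields $a+b \geq \frac{r_\theta}{1+2\delta}$, which indeed suffices downstream. The paper, however, takes a different and shorter route through the Euclidean length rather than the $\ell_1$ norm: it observes that any preimage $(a,b)$ of a point of length $r$ must itself have length $\|(a,b)\|_2 \geq r/(1+\delta)$, and then simply chains $a+b \geq \|(a,b)\|_2 \geq r/(1+\delta) \geq \frac{r_\theta}{(1+\delta)(\cos\theta+\sin\theta)}$. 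This lands on the stated constant in one line without the $\max\{a,b\}\leq a+b$ step, and it is exactly the template reused in the $m$-item analogue (Claim~\ref{claim:x length general box}). Your detour trying to force $\frac{r_\theta}{1+2\delta} \geq \frac{r_\theta}{(1+\delta)(\cos\theta+\sin\theta)}$ was unnecessary (and, as you correctly noticed, fails when $\theta$ is near an axis); the two routes simply produce slightly different constants, and either is adequate for the rest of the argument.
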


\begin{proof}
Recall that $(a,b) \in R(r,\theta)$ means that $(a,b)$ that could map to a point with length $r$ and angle $\theta$. 
The length $\ell_{(a,b)}$ of a point $(a,b)$ that could map to $(r,\theta)$ via our perturbing process must be at least $\frac{r}{1+\delta}$.
Furthermore, this length $\ell_{(a,b)}$ is at most $a+b$. Combining we get $a+b \geq \ell_{(a,b)} \geq \frac{r}{1+\delta} \geq \frac{r_\theta}{(1+\delta)(\cos{\theta} + \sin{\theta})}$.
\end{proof}

We apply Claim~\ref{claim:a b length} to change the limit of integration:

\begin{multline*}
\int_{r=\frac{r_\theta}{\cos{\theta} + \sin{\theta}}}^{\infty}  \hat{g}(r, \theta) dr \leq \\ \frac{2(1+\delta)^2}{\delta^2}  \int_{a = 0}^{\infty} \int_{b = \frac{r_\theta}{(1+\delta)(\cos{\theta} + \sin{\theta})} - a}^{\infty} f(a,b) \int_{r=\frac{r_\theta}{\cos{\theta} + \sin{\theta}}}^{\infty} \frac{I\{ (a,b) \in R(r,\theta) \}}{r} dr db da.
\end{multline*}

We can now upper bound the value of the inner most integral.

\begin{claim}
\label{claim:singledimintbound}
For the current Square-Shift model, for any $(a,b) \in \mathbb{R}^2$,  
$$\int_{r=\frac{r_\theta}{\cos{\theta} + \sin{\theta}}}^{\infty} \frac{1}{r} I\{ (a,b) \in R(r,\theta) \} dr \leq \log(1+\delta).$$ 
\end{claim}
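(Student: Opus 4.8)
The plan is to reduce the claim to a one-line computation by understanding, for fixed $(a,b)$ and $\theta$, the shape of the set $S_{a,b,\theta}=\{r>0 : (a,b)\in R(r,\theta)\}$. I will show this set is an interval $[r_{\min},r_{\max}]$ (or empty), and that $r_{\max}/r_{\min}\le 1+\delta$. Since $\int_{c}^{(1+\delta)c}\tfrac{dr}{r}=\log(1+\delta)$, and since the integrand $\tfrac1r\,I\{(a,b)\in R(r,\theta)\}$ is nonnegative, restricting the domain to $r\ge \tfrac{r_\theta}{\cos\theta+\sin\theta}$ can only shrink the integral; so the bound follows at once from the interval claim.

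First I would unpack $R(r,\theta)$ explicitly. Writing $M=\max\{a,b\}$, the Square-Shift perturbation maps $(a,b)$ to $(a+\delta_1 M,\ b+\delta_2 M)$ with $\delta_1,\delta_2\in[0,\delta]$, so $(a,b)\in R(r,\theta)$ precisely when the Cartesian point $(r\cos\theta,r\sin\theta)$ lies in the axis-aligned square $[a,a+\delta M]\times[b,b+\delta M]$. For $\theta\in(0,\pi/2)$ this is equivalent to $r$ satisfying both $a\le r\cos\theta\le a+\delta M$ and $b\le r\sin\theta\le b+\delta M$, i.e. $r$ lies in the intersection of $[\tfrac{a}{\cos\theta},\tfrac{a+\delta M}{\cos\theta}]$ and $[\tfrac{b}{\sin\theta},\tfrac{b+\delta M}{\sin\theta}]$. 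This is an interval $[r_{\min},r_{\max}]$ with $r_{\min}=\max\{\tfrac{a}{\cos\theta},\tfrac{b}{\sin\theta}\}$ and $r_{\max}=\min\{\tfrac{a+\delta M}{\cos\theta},\tfrac{b+\delta M}{\sin\theta}\}$, or empty, in which case the integral is $0$ and we are done.

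Next I would bound the ratio. Assume without loss of generality $a\ge b$, so $M=a$; if $a=0$ then $b=0$, $M=0$, and $R(r,\theta)$ forces $r=0$, making $S_{a,b,\theta}$ a single point and the integral zero. Otherwise, using only the first-coordinate constraint, $r_{\min}\ge \tfrac{a}{\cos\theta}$ while $r_{\max}\le \tfrac{a+\delta M}{\cos\theta}=\tfrac{(1+\delta)a}{\cos\theta}$, hence $r_{\max}/r_{\min}\le 1+\delta$. The case $b>a$ is symmetric via the second coordinate, and the boundary angles $\theta\in\{0,\pi/2\}$ force the off-ray coordinate of $(a,b)$ to vanish and collapse to the same one-coordinate estimate. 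Therefore $\int_{r_\theta/(\cos\theta+\sin\theta)}^{\infty}\tfrac1r\,I\{(a,b)\in R(r,\theta)\}\,dr\le \int_{r_{\min}}^{r_{\max}}\tfrac{dr}{r}=\log(r_{\max}/r_{\min})\le\log(1+\delta)$.

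The only point requiring care — more a pitfall than a genuine obstacle — is that one must not bound $r_{\max}/r_{\min}$ by the ratio of the largest to smallest Euclidean norm over the whole square $[a,a+\delta M]\times[b,b+\delta M]$: that ratio is unbounded as $a/b\to\infty$ (take $b=0$). Being confined to the ray of angle $\theta$ is exactly what rescues the argument, since then both $r_{\min}$ and $r_{\max}$ are pinned by the same coordinate, and the additive perturbation budget $\delta M=\delta\max\{a,b\}$ is at most a $(1+\delta)$ multiplicative factor against that coordinate. I expect everything else to be routine.
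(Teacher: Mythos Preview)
Your proof is correct and follows essentially the same approach as the paper: both argue that the indicator is supported on the interval where the ray of angle $\theta$ meets the square $[a,a+\delta M]\times[b,b+\delta M]$, and that the ratio of the exit to entry radii is at most $1+\delta$, after which $\int_{r_{\min}}^{r_{\max}}\tfrac{dr}{r}\le\log(1+\delta)$ finishes it. One small inaccuracy in your closing aside: the ratio of the largest to smallest Euclidean norm over the whole square is \emph{not} unbounded as $a/b\to\infty$ (with $b=0$ it is $\sqrt{(1+\delta)^2+\delta^2}$), though it does exceed $1+\delta$, so your instinct to restrict to the ray is still the right one for the sharp bound.
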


\begin{proof}
For a fixed angle $\theta$ and point $(a,b)$, in order to find out where the indicator is non-zero we need to find the set of points on the line with slope $\theta$ that intersect the set of points to which $(a,b)$ can map to. The line will first intersect the square defined by $(a,b)$ at some point $(x_1,y_1)$ with length $\ell_1$ (if it ever intersects), and leave the square at some point $(x_2,y_2)$ with length $\ell_2$. The ratio between $\ell_1$ and $\ell_2$ is at most $(1+\delta)$: 

$$ \int_{r=\frac{r_\theta}{\cos{\theta} + \sin{\theta}}}^{\infty} \frac{1}{r} I\{ (a,b) \in R(r,\theta) \} dr \leq \int_{r=\ell_1}^{(1+\delta)\ell_1} \frac{1}{r} dr. $$

The claim follows from integrating the simplified upper bound.   
\end{proof}

Once the indicator has been dealt with, the remaining upper bound looks similar to that at the end of step two in the Angle-Shift model. The next couple of lines interpret the double integral as the probability that the buyer drew a sample whose bundle value was above the reserve price of that distribution. This is the moral analog of final step in the Angle-Shift model.
Applying  Claim~\ref{claim:singledimintbound} we have

\begin{align*}
&\int_{r=\frac{r_\theta}{\cos{\theta} + \sin{\theta}}}^{\infty}  \hat{g}(r, \theta) dr \leq \frac{2(1+\delta)^2 \log{(1+\delta)}}{\delta^2}  \int_{a = 0}^{\infty} \int_{b = \frac{r_\theta}{(1+\delta)(\cos{\theta} + \sin{\theta})} - a}^{\infty} f(a,b) db da \\
&\quad= \frac{2(1+\delta)^2 \log{(1+\delta)}}{\delta^2} Pr\left[ \text{sample from $\realD$ has sum of values at least $\frac{r_\theta}{(1+\delta)(\cos{\theta} + \sin{\theta})}$} \right] \\
&\quad \leq \frac{2(1+\delta)^2 \log{(1+\delta)}}{\delta^2} Pr\left[ \text{sample from $\wiggleD$ has sum of values at least $\frac{r_\theta}{(1+\delta)(\cos{\theta} + \sin{\theta})}$} \right],
\end{align*}

where we used the fact that perturbing only increases the value for an item.

The remaining step is to combine steps one and two and interpret the right hand side as a possible bundling mechanism. This can be further upper bounded by the revenue of the optimal bundling scheme. So plugging back in~\ref{eq:continue box}:

\begin{align*}
\rev{ \wiggleD } &\leq \frac{2(1+\delta)^2 \log{(1+\delta)}}{\delta^2}  \int_{\theta=0}^{\pi/2} r_\theta  Pr\left[ \hat{v}_1 + \hat{v}_2 \geq \frac{r_\theta}{(1+\delta)(\cos{\theta} + \sin{\theta})} \right] d\theta \\
&\leq \frac{2(1+\delta)^2 \log{(1+\delta)}}{\delta^2}  \int_{\theta=0}^{\pi/2} r_\theta  Pr\left[ \hat{v}_1 + \hat{v}_2 \geq \frac{r_\theta}{(1+\delta)\sqrt{2}} \right] d\theta \\
&= \frac{2\sqrt{2}(1+\delta)^3 \log{(1+\delta)}}{\delta^2} \int_{\theta=0}^{\pi/2} \frac{r_\theta}{(1+\delta)\sqrt{2}}  Pr\left[ \hat{v}_1 + \hat{v}_2 \geq \frac{r_\theta}{(1+\delta)\sqrt{2}} \right] d\theta.
\end{align*}

\begin{claim}\label{claim:brev bigger than pricing}
For any price $r$, $\brev{\wiggleD} \geq r Pr\left[ \hat{v}_1 + \hat{v}_2 \geq r \right]$.
\end{claim}

\begin{proof}
$\brev{\wiggleD}$ is the revenue maximizing auction among those that sell both items (with allocation probability $1$) whenever the allocation probability is non-zero. $r Pr\left[ \hat{v}_1 + \hat{v}_2 \geq r \right]$ is the revenue of setting a price $r$ for the grand bundle.
\end{proof}

Applying Claim~\ref{claim:brev bigger than pricing} completes the proof of Theorem~\ref{thm: n=1,m=2 box}:

\begin{align*}
\rev{ \wiggleD } &\leq \frac{2 \sqrt{2}(1+\delta)^3 \log{(1+\delta)}}{\delta^2} \int_{\theta=0}^{\pi/2} \brev{\wiggleD} d\theta = \frac{\sqrt{2} \pi (1+\delta)^3 \log{(1+\delta)}}{\delta^2} \brev{\wiggleD}. \qedhere
\end{align*}

\end{proof}

\subsection{Square-Shift: Multiple Buyers, Two Items.}\label{subsec: square many two}

We study the multi-buyer, two-item scenario.
The main difference to the single-buyer case is that interaction between different buyers in optimal auctions (even optimal single-item auctions) can be very complex. Similar to prior work of~\cite{Yao15, CaiDW16}, the main insight (beyond the single-buyer setting) here is to make use of Ronen's auction instead of the optimal auction~\cite{Ronen01}. 

\singleUB*

\begin{proof}

The proof is broken into two steps, similarly to the proofs of Theorem~\ref{thm:angle theorem new} and Theorem~\ref{thm: n=1,m=2 box}. In the first step, we upper bound $\rev{\wiggleD}$ by an expression that involves $\rev{\wiggleD_{\boldsymbol{\theta}}}$, where $\wiggleD_{\boldsymbol{\theta}}$ denotes the $n$-dimensional distribution $\boldsymbol{r}$ conditioned on angles $\boldsymbol{\theta}$ (so below, $\boldsymbol{r}$ denotes the vector of magnitudes for each buyer, and $\boldsymbol{\theta}$ their angles).


\[  \drev{\wiggleD} = \int_{\boldsymbol{\theta}} \hat{g}(\boldsymbol{\theta}) \int_{\boldsymbol{r}} \hat{g}(\boldsymbol{r} | \boldsymbol{\theta})  \sum_{i=1}^n p^*_i\left(\boldsymbol{r}, \boldsymbol{\theta} \right) d\boldsymbol{r} d\boldsymbol{\theta} \leq \int_{\boldsymbol{\theta}} \hat{g}(\boldsymbol{\theta}) \drev{ \wiggleD_{\boldsymbol{\theta}} } d \boldsymbol{\theta},
\]

where in the last inequality we used a generalization of Equation~\ref{eq:ub step}. See Lemma~\ref{lemma:anglefocus} in Appendix~\ref{app:all upper bounds} for a full re-derivation. Again observe that $\wiggleD_{\boldsymbol{\theta}}$ is now a single-parameter instance.


\paragraph{Main intuition.}
From here, we wish to mimic the analysis of the previous section. Unfortunately, optimal multi-buyer mechanisms (even Myerson's~\cite{Myerson81}) don't take a posted-price format, so we use Ronen's~\cite{Ronen01} instead. Fixing angles $\boldsymbol{\theta}$ for all buyers, and lengths $r_{-i}$ for all buyers but $i$, Ronen's mechanism (for the possibly correlated, single parameter distribution $\wiggleD_{\boldsymbol{\theta}}$) makes a take it or leave it offer $r^*_i(\boldsymbol{\theta},\boldsymbol{r}_{-i})$ to buyer $i$, which is larger than the value of any other buyer, i.e. at least $\max_{j \neq i}\left( r_j(\cos \theta_j + \sin \theta_j)\right)$. Ronen proved that $2\secondprev{\wiggleD_{\boldsymbol{\theta}}} \geq\drev{\wiggleD_{\boldsymbol{\theta}}}$; for independent buyers, $2\secondprev{\wiggleD_{\boldsymbol{\theta}}} \geq\BICrev{\wiggleD_{\boldsymbol{\theta}}}$ (and this is the only difference between the correlated and independent buyer proofs). 

Let $w^*_{-i}(\boldsymbol{\theta}_{-i}, \boldsymbol{r}_{-i}) = \max_{j \neq i}\left( r_j(\cos \theta_j + \sin \theta_j)\right)$ be the  highest valuation among all buyers excluding $i$. Then Ronen's mechanism gives buyer $i$ the option to purchase the item at price $r^*_i(\boldsymbol{\theta},\boldsymbol{r}_{-i}) = \argmax_{w \geq w^*_{-i}(\boldsymbol{\theta}_{-i}, \boldsymbol{r}_{-i})} \left( w \cdot \Pr\left[ r_i (\cos \theta_i + \sin \theta_i) \geq w) \right] \right)$. Using this, we get

\begin{align}
&\drev{\wiggleD} \leq 2  \int_{\boldsymbol{\theta}} \hat{g}(\boldsymbol{\theta}) \left( \int_{\boldsymbol{r}} \hat{g}(\mathbf{r}|\boldsymbol{\theta}) \sum_{i=1}^n r^*_i(\boldsymbol{\theta},\boldsymbol{r}_{-i}) I\{ r_i(\cos{\theta_i} + \sin{\theta_i}) \geq  r^*_i(\boldsymbol{\theta},\boldsymbol{r}_{-i}) \} d \mathbf{r} \right) d \boldsymbol{\theta} \notag \\
&= 2 \sum_{i=1}^n \int_{\boldsymbol{\theta}_{-i}, \boldsymbol{r}_{-i}} \hat{g}(\boldsymbol{r}_{-i}, \boldsymbol{\theta}_{-i}) \int_{\theta_i} r^*_i(\boldsymbol{\theta},\boldsymbol{r}_{-i}) \int_{r_i = \frac{r^*_i(\boldsymbol{\theta},\boldsymbol{r}_{-i})}{\cos{\theta_i} + \sin{\theta_i}}}^{\infty} \hat{g}\left( r_i,\theta_i | \boldsymbol{r}_{-i}, \boldsymbol{\theta}_{-i}\right) dr_i d\theta_i d\boldsymbol{r}_{-i} d\boldsymbol{\theta}_{-i}. \label{eq:no good term}
\end{align}

Above, the first line simply replaces $\drev{\wiggleD_{\boldsymbol{\theta}}}$ with a formula for the expected revenue of Ronen's mechanism. The second line performs a few operations: the inner sum is pulled outside the integral, some $\hat{g}$ terms are lumped together for cleanliness, and the integral over an indicator is replaced by an integral over the region where the indicator is one (and the indicator is dropped). 

The brief derivation above captures the main intuition: it's crucial to upper bound the optimal revenue as an integral over posted-price revenues. Of course, as we see next, wrapping up the multi-buyer case has unique subtleties.

\paragraph{The formal argument}


For simplicity, we write $\rev{\wiggleD}$, and don't separate between correlated and independent buyers. Recall though, that for independent buyers $\rev{\wiggleD} = \BICrev{\wiggleD}$ and for correlated buyers, $\rev{\wiggleD} = \drev{\wiggleD}$.

\begin{align}
&\rev{\wiggleD} \leq 2  \int_{\boldsymbol{\theta}} \hat{g}(\boldsymbol{\theta}) \left( \int_{\boldsymbol{r}} \hat{g}(\mathbf{r}|\boldsymbol{\theta}) \sum_{i=1}^n r^*_i(\boldsymbol{\theta},\boldsymbol{r}_{-i}) I\{ r_i(\cos{\theta_i} + \sin{\theta_i}) \geq  r^*_i(\boldsymbol{\theta},\boldsymbol{r}_{-i}) \} d \mathbf{r} \right) d \boldsymbol{\theta} \notag \\
&= 2 \sum_{i=1}^n \int_{\boldsymbol{\theta}_{-i}, \boldsymbol{r}_{-i}} \hat{g}(\boldsymbol{r}_{-i}, \boldsymbol{\theta}_{-i}) \int_{\theta_i} r^*_i(\boldsymbol{\theta},\boldsymbol{r}_{-i}) \int_{r_i = \frac{r^*_i(\boldsymbol{\theta},\boldsymbol{r}_{-i})}{\cos{\theta_i} + \sin{\theta_i}}}^{\infty} \hat{g}\left( r_i,\theta_i | \boldsymbol{r}_{-i}, \boldsymbol{\theta}_{-i}\right) dr_i d\theta_i d\boldsymbol{r}_{-i} d\boldsymbol{\theta}_{-i} \notag \\
\begin{split} \label{eq:first term}
&= 2 \sum_{i=1}^n \int_{\boldsymbol{\theta}_{-i}, \boldsymbol{r}_{-i}} \hat{g}(\boldsymbol{r}_{-i}, \boldsymbol{\theta}_{-i}) \int_{\theta_i} I \{r^*_i(\boldsymbol{\theta},\boldsymbol{r}_{-i}) \leq \sqrt{2}(1+\delta)w^*_i(\boldsymbol{\theta}_{-i},\boldsymbol{r}_{-i}) \}  r^*_i(\boldsymbol{\theta},\boldsymbol{r}_{-i}) \\
&~~~~~~~~~~~~~~~~~~~~~~~~~~~~~~~~ \int_{r_i = \frac{r^*_i(\boldsymbol{\theta},\boldsymbol{r}_{-i})}{\cos{\theta_i} + \sin{\theta_i}}}^{\infty} \hat{g}\left( r_i,\theta_i | \boldsymbol{r}_{-i}, \boldsymbol{\theta}_{-i}\right) dr_i d\theta_i d\boldsymbol{r}_{-i} d\boldsymbol{\theta}_{-i}
\end{split}\\
\begin{split}\label{eq:second term} 
&~+ 2 \sum_{i=1}^n \int_{\boldsymbol{\theta}_{-i}, \boldsymbol{r}_{-i}} \hat{g}(\boldsymbol{r}_{-i}, \boldsymbol{\theta}_{-i}) \int_{\theta_i} I \{r^*_i(\boldsymbol{\theta},\boldsymbol{r}_{-i}) > \sqrt{2}(1+\delta)w^*_i(\boldsymbol{\theta}_{-i},\boldsymbol{r}_{-i}) \}  r^*_i(\boldsymbol{\theta},\boldsymbol{r}_{-i}) \\
&~~~~~~~~~~~~~~~~~~~~~~~~~~~~~~~~ \int_{r_i = \frac{r^*_i(\boldsymbol{\theta},\boldsymbol{r}_{-i})}{\cos{\theta_i} + \sin{\theta_i}}}^{\infty} \hat{g}\left( r_i,\theta_i | \boldsymbol{r}_{-i}, \boldsymbol{\theta}_{-i}\right) dr_i d\theta_i d\boldsymbol{r}_{-i} d\boldsymbol{\theta}_{-i} . 
\end{split}
\end{align}

Recall that $w^*_{-i}(\boldsymbol{\theta}_{-i}, \boldsymbol{r}_{-i}) = \max_{j \neq i}\left( r_j(\cos \theta_j + \sin \theta_j)\right)$ (the highest valuation among all buyers excluding $i$) and $r^*_i(\boldsymbol{\theta},\boldsymbol{r}_{-i}) = \argmax_{w \geq w^*_{-i}(\boldsymbol{\theta}_{-i}, \boldsymbol{r}_{-i})} \left( w \cdot \Pr\left[ r_i (\cos \theta_i + \sin \theta_i) \geq w) \right] \right)$.

\paragraph{Bounding~\ref{eq:first term}.}

We bound~\ref{eq:first term} by the revenue of a second price auction for the grand bundle. Notice that replacing $r^*_i(\boldsymbol{\theta},\boldsymbol{r}_{-i})$ with $w^*_i(\boldsymbol{\theta}_{-i},\boldsymbol{r}_{-i})$ makes the probability of sale increase, and the price decreases by a factor of at most $\frac{1}{\sqrt{2}(1+\delta)}$:

\begin{talign*}
&\ref{eq:first term} \leq 2 \sum_{i=1}^n \int_{\boldsymbol{\theta}_{-i}, \boldsymbol{r}_{-i}} \hat{g}(\boldsymbol{r}_{-i}, \boldsymbol{\theta}_{-i}) \int_{\theta_i} (1+\delta)w^*_i(\boldsymbol{\theta}_{-i},\boldsymbol{r}_{-i}) \int_{r_i = \frac{w^*_i(\boldsymbol{\theta}_{-i},\boldsymbol{r}_{-i})}{\cos{\theta_i} + \sin{\theta_i}}}^{\infty} \hat{g}\left( r_i,\theta_i | \boldsymbol{r}_{-i}, \boldsymbol{\theta}_{-i}\right) dr_i d\theta_i d\boldsymbol{r}_{-i} d\boldsymbol{\theta}_{-i} \\
&=2\sqrt{2}(1+\delta) \sum_{i=1}^n \int_{\boldsymbol{\theta}_{-i}, \boldsymbol{r}_{-i}} \hat{g}(\boldsymbol{r}_{-i}, \boldsymbol{\theta}_{-i}) w^*_i(\boldsymbol{\theta}_{-i},\boldsymbol{r}_{-i}) \int_{\theta_i}  \int_{r_i = \frac{w^*_i(\boldsymbol{\theta}_{-i},\boldsymbol{r}_{-i})}{\cos{\theta_i} + \sin{\theta_i}}}^{\infty} \hat{g}\left( r_i,\theta_i | \boldsymbol{r}_{-i}, \boldsymbol{\theta}_{-i}\right) dr_i d\theta_i d\boldsymbol{r}_{-i} d\boldsymbol{\theta}_{-i} \\
&= 2 \sqrt{2}(1+\delta) \secondpricerev{\wiggleD} \leq 2\sqrt{2} (1+\delta) \secondprev{\wiggleD},
\end{talign*}

where $\secondpricerev{\wiggleD}$ is the optimal revenue that can be attained from using a second price auction for the grand bundle, i.e. the auction where each buyer bids her value for the grand bundle, the buyer with the highest bid wins, and the payment is the second highest bid.

\paragraph{Bounding~\ref{eq:second term}.}

Our approach for bounding~\ref{eq:second term} will be similar to the single buyer case. Specifically, we first upper bound the inner most integral, for fixed $\boldsymbol{\theta}$ and $\boldsymbol{r}_{-i}$, by switching to cartesian coordinates, applying the formula to go from $\hat{f}$ to $f$ for our perturbed model, changing order of integration using Fubini's theorem, and then arguing about the size of the area that gets mapped to the point $(r_i \cos{\theta_i}, r_i \sin{\theta_i})$:

\begin{talign*}
&\int_{r_i = \frac{r^*_i(\boldsymbol{\theta},\boldsymbol{r}_{-i})}{\cos{\theta_i} + \sin{\theta_i}}}^{\infty} \hat{g}\left( r_i,\theta_i | \boldsymbol{r}_{-i}, \boldsymbol{\theta}_{-i}\right) dr_i  = \int_{r_i = \frac{r^*_i(\boldsymbol{\theta},\boldsymbol{r}_{-i})}{\cos{\theta_i} + \sin{\theta_i}}}^{\infty} r_i \hat{f}\left( r_i \cos{\theta_i}, r_i \sin{\theta_i} | \boldsymbol{r}_{-i}, \boldsymbol{\theta}_{-i}\right) dr_i \\
&~~\leq \frac{2(1+\delta)^2}{\delta^2} \int_{r_i = \frac{r^*_i(\boldsymbol{\theta},\boldsymbol{r}_{-i})}{\cos{\theta_i} + \sin{\theta_i}}}^{\infty} \frac{1}{r_i}  \int_{a=0}^{\infty} \int_{b=0}^{\infty} f(a,b | \boldsymbol{r}_{-i}, \boldsymbol{\theta}_{-i}) I\{ (a,b) \in R(r_i,\theta_i) \} db da dr_i \\
& \leq \frac{2(1+\delta)^2}{\delta^2}  \int_{a=0}^{\infty} \int_{b=0}^{\infty} f(a,b | \boldsymbol{r}_{-i}, \boldsymbol{\theta}_{-i}) \int_{r_i = \frac{r^*_i(\boldsymbol{\theta},\boldsymbol{r}_{-i})}{\cos{\theta_i} + \sin{\theta_i}}}^{\infty} \frac{I\{ (a,b) \in R(r_i,\theta_i) \}}{r} dr db da \\
&\leq \frac{2(1+\delta)^2}{\delta^2}  \int_{a = 0}^{\infty} \int_{b = \frac{r^*_i(\boldsymbol{\theta},\boldsymbol{r}_{-i})}{(1+\delta)(\cos{\theta_i} + \sin{\theta_i})} - a}^{\infty} f(a,b | \boldsymbol{r}_{-i}, \boldsymbol{\theta}_{-i}) \int_{r_i = \frac{r^*_i(\boldsymbol{\theta},\boldsymbol{r}_{-i})}{\cos{\theta_i} + \sin{\theta_i}}}^{\infty} \frac{I\{ (a,b) \in R(r_i,\theta_i) \}}{r} dr_i db da \\
&\leq \frac{2(1+\delta)^2 \log{(1+\delta)}}{\delta^2}  \int_{a = 0}^{\infty} \int_{b = \frac{r^*_i(\boldsymbol{\theta},\boldsymbol{r}_{-i})}{(1+\delta)(\cos{\theta_i} + \sin{\theta_i})} - a}^{\infty} f(a,b | \boldsymbol{r}_{-i}, \boldsymbol{\theta}_{-i}) db da \\
&= \frac{2(1+\delta)^2 \log{(1+\delta)}}{\delta^2} Pr\left[ \text{$x \sim \realD$ has $\|x\|_1 \geq \frac{r^*_i(\boldsymbol{\theta},\boldsymbol{r}_{-i})}{(1+\delta)(\cos{\theta_i} + \sin{\theta_i})}$} \mid \boldsymbol{r}_{-i}, \boldsymbol{\theta}_{-i} \right] \\
&\leq \frac{2(1+\delta)^2 \log{(1+\delta)}}{\delta^2} Pr\left[ \text{$x \sim \wiggleD$ has $\|x\|_1 \geq \frac{r^*_i(\boldsymbol{\theta},\boldsymbol{r}_{-i})}{(1+\delta)\sqrt{2}}$} \mid \boldsymbol{r}_{-i}, \boldsymbol{\theta}_{-i} \right].
\end{talign*}

%

We can now bound~\ref{eq:second term}:
\begin{talign*}
\ref{eq:second term} &\leq \frac{4(1+\delta)^2 \log{(1+\delta)}}{\delta^2} \sum_{i=1}^n \int_{\boldsymbol{\theta}_{-i}, \boldsymbol{r}_{-i}} \hat{g}(\boldsymbol{r}_{-i}, \boldsymbol{\theta}_{-i}) \int_{\theta_i} I \{\frac{r^*_i(\boldsymbol{\theta},\boldsymbol{r}_{-i})}{(1+\delta)\sqrt{2}} > w^*_i(\boldsymbol{\theta}_{-i},\boldsymbol{r}_{-i}) \}   \\
&~~~~~~~~~~~~~~~~~~~~~~~~~~~~~~~~~~~~~~~~~~ r^*_i(\boldsymbol{\theta},\boldsymbol{r}_{-i})  Pr\left[ \text{$x_i \sim \wiggleD_i$ has $\|x_i\|_1 \geq \frac{r^*_i(\boldsymbol{\theta},\boldsymbol{r}_{-i})}{(1+\delta)\sqrt{2}}$} \mid \boldsymbol{r}_{-i}, \boldsymbol{\theta}_{-i} \right] d\theta_i d\boldsymbol{r}_{-i} d\boldsymbol{\theta}_{-i} \\
&= \frac{4\sqrt{2}(1+\delta)^3 \log{(1+\delta)}}{\delta^2} \sum_{i=1}^n \int_{\boldsymbol{\theta}_{-i}, \boldsymbol{r}_{-i}} \hat{g}(\boldsymbol{r}_{-i}, \boldsymbol{\theta}_{-i}) \int_{\theta_i} I \{\frac{r^*_i(\boldsymbol{\theta},\boldsymbol{r}_{-i})}{(1+\delta)\sqrt{2}} > w^*_i(\boldsymbol{\theta}_{-i},\boldsymbol{r}_{-i}) \}   \\
&~~~~~~~~~~~~~~~~~~~~~~~~~~~~~~~~~~~~~~~~~~ \frac{r^*_i(\boldsymbol{\theta},\boldsymbol{r}_{-i})}{(1+\delta)\sqrt{2}}  Pr\left[ \text{$x_i \sim \wiggleD_i$ has $\|x_i\|_1 \geq \frac{r^*_i(\boldsymbol{\theta},\boldsymbol{r}_{-i})}{(1+\delta)\sqrt{2}}$} \mid \boldsymbol{r}_{-i}, \boldsymbol{\theta}_{-i} \right] d\theta_i d\boldsymbol{r}_{-i} d\boldsymbol{\theta}_{-i} \\
&\leq \frac{2\sqrt{2} \pi (1+\delta)^3 \log{(1+\delta)}}{\delta^2} \sum_{i=1}^n \int_{\boldsymbol{\theta}_{-i}, \boldsymbol{r}_{-i}} \hat{g}(\boldsymbol{r}_{-i}, \boldsymbol{\theta}_{-i}) I \{\frac{r^*_i(\boldsymbol{\theta}_{-i},\boldsymbol{r}_{-i})}{(1+\delta)\sqrt{2}} > w^*_i(\boldsymbol{\theta}_{-i},\boldsymbol{r}_{-i}) \}   \\
&~~~~~~~~~~~~~~~~~~~~~~~~~~~~~~~~~~~~~~~~~~ \frac{r^*_i(\boldsymbol{\theta}_{-i},\boldsymbol{r}_{-i})}{(1+\delta)\sqrt{2}}  Pr\left[ \text{$x_i \sim \wiggleD_i$ has $\|x_i\|_1 \geq \frac{r^*_i(\boldsymbol{\theta}_{-i},\boldsymbol{r}_{-i})}{(1+\delta)\sqrt{2}}$} \mid \boldsymbol{r}_{-i}, \boldsymbol{\theta}_{-i} \right] d\boldsymbol{r}_{-i} d\boldsymbol{\theta}_{-i},
\end{talign*}

where the last inequality follows from switching from $r^*_i(\boldsymbol{\theta},\boldsymbol{r}_{-i})$ to $r^*_i(\boldsymbol{\theta}_{-i},\boldsymbol{r}_{-i})$ and paying an additional factor of $\pi/2$ (to remove the integral with respect to $\theta_i$). 

Let $\mathcal{A}$ be the set of auctions that sells the grand bundle as a whole, and only sells to the highest bidder.
The auction in the RHS offers the grand bundle to buyer $i$ for some price $q_i$, such that $q_i \geq w^*_i(\boldsymbol{\theta}_{-i},\boldsymbol{r}_{-i})$ (and therefore it sells to at most one buyer), and is an auction in $\mathcal{A}$. $\secondprev{\wiggleD}$ is the revenue maximizing auction in $\mathcal{A}$, therefore:

\[ \ref{eq:second term} \leq \frac{2\sqrt{2} \pi (1+\delta)^3 \log{(1+\delta)}}{\delta^2} \secondprev{\wiggleD}. \]

\paragraph{Putting it all together.}
Combining our bound for~\ref{eq:first term} and~\ref{eq:second term} we get:

\begin{align*}
\rev{\wiggleD} &\leq \left( 2 \sqrt{2}(1+\delta)+ \frac{2\sqrt{2}\pi (1+\delta)^3 \log{(1+\delta)}}{\delta^2} \right)\secondprev{\wiggleD} \\
&= \frac{ 2\sqrt{2}(1+\delta) }{ \delta^2 } \left( \delta^2 + \pi (1+\delta)^2 \log(1+\delta) \right) \secondprev{\wiggleD} \\
&\leq \frac{ 9\sqrt{2}(1+\delta)^3 \log(1+\delta)}{ \delta^2 } \secondprev{\wiggleD}.\qedhere
\end{align*}

\end{proof}

\subsection{Square-Shift: One buyer, Many Items. }
\label{sec:one agent many items}

\singlebidUB*

\begin{proof}

Our first few steps are the same. In order to write revenue in polar coordinates we need $m-1$ angles $\boldsymbol{\theta} = (\theta_1, \dots, \theta_{m-1})$ and a length $r$:

\begin{align}
&\rev{ \wiggleD } = \int_{\boldsymbol{\theta} \in [0,\pi/2]^{m-1} } \int_{r=0}^{\infty} p^*(r,\boldsymbol{\theta}) \hat{g}(r,\boldsymbol{\theta}) dr d\boldsymbol{\theta}  = \int_{\boldsymbol{\theta} \in [0,\pi/2]^{m-1} } \int_{r=0}^{\infty} p^*(r,\boldsymbol{\theta}) \hat{g} (\boldsymbol{\theta}) \hat{g}(r|\boldsymbol{\theta}) dr d\boldsymbol{\theta} \notag\\ 
&\quad\leq^{\text{Lemma~\ref{lemma:anglefocus}}} \int_{\boldsymbol{\theta} \in [0,\pi/2]^{m-1} } \hat{g} (\boldsymbol{\theta})\rev{\wiggleD_{\boldsymbol{\theta}}} d\boldsymbol{\theta} = \int_{\boldsymbol{\theta} \in [0,\pi/2]^{m-1} } \hat{g} (\boldsymbol{\theta}) \left( r_{\boldsymbol{\theta}} \int_{r= \frac{r_{\boldsymbol{\theta}}}{\sum_{j=1}^{m} \trig_j }}^{\infty} \hat{g}( r | \boldsymbol{\theta}) dr \right) d\boldsymbol{\theta} \notag\\ 
&\quad= \int_{ \boldsymbol{\theta} \in [0,\pi/2]^{m-1} } r_{\boldsymbol{\theta}} \int_{ r=\frac{r_{\boldsymbol{\theta}}}{\sum_{j=1}^{m} \trig_j} }^{\infty}  \hat{g}(r, \boldsymbol{\theta}) dr d\boldsymbol{\theta} . \label{eq:continue general box}
\end{align}

Next, we bound the integral $\int_{ r=\frac{r_{\boldsymbol{\theta}}}{\sum_{j=1}^{m} \trig_j} }^{\infty}  \hat{g}(r, \boldsymbol{\theta}) dr$. Our first step is to go from polar to cartesian coordinates. This time, the transformation is a bit more complex. 
More specifically, given a point $(w,\phi_1,\dots,\phi_{m-1})$ in $m$ dimensions expressed in polar coordinates, the corresponding point in Cartesian coordinates is $\mathbf{x}$, where $x_i = w \cos{\phi_i} \prod_{j=1}^{i-1} \sin{\phi_j}$. The transformation from $\hat{g}$ to $\hat{f}$ is given in Claim~\ref{claim:jacobian}.
Furthermore, we need to upper bound $\hat{f}$ by $f$. For the square model the two are connected by the following equations:

\begin{align}
\hat{f}\left(\mathbf{ \hat{x} } \right) &=  
\int_{\mathbf{x} \in [0,\infty]^m} \frac{f(\mathbf{x}) I\{ \mathbf{x} \in R(\mathbf{\hat{x}}) \} }{\left( \delta \max_{j\in[1,m]} x_j \right)^m} d \mathbf{x} \leq \frac{(1+\delta)^m}{\left( \delta \max_{j\in[1,m]} \hat{x}_j \right)^m} \int_{\mathbf{x} \in [0,\infty]^m} f(\mathbf{x}) I\{ \mathbf{x} \in R(\mathbf{\hat{x}}) \}  d \mathbf{x}, \label{square shift bound f_hat general m}
\end{align}
where $I\{ \mathbf{x} \in R(\mathbf{\hat{x}}) \} $ is an indicator for the event that $\mathbf{x}$ belongs in $R(\mathbf{\hat{x}})$. The inequality follows from the fact that for $I\{ \mathbf{x} \in R(\mathbf{\hat{x}}) \} $ to be 1 we must have that $\max_{j\in[1,m]} x_j \geq \frac{\max_{j\in[1,m]} \hat{x}_j}{1+\delta}$. 

\begin{align*}
&\int_{ r=\frac{r_{\boldsymbol{\theta}}}{\sum_{j=1}^{m} \trig_j} }^{\infty}  \hat{g}(r, \boldsymbol{\theta}) dr = \int_{ r=\frac{r_{\boldsymbol{\theta}}}{\sum_{j=1}^{m} \trig_j} }^{\infty}  r^{m-1} \prod_{j=1}^{m-2} (\sin{\theta_j})^{m-j-1} \hat{f} \left( r \cdot \mathbf{\trig} \right) dr  \\
&\quad \leq \int_{ r=\frac{r_{\boldsymbol{\theta}}}{\sum_{j=1}^{m} \trig_j} }^{\infty}  r^{m-1} \prod_{j=1}^{m-2} (\sin{\theta_j})^{m-j-1} \cdot \\
&~~~~~~~~~~~~~~~~~~~~~~~~~~~~~~~~  \left(  \frac{(1+\delta)^m}{\left( \delta r \max_{j\in[1,m]} \trig_j \right)^m} \int_{\mathbf{x} \in [0,\infty]^m} f\left( \mathbf{x} \right) I\{ \mathbf{x} \in R(r \cdot \mathbf{\trig}) \}  d \mathbf{x}  \right) dr \\
&\quad \leq^{\text{Cl.~\ref{claim:trigfact}}} \left( \frac{(1+\delta) \sqrt{m} }{\delta} \right)^m \prod_{j=1}^{m-2} (\sin{\theta_j})^{m-j-1} \int_{ r=\frac{r_{\boldsymbol{\theta}}}{\sum_{j=1}^{m} \trig_j} }^{\infty} \int_{\mathbf{x} \in [0,\infty]^m} \frac{1}{r} f\left( \mathbf{x} \right) I\{ \mathbf{x} \in R\left( r \cdot \mathbf{\trig} \right) \}  d \mathbf{x}  dr \\
&\quad = \left( \frac{(1+\delta) \sqrt{m} }{\delta} \right)^m \prod_{j=1}^{m-2}(\sin{\theta_j})^{m-j-1}  \int_{\mathbf{x} \in [0,\infty]^m}  f\left( \mathbf{x} \right) \int_{ r=\frac{r_{\boldsymbol{\theta}}}{\sum_{j=1}^{m} \trig_j} }^{\infty}  \frac{1}{r}  I\{ \mathbf{x} \in R\left( r \cdot \mathbf{\trig} \right) \}  dr d \mathbf{x}.
\end{align*}

Next, we argue about values of $\mathbf{x}$ such that the indicator $I\{ \mathbf{x} \in R\left( r \cdot \mathbf{\trig} \right) \}$ is non-zero.

\begin{claim}\label{claim:x length general box}
$\forall \mathbf{x} \in R\left( r \cdot \mathbf{\trig} \right)$, if $r \geq \frac{r_{\boldsymbol{\theta}}}{\sum_{j=1}^{m} \trig_j}$ then $\sum_{j=1}^m x_j \geq \frac{r_\theta}{(1+\delta)(\sum_{j=1}^{m} \trig_j)}$.
\end{claim}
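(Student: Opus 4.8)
The plan is to mirror the proof of Claim~\ref{claim:a b length} (its two-item counterpart) essentially verbatim, replacing the two-coordinate length estimate with its $m$-dimensional version. First I would unpack the hypothesis $\mathbf{x}\in R(r\cdot\mathbf{\trig})$: by definition of the Square-Shift model this means there is a perturbation vector $\boldsymbol{\delta}\in[0,\delta]^m$ with $r\,\trig_j = x_j+\delta_j\max_k x_k$ for every coordinate $j$, i.e. $\mathbf{x}$ maps to the target point $\hat{\mathbf{x}}:=r\cdot\mathbf{\trig}$. Since $\|\mathbf{\trig}\|_2=1$, this point $\hat{\mathbf{x}}$ has Euclidean length exactly $r$ and $\ell_1$-norm exactly $r\sum_{j=1}^m\trig_j$.

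The core step is the length comparison: the Euclidean length $\ell_{\mathbf{x}}$ of the preimage satisfies $\ell_{\mathbf{x}}\ge \frac{r}{1+\delta}$. This is the same observation already recorded just after~\eqref{square shift bound f_hat general m}: for $\mathbf{x}$ to map to $\hat{\mathbf{x}}$ we must have, coordinatewise, $\hat x_j = x_j+\delta_j\max_k x_k\le (1+\delta)\max_k x_k$, so no coordinate of $\hat{\mathbf{x}}$, and hence the length of $\hat{\mathbf{x}}$, exceeds $(1+\delta)$ times the length of $\mathbf{x}$. Combining with the trivial bound $\ell_{\mathbf{x}}\le \sum_{j=1}^m x_j$ (Euclidean length is at most the $\ell_1$-norm) gives $\sum_{j=1}^m x_j\ge \ell_{\mathbf{x}}\ge \frac{r}{1+\delta}$. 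Finally I would feed in the hypothesis $r\ge \frac{r_{\boldsymbol\theta}}{\sum_{j=1}^m\trig_j}$ to obtain $\sum_{j=1}^m x_j\ge \frac{r}{1+\delta}\ge \frac{r_{\boldsymbol\theta}}{(1+\delta)\sum_{j=1}^m\trig_j}$, which is exactly the claim.

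The only place I expect to spend any real thought is the length comparison $\ell_{\mathbf{x}}\ge r/(1+\delta)$: one has to make sure the coordinatewise bound $\hat x_j\le (1+\delta)\max_k x_k$ genuinely controls the norm of $\hat{\mathbf{x}}$ by $(1+\delta)\,\ell_{\mathbf{x}}$ uniformly in $m$ — this is transparent for $m=2$, and is the one spot where, if necessary, a mild dimension-dependent slack gets absorbed into the constant (harmless, since Theorem~\ref{thm: n=1,general m box} already carries $\mathrm{poly}(m)$ and $((1+\delta)/\delta)^m$ factors). Everything else — the polar-to-Cartesian dictionary, the monotonicity of the perturbation, and the degenerate cases where some $\theta_j\in\{0,\pi/2\}$ so that $\mathbf{\trig}$ has vanishing entries — is routine bookkeeping, since the argument only ever invokes $\|\mathbf{\trig}\|_2=1$ and $\sum_{j=1}^m\trig_j\ge 1$.
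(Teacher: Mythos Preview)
Your proposal follows exactly the paper's route: chain $\sum_j x_j \ge \ell_{\mathbf{x}} \ge r/(1+\delta)$ with the hypothesis on $r$. You are right to single out the length comparison as the one non-trivial step. In fact the inequality $\ell_{\mathbf{x}}\ge r/(1+\delta)$, which the paper's proof simply asserts, is \emph{false} for $m>2$: take $\mathbf{x}=(1,0,\dots,0)$, which under the maximal shift maps to $\hat{\mathbf{x}}=(1+\delta,\delta,\dots,\delta)$ with $\|\hat{\mathbf{x}}\|_2=\sqrt{(1+\delta)^2+(m-1)\delta^2}$, so $\ell_{\mathbf{x}}/r\to 0$ as $m\to\infty$. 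What your coordinatewise bound $\hat x_j\le(1+\delta)\max_k x_k$ actually yields is
\[
r=\|\hat{\mathbf{x}}\|_2\le\sqrt m\,\|\hat{\mathbf{x}}\|_\infty\le(1+\delta)\sqrt m\,\|\mathbf{x}\|_\infty\le(1+\delta)\sqrt m\,\ell_{\mathbf{x}},
\]
giving $\sum_j x_j\ge r/\bigl((1+\delta)\sqrt m\bigr)$ and hence $\sum_j x_j\ge r_{\boldsymbol\theta}/\bigl((1+\delta)\sqrt m\sum_j\trig_j\bigr)$. So the claim as stated needs a $\sqrt m$ correction, and your instinct that a dimension-dependent slack would be absorbed is exactly right: after the subsequent step $\sum_j\trig_j\le\sqrt m$, the bundle-price threshold becomes $r_{\boldsymbol\theta}/((1+\delta)m)$ instead of $r_{\boldsymbol\theta}/((1+\delta)\sqrt m)$, which only multiplies the constant in Theorem~\ref{thm: n=1,general m box} by one further $\sqrt m$.
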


\begin{proof}
Similarly to Claim~\ref{claim:a b length}, note that for any vector $\mathbf{x} \in \mathbb{R}^+_m$, $\sum_{i=1}^m x_i \geq r_{\mathbf{x}} = \sqrt{\sum x_i^2}$. Moreover, any vector $\mathbf{x}$ that can map to $(r, \boldsymbol{\theta})$ must have its own length, $r_{\mathbf{x}}$ be at least $\frac{1}{1+\delta} r$. Combining these two claims with that given on the statement completes the proof. 
\end{proof}

We apply Claim~\ref{claim:x length general box} to change the limit of integration:

\begin{multline*}
\int_{ r=\frac{r_{\boldsymbol{\theta}}}{\sum_{j=1}^{m} \trig_j} }^{\infty}  \hat{g}(r, \boldsymbol{\theta}) dr \leq \left( \frac{(1+\delta) \sqrt{m} }{\delta} \right)^m \prod_{j=1}^{m-2}(\sin{\theta_j})^{m-j-1}  \cdot \\
\cdot \int_{\mathbf{x}: \| \mathbf{x} \|_1 \geq \frac{r_\theta}{(1+\delta)(\sum_{j=1}^{m} \trig_j)} }  f\left( \mathbf{x} \right) \int_{ r=\frac{r_{\boldsymbol{\theta}}}{\sum_{j=1}^{m} \trig_j} }^{\infty}  \frac{1}{r}  I\{ \mathbf{x} \in R\left( r \cdot \mathbf{\trig} \right) \}  dr d \mathbf{x}.
\end{multline*}

We can now upper bound the value of the inner most integral.

\begin{claim}
\label{claim:general box log upper bound}
For the current perturbation model, for any $\forall \mathbf{x} \in R\left( r \cdot \mathbf{\trig} \right)$,  
$$\int_{ r=\frac{r_{\boldsymbol{\theta}}}{\sum_{j=1}^{m} \trig_j} }^{\infty}  \frac{1}{r}  I\{ \mathbf{x} \in R\left( r \cdot \mathbf{\trig} \right) \}  dr \leq \log(1+\delta).$$ 
\end{claim}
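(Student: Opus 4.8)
Fix $\mathbf{x}$ and the direction $\mathbf{\trig}$ (equivalently the angles $\boldsymbol{\theta}$). The plan mirrors the two-item Claim~\ref{claim:singledimintbound}: determine the exact set of magnitudes $r$ for which the point $r\cdot\mathbf{\trig}$ lies in the region that $\mathbf{x}$ can be mapped to, show this set is an interval whose endpoints lie within a factor $1+\delta$ of each other, and then integrate $1/r$ over it. Concretely, by definition of the Square-Shift model $R^{-1}(\mathbf{x},\delta)$ is the axis-aligned cube $C:=\prod_{k=1}^m[x_k,\,x_k+\delta M]$ with $M:=\max_k x_k$, and $\mathbf{x}\in R(r\cdot\mathbf{\trig})$ is equivalent to $r\cdot\mathbf{\trig}\in C$. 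Since $r\mapsto r\cdot\mathbf{\trig}$ is linear and $C$ is convex and bounded, $\{r\ge 0:\ \mathbf{x}\in R(r\cdot\mathbf{\trig})\}$ is a bounded interval $[r_{\min},r_{\max}]$ (possibly empty, in which case the claim is trivial; the case $\mathbf{x}=\mathbf{0}$ is also trivial since then $C=\{\mathbf{0}\}$ and the integrand vanishes for $r>0$).

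The one quantitative point is the bound $r_{\max}/r_{\min}\le 1+\delta$. Pick an index $k^\star$ with $x_{k^\star}=M$. Looking only at coordinate $k^\star$, membership $r\cdot\mathbf{\trig}\in C$ forces $M\le r\,\trig_{k^\star}\le(1+\delta)M$; since the interval is nonempty and $M>0$ this requires $\trig_{k^\star}>0$, so every valid $r$ lies in $\big[\tfrac{M}{\trig_{k^\star}},\,\tfrac{(1+\delta)M}{\trig_{k^\star}}\big]$, giving $r_{\max}/r_{\min}\le 1+\delta$. (This is the same observation — the $\ell_\infty$-norm varies by at most a factor $1+\delta$ over $C$ — already used to pass from $\hat f$ to $f$ in~\eqref{square shift bound f_hat general m}.) Restricting the lower limit of integration to $r\ge \tfrac{r_{\boldsymbol{\theta}}}{\sum_j\trig_j}$ only shrinks the integral, so
\[
\int_{r=\frac{r_{\boldsymbol{\theta}}}{\sum_{j=1}^m \trig_j}}^{\infty}\frac{1}{r}\,I\{\mathbf{x}\in R(r\cdot\mathbf{\trig})\}\,dr\ \le\ \int_{r_{\min}}^{r_{\max}}\frac{dr}{r}\ =\ \ln\!\frac{r_{\max}}{r_{\min}}\ \le\ \ln(1+\delta).
\]

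I do not expect a genuine obstacle here. The only subtlety worth flagging is that the crude estimates $r_{\min}\ge\|\mathbf{x}\|_2$ and $r_{\max}\le\|\mathbf{x}+\delta M\mathbf{1}\|_2$ would yield a ratio $1+\Theta(\sqrt{m}\,\delta)$, hence a bound $\log(1+\Theta(\sqrt{m}\,\delta))$ that degrades with $m$; the clean $\log(1+\delta)$ comes precisely from comparing the single maximal coordinate (i.e.\ working in $\ell_\infty$ rather than with Euclidean length), exactly as in the two-item case. Everything else is the one-line integral of $1/r$.
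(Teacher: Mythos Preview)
Your proof is correct and follows the same approach as the paper: identify the set of $r$ for which the ray $r\cdot\mathbf{\trig}$ lies in the cube $R^{-1}(\mathbf{x})$ as an interval $[r_{\min},r_{\max}]$ with $r_{\max}/r_{\min}\le 1+\delta$, then integrate $1/r$. In fact your argument via the maximal coordinate $k^\star$ makes explicit the ratio bound that the paper simply asserts (``the maximum length cannot be larger than $r_{\min}\cdot(1+\delta)$''), and your remark that comparing Euclidean lengths of the cube's extreme corners would only give $1+\Theta(\sqrt{m}\,\delta)$ is a useful clarification of why the coordinate-wise ($\ell_\infty$) view is the right one.
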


\begin{proof} 
The points that $\mathbf{x}$ can map to form a hypercube, and the indicator $I\{ \mathbf{x} \in R\left( r \cdot \mathbf{\trig} \right) \}$ is non-zero for a length $r$, when the point $(r,\boldsymbol{\theta})$ intersects that hypercube. Let $r_{min}$ be the smallest length among all points $(r,\boldsymbol{\theta})$ in the hypercube defined by $\mathbf{x}$. Then, the maximum length cannot be larger than $r_{min}\dot (1+\delta)$. Therefore  
$$\int_{ r=\frac{r_{\boldsymbol{\theta}}}{\sum_{j=1}^{m} \trig_j} }^{\infty}  \frac{1}{r}  I\{ \mathbf{x} \in R\left( r \cdot \mathbf{\trig} \right) \}  dr \leq \int_{ r=r_{min} }^{(1+\delta)r_{min}}  \frac{1}{r}   dr.$$ 
Integrating the latter integral proves the Claim.
\end{proof}

We apply Claim~\ref{claim:general box log upper bound}  to continue our derivation, 

\begin{align*}
&\int_{ r=\frac{r_{\boldsymbol{\theta}}}{\sum_{j=1}^{m} \trig_j} }^{\infty}  \hat{g}(r, \boldsymbol{\theta}) dr \leq \left( \frac{(1+\delta) \sqrt{m} }{\delta} \right)^m \log(1+\delta) \prod_{j=1}^{m-2}(\sin{\theta_j})^{m-j-1} \int_{\substack{ \mathbf{x} \in [0,\infty)^m \text{ such that } \\ \| \mathbf{x} \|_1 \geq \frac{r_{\boldsymbol{\theta}}}{(1+\delta)(\sum_{j=1}^{m} \trig_j)} }}  f\left( \mathbf{x} \right)  d \mathbf{x} \\
&= \left( \frac{(1+\delta) \sqrt{m} }{\delta} \right)^m \log(1+\delta) \prod_{j=1}^{m-2}(\sin{\theta_j})^{m-j-1} 
Pr\left[\text{$\mathbf{x} \sim \realD$ has $\|\mathbf{x}\|_1 \geq \frac{r_{\boldsymbol{\theta}}}{(1+\delta)(\sum_{j=1}^{m} \trig_j)}$ }\right] \\
&\leq \left( \frac{(1+\delta) \sqrt{m} }{\delta} \right)^m \log(1+\delta) \prod_{j=1}^{m-2}(\sin{\theta_j})^{m-j-1} 
Pr\left[\text{$\mathbf{x} \sim \wiggleD$ has $\|\mathbf{x}\|_1 \geq \frac{r_{\boldsymbol{\theta}}}{(1+\delta)(\sum_{j=1}^{m} \trig_j)}$ }\right]\\
&\leq \left( \frac{(1+\delta) \sqrt{m} }{\delta} \right)^m \log(1+\delta) \prod_{j=1}^{m-2}(\sin{\theta_j})^{m-j-1} 
Pr\left[\text{$\mathbf{x} \sim \wiggleD$ has $\|\mathbf{x}\|_1 \geq \frac{r_{\boldsymbol{\theta}}}{(1+\delta)\sqrt{m}}$ }\right],
\end{align*}

where we used the facts that perturbing only increases values, and that $\sum_{j=1}^m \trig_j \leq \sqrt{m}$. Plugging back in Equation~\ref{eq:continue general box}:

\begin{align*}
&\rev{ \wiggleD } \leq \int_{ \boldsymbol{\theta} \in [0,\pi/2]^{m-1} } r_{\boldsymbol{\theta}} \cdot\\
&\left( \left( \frac{(1+\delta) \sqrt{m} }{\delta} \right)^m \log(1+\delta) \prod_{j=1}^{m-2}(\sin{\theta_j})^{m-j-1} 
Pr\left[\text{sample from $\wiggleD$ has $\|\mathbf{x}\|_1 \geq \frac{r_{\boldsymbol{\theta}}}{(1+\delta)\sqrt{m}}$ }\right] \right) d\boldsymbol{\theta} \\
&\leq \left( \frac{(1+\delta) \sqrt{m} }{\delta} \right)^m \log(1+\delta) (1+\delta) \sqrt{m} \int_{ \boldsymbol{\theta} \in [0,\pi/2]^{m-1} } \prod_{j=1}^{m-2}(\sin{\theta_j})^{m-j-1} \brev{\wiggleD} d\boldsymbol{\theta} \\
&=\brev{\wiggleD} \left( \frac{(1+\delta) \sqrt{m} }{\delta} \right)^m (1+\delta) \log(1+\delta)  \sqrt{m} \int_{ \boldsymbol{\theta} \in [0,\pi/2]^{m-1} } \prod_{j=1}^{m-2}(\sin{\theta_j})^{m-j-1}  d\boldsymbol{\theta},
\end{align*}

where we used the fact that $\brev{\wiggleD}$ is at least $r \cdot Pr\left[ \sum_{j=1}^m x_j \geq r \right]$ for all $r$, where $(x_1,\dots,x_m)$ is a sample from $\wiggleD$. The next Claim is used to simplify the integral over the product of sine functions:

\begin{claim}\label{claim:integral of sin}
\[ \int_{ \boldsymbol{\theta} \in [0,\pi/2]^{m-1} } \prod_{j=1}^{m-2}(\sin{\theta_j})^{m-j-1}  d\boldsymbol{\theta} \leq \left( \sqrt{\frac{\pi e}{2}} \right)^{m} \frac{m}{(\sqrt{m})^{m}}. \]
\end{claim}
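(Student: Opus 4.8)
The plan is to collapse the $(m-1)$-dimensional integral into a product of one-dimensional Wallis integrals and then estimate each factor crudely. The first observation is that the integrand $\prod_{j=1}^{m-2}(\sin\theta_j)^{m-j-1}$ does not involve $\theta_{m-1}$ at all, so integrating that coordinate out over $[0,\pi/2]$ contributes exactly a factor $\pi/2$, and the rest of the integral factorizes over $\theta_1,\dots,\theta_{m-2}$. After re-indexing $k=m-j-1$, this gives the identity
\[
\int_{\boldsymbol{\theta}\in[0,\pi/2]^{m-1}}\prod_{j=1}^{m-2}(\sin\theta_j)^{m-j-1}\,d\boldsymbol{\theta}\;=\;\frac{\pi}{2}\prod_{k=1}^{m-2}W_k,\qquad W_k:=\int_0^{\pi/2}(\sin\theta)^k\,d\theta.
\]

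Next I would bound the Wallis integrals. Since $0\le\sin\theta\le 1$ on $[0,\pi/2]$ the sequence $W_k$ is non-increasing, so combining $W_k\le W_{k-1}$ with the classical recursion $W_kW_{k-1}=\frac{\pi}{2k}$ yields $W_k^2\le W_kW_{k-1}=\frac{\pi}{2k}$, i.e.\ $W_k\le\sqrt{\pi/(2k)}$ for every $k\ge 1$. Multiplying over $k=1,\dots,m-2$ gives $\prod_{k=1}^{m-2}W_k\le(\pi/2)^{(m-2)/2}/\sqrt{(m-2)!}$, and hence the whole integral is at most $(\pi/2)^{m/2}/\sqrt{(m-2)!}$.

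Finally I would match this against the target. Unpacking the right-hand side of the claim, $\bigl(\sqrt{\pi e/2}\bigr)^m\,m/(\sqrt m)^m=(\pi e/2)^{m/2}\,m^{1-m/2}$, so it suffices to prove the purely numerical inequality $(m-2)!\ge m^{m-2}/e^{m}$. For $m=2$ this reads $1\ge 1/e^2$, which is clear; for $m\ge 3$ I would use the elementary bound $(m-2)!\ge\bigl((m-2)/e\bigr)^{m-2}$ (from $e^{m-2}=\sum_{k\ge 0}(m-2)^k/k!\ge(m-2)^{m-2}/(m-2)!$), after which the claim reduces to $e^2\ge\bigl(\tfrac{m}{m-2}\bigr)^{m-2}=\bigl(1+\tfrac{2}{m-2}\bigr)^{m-2}$, which is true by the standard estimate $(1+x/n)^n\le e^x$.

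The only step needing any thought is the last one: the argument loses a constant factor at each crude inequality (the bound $W_k\le\sqrt{\pi/(2k)}$ and the factorial lower bound), so one must check the slack is large enough to absorb the gap between $\sqrt{(m-2)!}$ and the $m^{m/2}$ appearing in the target. This works out with room to spare precisely because $(m-2)!$ grows like $(m/e)^m$ up to polynomial factors — exactly the shape of the claimed bound — which is also why the $e^{m/2}$ in $\sqrt{\pi e/2}^{\,m}$ is the right power. Everything else is bookkeeping, and assembling the three displays gives the claim.
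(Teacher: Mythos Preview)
Your proof is correct and takes a genuinely different route from the paper's. The paper computes the integral \emph{exactly} by writing each one-dimensional factor as $\int_0^{\pi/2}(\sin\theta)^k\,d\theta=\tfrac{\sqrt\pi}{2}\,\Gamma(\tfrac{k+1}{2})/\Gamma(\tfrac{k}{2}+1)$ and telescoping the Gamma ratios to obtain $\tfrac{\pi}{2}(\tfrac{\sqrt\pi}{2})^{m-2}/\Gamma(m/2)$, and only then applies a Stirling-type lower bound on $(m/2)!$. You instead bound each Wallis integral individually via the elementary identity $W_kW_{k-1}=\pi/(2k)$ and monotonicity, arriving at $(\pi/2)^{m/2}/\sqrt{(m-2)!}$ before invoking the factorial bound. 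Your approach is more elementary in that it sidesteps the Gamma function entirely; the price is that you start from a slightly looser intermediate bound and so must be a little more careful in the final numerical check (the $e^2\ge(1+\tfrac{2}{m-2})^{m-2}$ step), which you carry out correctly. The paper's approach has the minor advantage of yielding the exact value of the integral along the way.
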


\begin{proof}
We repeatedly use the fact that $\int_{\theta=0}^{\pi/2} (\sin{\theta})^k d\theta = \frac{\sqrt{\pi}}{2} \frac{\Gamma\left( \frac{k+1}{2}\right)}{\Gamma\left( \frac{k}{2} + 1 \right)}$.

\begin{align*}
&\int_{\theta_1=0}^{\pi/2} (\sin{\theta_1})^{m-2} \int_{\theta_2=0}^{\pi/2} (\sin{\theta_2})^{m-3} \int_{\theta_3=0}^{\pi/2} \dots \int_{\theta_{m-2}=0}^{\pi/2} \sin{\theta_{m-2}} \left( \int_{\theta_{m-1}=0}^{\pi/2} 1 d\theta_{m-1} \right) d\theta_{m-2} \dots d\theta_1 \\
&= \frac{\pi}{2} \int_{\theta_1=0}^{\pi/2} (\sin{\theta_1})^{m-2} \int_{\theta_2=0}^{\pi/2} (\sin{\theta_2})^{m-3} \int_{\theta_3=0}^{\pi/2} \dots \relax \left( \int_{\theta_{m-2}=0}^{\pi/2} \sin{\theta_{m-2}} d\theta_{m-2} \right)  \dots d\theta_1 \\
&= \frac{\pi}{2} \frac{\sqrt{\pi}}{2} \frac{\Gamma\left( 1 \right)}{\Gamma\left( \frac{1}{2} + 1 \right)} \int_{\theta_1=0}^{\pi/2} (\sin{\theta_1})^{m-2} \int_{\theta_2=0}^{\pi/2} (\sin{\theta_2})^{m-3} \int_{\theta_3=0}^{\pi/2} \dots \relax \left( \int_{\theta_{m-3}=0}^{\pi/2} (\sin{\theta_{m-3}})^2 d\theta_{m-3} \right)  \dots d\theta_1 \\
&= \frac{\pi}{2} \left( \frac{\sqrt{\pi}}{2} \right)^2 \frac{\Gamma\left( 1 \right)}{\Gamma\left( \frac{1}{2} + 1 \right)} \frac{\Gamma\left( \frac{1}{2} + 1 \right)}{\Gamma\left( 2 \right)} \int_{\theta_1=0}^{\pi/2} (\sin{\theta_1})^{m-2} \int_{\theta_2=0}^{\pi/2} \dots \relax \left( \int_{\theta_{m-4}=0}^{\pi/2} (\sin{\theta_{m-4}})^3 d\theta_{m-4} \right)  \dots d\theta_1 \\
&= \frac{\pi}{2} \left( \frac{\sqrt{\pi}}{2} \right)^{m-2} \frac{\Gamma\left( 1 \right)}{\Gamma\left( \frac{m-2}{2} + 1 \right)} \\
&= \frac{\pi}{2} \left( \frac{\sqrt{\pi}}{2} \right)^{m-2} \frac{1}{\Gamma\left( \frac{m}{2} \right)}.
\end{align*}

Recall that for integer $n$, $\Gamma(n+1) = n!$, therefore $\Gamma\left( \frac{m}{2} \right) = \left( \frac{m}{2} -1 \right)!$. Using the fact that $k! \geq \left( \frac{k}{e} \right)^{k}$ we get:

\begin{align*}
\frac{\pi}{2} \left( \frac{\sqrt{\pi}}{2} \right)^{m-2} \frac{1}{\Gamma\left( \frac{m}{2} \right)} &= \frac{\pi}{2} \left( \frac{\sqrt{\pi}}{2} \right)^{m-2} \frac{1}{\left( \frac{m}{2} - 1 \right)!}\\
&= \frac{\pi}{2} \left( \frac{\sqrt{\pi}}{2} \right)^{m-2} \frac{\frac{m}{2}}{\left( \frac{m}{2} \right)!}\\
&\leq \frac{\pi}{4} \left( \frac{\sqrt{\pi}}{2} \right)^{m-2} \frac{m}{\left( \frac{m}{2e} \right)^{m/2}}\\
&= \left( \frac{\sqrt{\pi}}{2} \right)^{m} (2e)^{m/2} \frac{m}{\left( \sqrt{m} \right)^m}\\
&= \left( \sqrt{\frac{\pi e}{2}} \right)^{m} \frac{m}{(\sqrt{m})^{m}},
\end{align*}
which concludes the proof of the Claim.
\end{proof}

Applying Claim~\ref{claim:integral of sin} completes the proof of Theorem~\ref{thm: n=1,general m box}:
\begin{align*}
\rev{ \wiggleD } &\leq \brev{\wiggleD} \left( \frac{(1+\delta) \sqrt{m} }{\delta} \right)^m (1+\delta) \log(1+\delta)  \sqrt{m} \left( \sqrt{\frac{\pi e}{2}} \right)^{m} \frac{m}{(\sqrt{m})^{m}} \\
&= \brev{\wiggleD} \left( \sqrt{\frac{\pi e}{2}} \frac{(1+\delta) }{\delta} \right)^m (1+\delta) \log(1+\delta)  m \sqrt{m}. \qedhere
\end{align*}

\end{proof}

\subsection{Square-Shift: Many buyers, Many Items. }\label{section:generalCase}

In this section we prove upper bounds for the case of multiple (possibly correlated) buyers and multiple items, in the Square-Shift model.

\multiUB*

\begin{proof}

We write $\boldsymbol{\theta}$ for the vector $(\boldsymbol{\theta}_1,\dots,\boldsymbol{\theta}_n)$, where $\boldsymbol{\theta}_i = (\theta_{i,1},\dots,\theta_{i,m-1})$ is the vector of angles for buyer $i$:

 \[
\rev{\wiggleD} = \int_{\boldsymbol{\theta}} \hat{g}(\boldsymbol{\theta}) \int_{\boldsymbol{r}} \hat{g}(\boldsymbol{r} | \boldsymbol{\theta})  \sum_{i=1}^n p^*_i\left(\boldsymbol{r}, \boldsymbol{\theta} \right) d\boldsymbol{r} d\boldsymbol{\theta} \leq \int_{\boldsymbol{\theta}} \hat{g}(\boldsymbol{\theta}) \rev{ \wiggleD_{\boldsymbol{\theta}} } d \boldsymbol{\theta},
\]

where $\wiggleD_{\boldsymbol{\theta}}$ is a correlated, single parameter distribution where we draw a value vector $\boldsymbol{r} = (r_1,\dots, r_n)$ according to the density function $\hat{g}(\boldsymbol{r} | r_{-i} \boldsymbol{\theta})$, and give each buyer $i$ value $r_i \cdot \trig_k(\boldsymbol{\theta}_i)$ for item $k$.
Ronen's mechanism \cite{Ronen01} for this distribution makes a take it or leave it offer $r^*_i(\boldsymbol{\theta},\boldsymbol{r}_{-i})$ to buyer $i$ for the grand bundle; $r^*_i(\boldsymbol{\theta},\boldsymbol{r}_{-i})$ is larger than the value of any other buyer, i.e. at least $\max_{j \neq i}\left( r_j \sum_{k=1}^m \trig_k(\boldsymbol{\theta}_j)\right)$. This mechanism is a $2$ approximation to the optimal mechanism for $\wiggleD_{\boldsymbol{\theta}}$. Let $w^*_{-i}(\boldsymbol{\theta}_{-i}, \boldsymbol{r}_{-i}) = \max_{j \neq i}\left( r_j\sum_{k=1}^m \trig_k(\boldsymbol{\theta}_j)\right)$ be the second highest valuation among all buyers excluding $i$, and $r^*_i(\boldsymbol{\theta},\boldsymbol{r}_{-i}) = \argmax_{w \geq w^*_{-i}(\boldsymbol{\theta}_{-i}, \boldsymbol{r}_{-i})} \left( w \cdot \Pr\left[ r_i \sum_{k=1}^m \trig_k(\boldsymbol{\theta}_i) \geq w) \right] \right)$ be the price offered by Ronen's mechanism.

\begin{talign}
&\rev{\wiggleD} \leq 2  \int_{\boldsymbol{\theta}} \hat{g}(\boldsymbol{\theta}) \left( \int_{\boldsymbol{r}} \hat{g}(\mathbf{r}|\boldsymbol{\theta}) \sum_{i=1}^n r^*_i(\boldsymbol{\theta},\boldsymbol{r}_{-i}) I\{ r_i\sum_{k=1}^m \trig_k(\boldsymbol{\theta}_i) \geq  r^*_i(\boldsymbol{\theta},\boldsymbol{r}_{-i}) \} d \mathbf{r} \right) d \boldsymbol{\theta} \notag \\
&= 2 \sum_{i=1}^n \int_{\boldsymbol{\theta}_{-i}, \boldsymbol{r}_{-i}} \hat{g}(\boldsymbol{r}_{-i}, \boldsymbol{\theta}_{-i}) \int_{\boldsymbol{\theta}_i} r^*_i(\boldsymbol{\theta},\boldsymbol{r}_{-i}) \int_{r_i = \frac{r^*_i(\boldsymbol{\theta},\boldsymbol{r}_{-i})}{ \sum_{k=1}^m \trig_k(\boldsymbol{\theta}_i) }}^{\infty} \hat{g}\left( r_i, \boldsymbol{\theta}_i | \boldsymbol{r}_{-i}, \boldsymbol{\theta}_{-i}\right) dr_i d\boldsymbol{\theta}^i d\boldsymbol{r}_{-i} d\boldsymbol{\theta}_{-i} \label{eq:no good term multi} \\
\begin{split} \label{eq:first term multi}
&= 2 \sum_{i=1}^n \int_{\boldsymbol{\theta}_{-i}, \boldsymbol{r}_{-i}} \hat{g}(\boldsymbol{r}_{-i}, \boldsymbol{\theta}_{-i}) \int_{\boldsymbol{\theta}_i} I \{r^*_i(\boldsymbol{\theta},\boldsymbol{r}_{-i}) \leq \sqrt{m}(1+\delta)w^*_i(\boldsymbol{\theta}_{-i},\boldsymbol{r}_{-i}) \}  r^*_i(\boldsymbol{\theta},\boldsymbol{r}_{-i}) \\
&~~~~~~~~~~~~~~~~~~~~~~~~~~~~~~~~ \int_{r_i = \frac{r^*_i(\boldsymbol{\theta},\boldsymbol{r}_{-i})}{\sum_{k=1}^m \trig_k(\boldsymbol{\theta}_i)}}^{\infty} \hat{g}\left( r_i,\boldsymbol{\theta}_i | \boldsymbol{r}_{-i}, \boldsymbol{\theta}_{-i}\right) dr_i d\boldsymbol{\theta}_i d\boldsymbol{r}_{-i} d\boldsymbol{\theta}_{-i}
\end{split}\\
\begin{split}\label{eq:second term multi} 
&~+ 2 \sum_{i=1}^n \int_{\boldsymbol{\theta}_{-i}, \boldsymbol{r}_{-i}} \hat{g}(\boldsymbol{r}_{-i}, \boldsymbol{\theta}_{-i}) \int_{\boldsymbol{\theta}_i} I \{r^*_i(\boldsymbol{\theta},\boldsymbol{r}_{-i}) > \sqrt{m}(1+\delta)w^*_i(\boldsymbol{\theta}_{-i},\boldsymbol{r}_{-i}) \}  r^*_i(\boldsymbol{\theta},\boldsymbol{r}_{-i}) \\
&~~~~~~~~~~~~~~~~~~~~~~~~~~~~~~~~ \int_{r_i = \frac{r^*_i(\boldsymbol{\theta},\boldsymbol{r}_{-i})}{\sum_{k=1}^m \trig_k(\boldsymbol{\theta}_i)}}^{\infty} \hat{g}\left( r_i,\boldsymbol{\theta}_i | \boldsymbol{r}_{-i}, \boldsymbol{\theta}_{-i}\right) dr_i d\boldsymbol{\theta}_i d\boldsymbol{r}_{-i} d\boldsymbol{\theta}_{-i}. 
\end{split}
\end{talign}

Similarly to the two item case, bounding~\ref{eq:no good term multi} is problematic. To go around the issue we separate~\ref{eq:no good term multi} cases depending on whether $r^*_i(\boldsymbol{\theta},\boldsymbol{r}_{-i})$ is almost the second highest value, or at least $\sqrt{m}(1+\delta)$ times the second highest value. We get terms~\ref{eq:first term multi} and~\ref{eq:second term multi} which we bound separately.

\paragraph{Bounding~\ref{eq:first term multi}.}

We bound~\ref{eq:first term multi} by the revenue of a second price auction for the grand bundle. Notice that replacing $r^*_i(\boldsymbol{\theta},\boldsymbol{r}_{-i})$ with $w^*_i(\boldsymbol{\theta}_{-i},\boldsymbol{r}_{-i})$ makes the probability of sale increase, and the price decreases by a factor of at most $\frac{1}{\sqrt{m}(1+\delta)}$:

\begin{talign*}
\ref{eq:first term multi} &\leq 2 \sum_{i=1}^n \int_{\boldsymbol{\theta}_{-i}, \boldsymbol{r}_{-i}} \hat{g}(\boldsymbol{r}_{-i}, \boldsymbol{\theta}_{-i}) \int_{\boldsymbol{\theta}_i} (1+\delta)w^*_i(\boldsymbol{\theta}_{-i},\boldsymbol{r}_{-i}) \\
&~~~~~~~~~~~~~~~~~~~~~~~~~~~~~~~~~~~~ \int_{r_i = \frac{w^*_i(\boldsymbol{\theta}_{-i},\boldsymbol{r}_{-i})}{\sum_{k=1}^m \trig_k(\boldsymbol{\theta}_i)}}^{\infty} \hat{g}\left( r_i,\boldsymbol{\theta}_i | \boldsymbol{r}_{-i}, \boldsymbol{\theta}_{-i}\right) dr_i d\boldsymbol{\theta}_i d\boldsymbol{r}_{-i} d\boldsymbol{\theta}_{-i} \\
&=2\sqrt{m}(1+\delta) \sum_{i=1}^n \int_{\boldsymbol{\theta}_{-i}, \boldsymbol{r}_{-i}} \hat{g}(\boldsymbol{r}_{-i}, \boldsymbol{\theta}_{-i}) w^*_i(\boldsymbol{\theta}_{-i},\boldsymbol{r}_{-i}) \\
&~~~~~~~~~~~~~~~~~~~~~~~~~~~~~~~~~~~~ \int_{\boldsymbol{\theta}_i}  \int_{r_i = \frac{w^*_i(\boldsymbol{\theta}_{-i},\boldsymbol{r}_{-i})}{\sum_{k=1}^m \trig_k(\boldsymbol{\theta}_i)}}^{\infty} \hat{g}\left( r_i,\boldsymbol{\theta}_i | \boldsymbol{r}_{-i}, \boldsymbol{\theta}_{-i}\right) dr_i d\boldsymbol{\theta}_i d\boldsymbol{r}_{-i} d\boldsymbol{\theta}_{-i} \\
&= 2 \sqrt{m}(1+\delta) \secondpricerev{\wiggleD} \leq 2\sqrt{m} (1+\delta) \secondprev{\wiggleD}.
\end{talign*}

\paragraph{Bounding~\ref{eq:second term multi}.}

Our approach for bounding~\ref{eq:second term multi} will be similar to the single buyer case. Specifically, we first upper bound the inner most integral, for fixed $\boldsymbol{\theta}$ and $\boldsymbol{r}_{-i}$, by switching to cartesian coordinates, applying the formula to go from $\hat{f}$ to $f$ for our perturbation model, changing order of integration using Fubini's theorem, and then arguing about the size of the area that gets mapped to the point $r_i \sum_{k=1}^m \trig_k(\boldsymbol{\theta}_i)$.

\begin{talign*}
&\int_{r_i = \frac{r^*_i(\boldsymbol{\theta},\boldsymbol{r}_{-i})}{\sum_{k=1}^m \trig_k(\boldsymbol{\theta}_i)}}^{\infty} \hat{g}\left( r_i,\boldsymbol{\theta}_i | \boldsymbol{r}_{-i}, \boldsymbol{\theta}_{-i}\right) dr_i  \\
&= \int_{r_i = \frac{r^*_i(\boldsymbol{\theta},\boldsymbol{r}_{-i})}{\sum_{k=1}^m \trig_k(\boldsymbol{\theta}_i)}}^{\infty} r_i^{m-1} \prod_{k=1}^{m-2} (\sin{\boldsymbol{\theta}_{i,k}})^{m-k-1} \hat{f}\left( r_i \sum_{k=1}^m \trig_k(\boldsymbol{\theta}_i) | \boldsymbol{r}_{-i}, \boldsymbol{\theta}_{-i}\right) dr_i \\
&\leq \left( \frac{(1+\delta) \sqrt{m} }{\delta} \right)^m \prod_{k=1}^{m-2} (\sin{\boldsymbol{\theta}_{i,k}})^{m-k-1} \int_{r_i = \frac{r^*_i(\boldsymbol{\theta},\boldsymbol{r}_{-i})}{\sum_{k=1}^m \trig_k(\boldsymbol{\theta}_i)}}^{\infty} \frac{1}{r_i}  \int_{\mathbf{x}} f( \mathbf{x} | \boldsymbol{r}_{-i}, \boldsymbol{\theta}_{-i}) I\{ \mathbf{x} \in R(r_i,\boldsymbol{\theta}_i) \} d\mathbf{x} dr_i \\
& \leq \left( \frac{(1+\delta) \sqrt{m} }{\delta} \right)^m \prod_{k=1}^{m-2} (\sin{\boldsymbol{\theta}_{i,k}})^{m-k-1}  \int_{\mathbf{x}} f( \mathbf{x} | \boldsymbol{r}_{-i}, \boldsymbol{\theta}_{-i}) \int_{r_i = \frac{r^*_i(\boldsymbol{\theta},\boldsymbol{r}_{-i})}{\sum_{k=1}^m \trig_k(\boldsymbol{\theta}_i)}}^{\infty} \frac{I\{ \mathbf{x} \in R(r_i,\boldsymbol{\theta}_i) \}}{r_i} dr_i d\mathbf{x} \\
&\leq \left( \frac{(1+\delta) \sqrt{m} }{\delta} \right)^m \prod_{k=1}^{m-2} (\sin{\boldsymbol{\theta}_{i,k}})^{m-k-1}  \\
&~~~~~~~~~~~~~~~~~\int_{\mathbf{x}:\|\mathbf{x}\| \geq \frac{r^*_i(\boldsymbol{\theta},\boldsymbol{r}_{-i})}{(1+\delta)\sum_{k=1}^m \trig_k(\boldsymbol{\theta}_i)}} f( \mathbf{x} | \boldsymbol{r}_{-i}, \boldsymbol{\theta}_{-i}) \int_{r_i = \frac{r^*_i(\boldsymbol{\theta},\boldsymbol{r}_{-i})}{\sum_{k=1}^m \trig_k(\boldsymbol{\theta}_i)}}^{\infty} \frac{I\{ \mathbf{x} \in R(r_i,\boldsymbol{\theta}_i) \}}{r_i} dr_i d\mathbf{x} \\
&\leq \left( \frac{(1+\delta) \sqrt{m} }{\delta} \right)^m \prod_{k=1}^{m-2} (\sin{\boldsymbol{\theta}_{i,k}})^{m-k-1} \log(1+\delta)  \int_{\mathbf{x}:\|\mathbf{x}\| \geq \frac{r^*_i(\boldsymbol{\theta},\boldsymbol{r}_{-i})}{(1+\delta)\sum_{k=1}^m \trig_k(\boldsymbol{\theta}_i)}} f( \mathbf{x} | \boldsymbol{r}_{-i}, \boldsymbol{\theta}_{-i}) \\
&\leq \left( \frac{(1+\delta) \sqrt{m} }{\delta} \right)^m \prod_{k=1}^{m-2} (\sin{\boldsymbol{\theta}_{i,k}})^{m-k-1} \log(1+\delta)  Pr\left[ \text{$\mathbf{x} \sim \realD$ has $\|\mathbf{x}\|_1 \geq \frac{r^*_i(\boldsymbol{\theta},\boldsymbol{r}_{-i})}{(1+\delta)\sum_{k=1}^m \trig_k(\boldsymbol{\theta}_i)}$} \mid \boldsymbol{r}_{-i}, \boldsymbol{\theta}_{-i} \right] \\
&\leq \left( \frac{(1+\delta) \sqrt{m} }{\delta} \right)^m \prod_{k=1}^{m-2} (\sin{\boldsymbol{\theta}_{i,k}})^{m-k-1} \log(1+\delta)  Pr\left[ \text{$\mathbf{x} \sim \wiggleD$ has $\|\mathbf{x}\|_1 \geq \frac{r^*_i(\boldsymbol{\theta},\boldsymbol{r}_{-i})}{(1+\delta)\sqrt{m}}$} \mid \boldsymbol{r}_{-i}, \boldsymbol{\theta}_{-i} \right].
\end{talign*}

We can now bound~\ref{eq:second term multi}:
\begin{talign*}
&\ref{eq:second term multi} \leq 2 \left( \frac{(1+\delta) \sqrt{m} }{\delta} \right)^m \log(1+\delta) \sum_{i=1}^n \int_{\boldsymbol{\theta}_{-i}, \boldsymbol{r}_{-i}} \hat{g}(\boldsymbol{r}_{-i}, \boldsymbol{\theta}_{-i}) \int_{\boldsymbol{\theta}_i} \prod_{k=1}^{m-2} (\sin{\boldsymbol{\theta}_{i,k}})^{m-k-1} \\
&I \{\frac{r^*_i(\boldsymbol{\theta},\boldsymbol{r}_{-i})}{(1+\delta)\sqrt{m}} > w^*_i(\boldsymbol{\theta}_{-i},\boldsymbol{r}_{-i}) \} r^*_i(\boldsymbol{\theta},\boldsymbol{r}_{-i})  Pr\left[ \text{$\mathbf{x} \sim \wiggleD$ has $\|\mathbf{x}\|_1 \geq \frac{r^*_i(\boldsymbol{\theta},\boldsymbol{r}_{-i})}{(1+\delta)\sqrt{m}}$} \mid \boldsymbol{r}_{-i}, \boldsymbol{\theta}_{-i} \right] d\boldsymbol{\theta}_i \boldsymbol{\theta}_{-i}\boldsymbol{r}_{-i} \\
&= 2\sqrt{m}(1+\delta) \left( \frac{(1+\delta) \sqrt{m} }{\delta} \right)^m \log(1+\delta) \sum_{i=1}^n \int_{\boldsymbol{\theta}_{-i}, \boldsymbol{r}_{-i}} \hat{g}(\boldsymbol{r}_{-i}, \boldsymbol{\theta}_{-i}) \int_{\boldsymbol{\theta}_i} \prod_{k=1}^{m-2} (\sin{\boldsymbol{\theta}_{i,k}})^{m-k-1} \\
&I \{\frac{r^*_i(\boldsymbol{\theta},\boldsymbol{r}_{-i})}{(1+\delta)\sqrt{m}} > w^*_i(\boldsymbol{\theta}_{-i},\boldsymbol{r}_{-i}) \} \frac{r^*_i(\boldsymbol{\theta},\boldsymbol{r}_{-i})}{(1+\delta)\sqrt{m}}  Pr\left[ \text{$\mathbf{x} \sim \wiggleD$ has $\|\mathbf{x}\|_1 \geq \frac{r^*_i(\boldsymbol{\theta},\boldsymbol{r}_{-i})}{(1+\delta)\sqrt{m}}$} \mid \boldsymbol{r}_{-i}, \boldsymbol{\theta}_{-i} \right] d\boldsymbol{\theta}_i \boldsymbol{\theta}_{-i}\boldsymbol{r}_{-i} \\
&= 2\sqrt{m}(1+\delta) \left( \frac{(1+\delta) \sqrt{m} }{\delta} \right)^m \log(1+\delta) \sum_{i=1}^n \int_{\boldsymbol{\theta}_{-i}, \boldsymbol{r}_{-i}} \hat{g}(\boldsymbol{r}_{-i}, \boldsymbol{\theta}_{-i}) \int_{\boldsymbol{\theta}_i} \prod_{k=1}^{m-2} (\sin{\boldsymbol{\theta}_{i,k}})^{m-k-1} \\
&I \{q^*_i(\boldsymbol{\theta},\boldsymbol{r}_{-i}) > w^*_i(\boldsymbol{\theta}_{-i},\boldsymbol{r}_{-i}) \} q^*_i(\boldsymbol{\theta},\boldsymbol{r}_{-i})  Pr\left[ \text{$\mathbf{x} \sim \wiggleD:\|\mathbf{x}\|_1 \geq q^*_i(\boldsymbol{\theta},\boldsymbol{r}_{-i})$} \mid \boldsymbol{r}_{-i}, \boldsymbol{\theta}_{-i} \right] d\boldsymbol{\theta}_i \boldsymbol{\theta}_{-i}\boldsymbol{r}_{-i} \\
&\leq 2\sqrt{m}(1+\delta) \left( \frac{(1+\delta) \sqrt{m} }{\delta} \right)^m \log(1+\delta) \sum_{i=1}^n \int_{\boldsymbol{\theta}_{-i}, \boldsymbol{r}_{-i}} \hat{g}(\boldsymbol{r}_{-i}, \boldsymbol{\theta}_{-i}) \int_{\boldsymbol{\theta}_i} \prod_{k=1}^{m-2} (\sin{\boldsymbol{\theta}_{i,k}})^{m-k-1} \\
&I \{q^*_i(\boldsymbol{\theta}_{-i},\boldsymbol{r}_{-i}) > w^*_i(\boldsymbol{\theta}_{-i},\boldsymbol{r}_{-i}) \} q^*_i(\boldsymbol{\theta}_{-i},\boldsymbol{r}_{-i})  Pr\left[ \text{$\mathbf{x} \sim \wiggleD:\|\mathbf{x}\|_1 \geq q^*_i(\boldsymbol{\theta}_{-i},\boldsymbol{r}_{-i})$} \mid \boldsymbol{r}_{-i}, \boldsymbol{\theta}_{-i} \right] d\boldsymbol{\theta}_i \boldsymbol{\theta}_{-i}\boldsymbol{r}_{-i} \\
&\leq 2 \sqrt{m}(1+\delta) \left( \frac{(1+\delta) \sqrt{m} }{\delta} \right)^m \log(1+\delta) \cdot \left(  \left( \sqrt{\frac{\pi e}{2}} \right)^{m} \frac{m}{(\sqrt{m})^{m}}\right) \cdot \secondprev{\wiggleD} \\
&= 2 \left( \sqrt{\frac{\pi e}{2}} \frac{(1+\delta) }{\delta} \right)^m (1+\delta) \log(1+\delta)  m \sqrt{m} \cdot \secondprev{\wiggleD}.
\end{talign*}

\paragraph{Putting it all together.} 
Combining our bound for~\ref{eq:first term multi} and~\ref{eq:second term multi} we get:

\[
\rev{\wiggleD} \leq 4 \left( \sqrt{\frac{\pi e}{2}} \frac{(1+\delta) }{\delta} \right)^m (1+\delta) \log(1+\delta)  m \sqrt{m} \cdot \secondprev{\wiggleD}. \qedhere
\]

\end{proof}
\section{Extensions}\label{sec:extensions}

It is straightforward to extend our results beyond additive valuations. 

\subsection*{Unit-Demand Valuations}

Our bounds for unit-demand valuations are the same as the ones for additive valuations, in both the Angle-Shift and the Square-Shift models. Starting from our toy Angle-Shift model, notice that in the proof of Theorem~\ref{thm:angle theorem new} Lemma~\ref{lemma:anglefocus} can still be applied, but $\wiggleD_{\theta}$ can be replaced by the single parameter distribution where we sample $r$ as before and output $\max\{r\cos \theta,r \sin \theta\}$ (instead of $r(\cos\theta + \sin\theta)$). This would give  a term $\int_{r = \frac{r_\theta}{\max\{\sin{\theta},\cos{\theta}\}}}^{\infty} g(r) dr$, which is equal to the probability that a sample from $\realD$ has length at least $\frac{r_\theta}{\max\{\sin{\theta},\cos{\theta}\}}$, which is equal to the probability of the same event for $\wiggleD$, which is equal to $Pr\left[\text{$(x,y) \sim \wiggleD$ has $\max\{x,y\} \geq r_{\theta}$}\right]$. The latter probability multiplied by $r_\theta$ is upper bounded by $\brev{\wiggleD}$. Therefore, for this setting, we overall get the same bound of $\frac{\pi \sqrt{2}}{2 \delta} \brev{\wiggleD}$ for unit-demand valuations.

In the Square-Shift model, starting from a single buyer and two items, the same observation about $\wiggleD_{\theta}$ holds. When bounding 
$\int_{r \max\{\cos\theta,\sin\theta\} = r_\theta}^{\infty}  \hat{g}(r, \theta) dr$, Claim~\ref{claim:a b length} is replaced by:

\begin{claim}\label{claim:a b length unit demand}
$\forall (a,b) \in R(r,\theta)$, if $r \geq \frac{r_\theta}{\max\{\cos{\theta},\sin{\theta}\}}$ then $\max\{a , b\} \geq \frac{r_\theta}{1+\delta}$.
\end{claim}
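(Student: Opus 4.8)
The plan is to mirror the proof of Claim~\ref{claim:a b length}, replacing the Euclidean length by the $\ell_\infty$ norm $\max\{\cdot,\cdot\}$ --- this is the natural notion of ``size'' for a unit-demand buyer, and it happens to behave even more cleanly under Square-Shift than the Euclidean norm does. First I would unpack the hypothesis: $(a,b)\in R(r,\theta)$ means that under the Square-Shift perturbation the point $(a,b)$ can be mapped to the point with polar coordinates $(r,\theta)$, i.e.\ to the Cartesian point $(\hat a,\hat b):=(r\cos\theta,r\sin\theta)$; concretely there exist $\delta_1,\delta_2\in[0,\delta]$ with $\hat a=a+\delta_1\max\{a,b\}$ and $\hat b=b+\delta_2\max\{a,b\}$.

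The one structural fact that does all the work is that Square-Shift inflates the $\ell_\infty$ norm by a factor of at most $1+\delta$: since each coordinate increases by at most $\delta\max\{a,b\}$,
\[ \max\{\hat a,\hat b\}\;\le\;\max\{a,b\}+\delta\max\{a,b\}\;=\;(1+\delta)\max\{a,b\}, \]
equivalently $\max\{a,b\}\ge\max\{\hat a,\hat b\}/(1+\delta)$. Since $\max\{\hat a,\hat b\}=r\max\{\cos\theta,\sin\theta\}$, I would then combine this with the hypothesis $r\ge r_\theta/\max\{\cos\theta,\sin\theta\}$ to get $\max\{a,b\}\ge r\max\{\cos\theta,\sin\theta\}/(1+\delta)\ge r_\theta/(1+\delta)$, which is exactly the claim.

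There is no real obstacle here; the argument is essentially one line once the ``$\ell_\infty$ norm inflates by at most $1+\delta$'' observation is isolated. The only thing worth flagging is why tracking $\max\{a,b\}$ (rather than $a+b$ or the Euclidean norm, as in the additive case) is the right move: the downstream step of the unit-demand extension --- the analog of the passage from Claim~\ref{claim:a b length} to the inner-integral bound in Theorem~\ref{thm: n=1,m=2 box} --- needs a lower bound on $\max\{a,b\}$, so that the remaining probability becomes $\Pr[\max\{\hat v_1,\hat v_2\}\ge r_\theta/(1+\delta)]$, which is precisely what a posted price with reserve for a single unit-demand buyer controls; hence $\brev{\wiggleD}$ upper-bounds it and the rest of the Theorem~\ref{thm: n=1,m=2 box} argument carries over verbatim with $\cos\theta+\sin\theta$ replaced by $\max\{\cos\theta,\sin\theta\}$.
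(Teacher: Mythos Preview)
Your proof is correct and follows essentially the same approach as the paper: both arguments observe that Square-Shift inflates the $\ell_\infty$ norm by at most a factor $1+\delta$, i.e.\ $(1+\delta)\max\{a,b\}\ge\max\{r\cos\theta,r\sin\theta\}$, and then combine this with the hypothesis $r\ge r_\theta/\max\{\cos\theta,\sin\theta\}$ to conclude. The paper compresses this into a single line, while you spell out the perturbation explicitly and add useful context about why $\max\{a,b\}$ is the right quantity to track for the downstream unit-demand argument.
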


\begin{proof}
Recall that $(a,b) \in R(r,\theta)$ means that $(a,b)$ that could map to a point with length $r$ and angle $\theta$.
Then, $(1+\delta) \max\{a,b\} \geq \max\{ r \cos\theta, r\sin\theta \} \geq r_\theta$. The Claim follows.
\end{proof}

Claim~\ref{claim:singledimintbound} remains unchanged. Overall we get:

\begin{align*}
&\int_{r=\frac{r_\theta}{\max\{\cos{\theta},\sin{\theta}\}}}^{\infty}  \hat{g}(r, \theta) dr \leq \frac{2(1+\delta)^2 \log{(1+\delta)}}{\delta^2}  \int_{a = 0}^{\infty} \int_{b = \frac{r_\theta}{1+\delta} - a}^{\infty} f(a,b) db da \\
&\quad= \frac{2(1+\delta)^2 \log{(1+\delta)}}{\delta^2} Pr\left[ \text{sample from $\realD$ has maximum value at least $\frac{r_\theta}{1+\delta}$} \right] \\
&\quad \leq \frac{2(1+\delta)^2 \log{(1+\delta)}}{\delta^2} Pr\left[ \text{sample from $\wiggleD$ has maximum value at least $\frac{r_\theta}{1+\delta}$} \right].
\end{align*}

Therefore, the bound $\rev{\wiggleD} \leq \frac{\sqrt{2} \pi (1+\delta)^3 \log{(1+\delta)}}{\delta^2} \brev{\wiggleD}$ is valid a unit demand buyer and two items in the Square-Shift model. For a single buyer and $m$ items we reach the same conclusion by replacing Claim~\ref{claim:x length general box} appropriately. Similarly for the many buyer case two item, and many buyer multi-item cases:~\ref{eq:first term} is bounded by a second price auction for the grand bundle (where this time  buyer $i$'s value for the grand bundle is $\max\{v_{i,1},v_{i,2}\}$ and not $v_{i,1} + v_{i,2}$), and the second term (\ref{eq:second term}) is bounded similarly to the single buyer case.

\subsection*{Additive subject to downward-closed constraints $\mathcal{I}$}

Let $\rev{\realD}$ be the revenue of the optimal truthful mechanism on an $m$-dimensional, $n$ (correlated) buyer distribution $\realD$, where the utility of buyer $i$ for a subset of the items $S$ is equal $v_i(S) = \max_{T \subseteq S, T \in \mathcal{I}} \{ \sum_{j \in T} v_{i,j} \}$. Also, let $\arev{\realD}$ be the optimal revenue of the same distribution, but this time buyers have additive valuation functions.

\begin{claim}
$\rev{\realD} \leq \arev{\realD}$.
\end{claim}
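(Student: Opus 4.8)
The plan is to convert any optimal truthful, ex-post individually rational mechanism $M$ for the $\mathcal{I}$-constrained setting into a truthful, ex-post IR mechanism $M'$ for the additive setting that uses the same payment rule (hence collects the same revenue); since $\arev{\realD}$ is the supremum of revenue over all such additive mechanisms, this yields $\arev{\realD}\ge\rev{\realD}$. I want to stress at the outset that one cannot simply invoke the pointwise bound $v_i(S)=\max_{T\subseteq S,\,T\in\mathcal{I}}\sum_{j\in T}v_{i,j}\le\sum_{j\in S}v_{i,j}$: revenue is not monotone in valuations (exactly the revenue non-monotonicity phenomenon), so a genuine mechanism transformation is needed.

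Concretely, write $M=(x,p)$: on a report profile $\mathbf{v}$ the (possibly randomized) allocation rule $x(\mathbf{v})$ produces a feasible tuple of pairwise disjoint sets $(S_1,\dots,S_n)$, and $p_i(\mathbf{v})$ is buyer $i$'s payment. Define $M'=(x',p)$ with the same payment rule $p$ and with allocation rule: sample $(S_1,\dots,S_n)\sim x(\mathbf{v})$, then hand buyer $i$ the set $T_i^*:=\argmax_{T\subseteq S_i,\,T\in\mathcal{I}}\sum_{j\in T}v_{i,j}$ (ties broken arbitrarily, using buyer $i$'s \emph{reported} vector $v_i$). Since $T_i^*\subseteq S_i$ and the $S_i$ are disjoint, the $T_i^*$ are disjoint and each lies in $\mathcal{I}$, so $M'$ is feasible in the additive setting; it is well defined because $\mathcal{I}$ is downward-closed, so $\emptyset\in\mathcal{I}$ and the maximum is over a nonempty family.

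The crux is a one-sided identity/inequality. Fix $\mathbf{v}_{-i}$ and let $v_i$ be buyer $i$'s true (additive) value. If $i$ reports truthfully then, by construction, $\sum_{j\in T_i^*}v_{i,j}=\max_{T\subseteq S_i,\,T\in\mathcal{I}}\sum_{j\in T}v_{i,j}=v_i(S_i)$; that is, the additive value $i$ derives from $T_i^*$ equals the value an $\mathcal{I}$-constrained buyer with valuation $v_i(\cdot)$ derives from $S_i$ under $M$, so averaging over $x$ and subtracting $p_i$, buyer $i$'s ex-post utility under $M'$ at the truthful report equals her utility under $M$ at the truthful report. If instead $i$ reports $v_i'\ne v_i$, then $T_i^*$ is chosen to maximize $\sum_{j\in T}v_{i,j}'$ over $T\subseteq S_i,\,T\in\mathcal{I}$, hence $\sum_{j\in T_i^*}v_{i,j}\le\max_{T\subseteq S_i,\,T\in\mathcal{I}}\sum_{j\in T}v_{i,j}=v_i(S_i)$, so buyer $i$'s utility under $M'$ at report $v_i'$ is at most her utility under $M$ at report $v_i'$ when her true constrained valuation is $v_i(\cdot)$. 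Since $M$ is truthful, the latter is at most her utility under $M$ at the truthful report, which equals her $M'$-utility at the truthful report; chaining these gives truthfulness of $M'$, and the same truthful-report identity combined with ex-post IR of $M$ gives ex-post IR of $M'$.

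Finally, $M'$ charges exactly the payments $p$ of $M$ and faces the same (truthful) report distribution under $\realD$, so its expected revenue equals that of $M$, i.e. $\rev{\realD}$, whence $\arev{\realD}\ge\rev{\realD}$. I expect the only subtle point to be the asymmetry in the crux: $M'$ keeps the subset that is optimal for the \emph{reported} valuation, which shrinks a misreporting buyer's utility relative to $M$ while preserving a truthful buyer's utility exactly — this is precisely what routes around revenue non-monotonicity. The argument works identically for DSIC mechanisms with any number of (possibly correlated) buyers, and for BIC mechanisms after taking the relevant expectations over $\mathbf{v}_{-i}$.
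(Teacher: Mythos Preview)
The paper states this claim without proof, treating it as a known folklore fact. Your proof is correct and is essentially the standard argument: transform the optimal constrained-setting mechanism $M=(x,p)$ into an additive-setting mechanism $M'=(x',p)$ by having $M'$ allocate to each buyer only the best $\mathcal{I}$-feasible subset of her $M$-allocation (according to her \emph{report}), keeping payments identical. Your key observation---that a truthful reporter's $M'$-utility exactly equals her constrained $M$-utility, while a misreporter's $M'$-utility is weakly below her constrained $M$-utility---cleanly chains with DSIC of $M$ to give DSIC of $M'$, and revenue is preserved since payments are unchanged. You are also right to flag that one cannot simply appeal to the pointwise inequality $v_i(S)\le\sum_{j\in S}v_{i,j}$ because of revenue non-monotonicity; the allocation modification is genuinely needed.
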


Therefore, we can apply our approximations for the additive setting. It remains to observe that the revenue of bundling under additive valuations is at most $m$ times the revenue of bundling under downward-closed constraints $\mathcal{I}$, i.e. we lose a factor $m$ in the approximation guarantee.

\section{Conclusion and Future Directions}\label{sec:conclusions}
We initiate the smoothed analysis of multi-item auctions with arbitrarily correlated buyers. We present two main results: first, the~\cite{BriestCKW15, HartN13} construction is surprisingly robust, and the infinite gap persists in our ``Rectangle-Shift'' model. Second, we show that a smoothed-finite approximation is indeed possible in our ``Square-Shift'' model, with extensions to multiple buyers and multiple items. As a bonus technical insight, we learn that whether or not an infinite gap exists between simple and optimal for some distribution $\mathcal{D}$ is intimately connected to what $\mathcal{D}$ looks like in small cones.

Our work takes the view of studying truly simple mechanisms whose approximation ratios degrade with the smoothing parameter. An alternative approach would be to study truly good approximation ratios via mechanisms whose complexity degrades with the smoothing parameter. To be a little more concrete, an interesting open question is the following: in the Square-Shift model, what is the required \emph{menu complexity} of an auction (that is, the number of possible outcomes based on the buyers' input) to guarantee a $(1-\varepsilon)$-approximation if $\wiggleD$ is smoothed with parameter $\delta$?

A general direction for future work is exploring alternative smoothed models. As an example, for many items our guarantees are exponential in $m$, and this is tight for our model. Would an alternative natural smoothed model provide an improved guarantee, or is there some inherent barrier?

The ``independent items'' paradigm led to numerous developments which deepened our understanding of multi-item auctions. This work shows that smoothed analysis can help guide future work to better understand worst-case distributions beyond the impossibility results of~\cite{BriestCKW15, HartN13}.

\newpage
\begin{appendix}
\section{Results and Figures}\label{app:table}

\begin{center}
\resizebox{\columnwidth}{!}{%
\begin{tabular}{|c|c|c||c|}
\hline
  & Independent & Correlated & Square-Shift \\
\hline\hline
$n=1, m=2$ & $\frac{\rev{\realD}}{\brev{\realD}} \leq 2$~\cite{HartN12} &  $\frac{\rev{\realD}}{\brev{\realD}}$ unbounded~\cite{HartN13} & $\frac{\rev{\wiggleD}}{\brev{\wiggleD}} \leq \frac{\sqrt{2} \pi (1+\delta)^3 \log{(1+\delta)}}{\delta^2}$ \\ \hline
$n=1$, general $m$ &  $\frac{\rev{\realD}}{\max\{\srev{\realD},\brev{\realD}\}} \leq 6$~\cite{BabaioffILW14} & &  $\frac{2^m}{m} \leq \frac{\rev{\wiggleD}}{\brev{\wiggleD}} \leq \left( \sqrt{\frac{\pi e}{2}} \frac{(1+\delta) }{\delta} \right)^m (1+\delta) \log(1+\delta)  m \sqrt{m}$  \\ \hline
general $n$, $m=2$ & $\frac{\drev{\realD}}{\srev{\realD}} \in O(1)$~\cite{BabaioffILW14, Yao15} & &  $\frac{\drev{\wiggleD}}{\secondprev{\wiggleD}} \leq \frac{ 9\sqrt{2}(1+\delta)^3 \log(1+\delta)}{ \delta^2 }$ \\ \hline
general $n, m$ & $\frac{\BICrev{\realD}}{\max\{\srev{\realD}, \brev{\realD}\}} \leq 8$~\cite{Yao15,CaiDW16} &  & $\frac{\drev{\wiggleD}}{\secondprev{\wiggleD}} \in O\left(\frac{(1+\delta)}{\delta}\right)^m (1+\delta) \log(1+\delta) m \sqrt{m} $  \\ \hline
\end{tabular}
}
\end{center}


\begin{figure}[ht]
\begin{subfigure}{.26\linewidth}
\centering
\begin{tikzpicture}[scale=0.35]

  
    \coordinate (Origin)   at (0,0);
    \coordinate (XAxisMin) at (0,0);
    \coordinate (XAxisMax) at (10,0);
    \coordinate (YAxisMin) at (0,0);
    \coordinate (YAxisMax) at (0,7);
    \draw [thin, black,-latex] (XAxisMin) -- (XAxisMax);
    \draw [thin, black,-latex] (YAxisMin) -- (YAxisMax);

\node[draw,circle,inner sep=1pt,fill] at (0,0) {};
\node[draw,circle,inner sep=1pt,fill] at (4,2) {};
\node [below left] at (4,2) {$(x,y)$};
\node[draw,circle,inner sep=1pt,fill] at (8,2) {};
\node[draw,circle,inner sep=1pt,fill] at (4,4) {};
\node[draw,circle,inner sep=1pt,fill] at (8,4) {};

\draw[decorate,decoration={brace,amplitude=5pt,mirror}] 
    (4,1.8) -- (8,1.8);
\node [below] at (6,1.5) {$\delta x$};
\draw[decorate,decoration={brace,amplitude=5pt,mirror}] 
    (3.8,4) -- (3.8,2);
\node [left] at (3.3,3.5) {$\delta y$};

\draw [dashed] (4,2) -- (8,2) -- (8,4) -- (4,4) -- (4,2);
\draw [pattern=north east lines, pattern color=blue] (4,2) -- (8,2) -- (8,4) -- (4,4) -- (4,2);

\end{tikzpicture}
\caption{Rectangle-Shift.}
\label{fig:sub3}
\end{subfigure}\hspace{40px}
\begin{subfigure}{.26\linewidth}
\centering
\begin{tikzpicture}[scale=0.35]

  
    \coordinate (Origin)   at (0,0);
    \coordinate (XAxisMin) at (0,0);
    \coordinate (XAxisMax) at (10,0);
    \coordinate (YAxisMin) at (0,0);
    \coordinate (YAxisMax) at (0,7);
    \draw [thin, black,-latex] (XAxisMin) -- (XAxisMax);
    \draw [thin, black,-latex] (YAxisMin) -- (YAxisMax);

\node[draw,circle,inner sep=1pt,fill] at (0,0) {};
\node[draw,circle,inner sep=1pt,fill] at (4,2) {};
\node [below left] at (4,2) {$(x,y)$};
\node[draw,circle,inner sep=1pt,fill] at (7,2) {};
\node[draw,circle,inner sep=1pt,fill] at (4,5) {};
\node[draw,circle,inner sep=1pt,fill] at (7,5) {};

\draw[decorate,decoration={brace,amplitude=5pt,mirror}] 
    (4,1.8) -- (7,1.8);
\node [below] at (5.5,1.5) {$\delta x$};
\draw[decorate,decoration={brace,amplitude=5pt,mirror}] 
    (3.8,5) -- (3.8,2);
\node [left] at (3.3,3.5) {$\delta x$};

\draw [dashed] (4,2) -- (7,2) -- (7,5) -- (4,5) -- (4,2);
\draw [pattern=north east lines, pattern color=blue] (4,2) -- (7,2) -- (7,5) -- (4,5) -- (4,2);

\end{tikzpicture}
\caption{Square-Shift.}
\label{fig:sub2}
\end{subfigure}\hspace{40px}
\begin{subfigure}{.26\linewidth}
\centering
\begin{tikzpicture}[scale=0.35]
    \coordinate (Origin)   at (0,0);
    \coordinate (XAxisMin) at (0,0);
    \coordinate (XAxisMax) at (10,0);
    \coordinate (YAxisMin) at (0,0);
    \coordinate (YAxisMax) at (0,7);
    \draw [thin, black,-latex] (XAxisMin) -- (XAxisMax);
    \draw [thin, black,-latex] (YAxisMin) -- (YAxisMax);

\node[draw,circle,inner sep=1pt,fill] at (0,0) {};
    
\node[draw,circle,inner sep=1pt,fill] at (6,5) {};
\node [above right] at (6,5) {$(r,\theta)$};
\draw [->|] (0,0) -- node[anchor=north,rotate=30]{$r$} (6,5);

\draw [dashed] (2,0) -- ([shift={(0,0)}]0:2)arc[radius = 2, start angle= 0, end angle=40];
\node [right] at (2,0.8) {$\theta$};

\draw [thick] (6,5) -- ([shift={(0,0)}]40:7.8)arc[radius = 7.8, start angle= 40, end angle=26];
\draw [thick] (6,5) -- ([shift={(0,0)}]40:7.8)arc[radius = 7.8, start angle= 40, end angle=54];

\node[draw,circle,inner sep=1pt,fill] at (4.65,6.25) {};
\node[draw,circle,inner sep=1pt,fill] at (7.05,3.35) {};
\node [left] at (4.65,6.25) {$(r,\theta+\delta)$};
\node [right] at (7.05,3.35) {$(r,\theta-\delta)$};

\end{tikzpicture}
\caption{Angle-Shift}
\label{fig:sub1}
\end{subfigure}
\caption{Our models.}
\label{fig:test}
\end{figure}
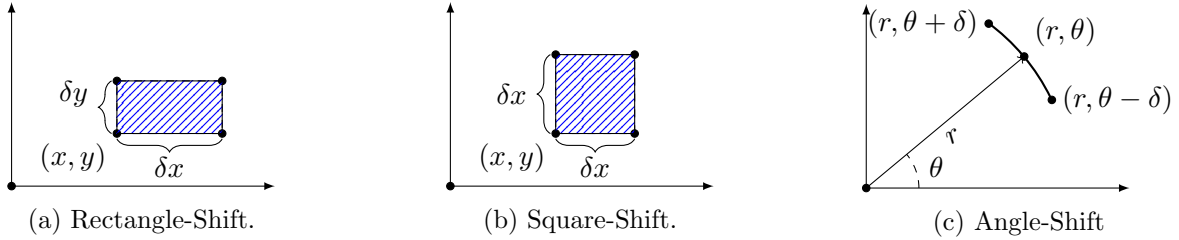

\section{Extended Preliminaries}\label{app:prelim}
For the sake of completion, we provide a brief discussion on the BIC solution concept for correlated buyers. Note that when buyers are correlated, your prior on the other buyers' values changes depending on your own prior. So Bayesian IC truly means ``you prefer to tell the truth, assuming that all other buyers tell the truth (even taking your updated prior into account).'' Formally, one needs to define $\pi_{ij}(\mathbf{v}_i, \mathbf{v}'_i)$ to be the probability that buyer $i$ receives item $j$ conditioned on reporting $\mathbf{v}_i$ while their true type is $\mathbf{v}'_i$. When buyers are independent the true type doesn't affect the conditioning. And then a mechanism is BIC as long as:

$$\mathbf{v}_i \cdot \vec{\pi}_i (\mathbf{v}_i,\mathbf{v}_i) - p_i(\mathbf{v}_i) \geq \mathbf{v}_i \cdot \mathbf{\pi}_i(\mathbf{v}_i,\mathbf{v}'_i) - p_i(\mathbf{v}'_i).$$

Also, while the difference between ex-post IR (always receive non-negative utility for reporting the truth) and interim IR (receive non-negative utility for reporting the truth in expectation) is unimportant for auctions with independent buyers (due to a very simple reduction from interim IR to ex-post IR, see e.g.~\cite{DaskalakisW12}). There's a fundamental difference for correlated buyers. This is because seminal work of Cremer and McLean~\cite{CremerM88} shows that for any non-generic correlated distribution, it is possible for the seller to extract the full welfare with a DSIC, interim IR mechanism (but this is not possible with a DSIC, ex-post IR mechanism). So all of our results stated for correlated buyers necessarily reference ex-post IR mechanisms (and this is explicitly stated in the body). 
\section{Lower bounds}\label{appendix:lower bounds}

\subsection{Rectangle-Shift lower bounds}\label{app: rectangle lower bounds}

\begin{proof}[Proof of Lemma~\ref{lemma:lb auction}]
Let $M_i = \frac{4^i}{\prod_{j<i} gap^j_\delta}$.
Let $\realD$ be the following distribution: the buyer has value $\mathbf{v}_i = M_i \mathbf{x}_i$ with probability $\frac{1}{M_i}$. With the remaining probability $1 - \sum_{i=1}^\infty M_i$ the value is zero. 
We sample from $\wiggleD$ by first sampling $\mathbf{x}_i$ w.p. $\frac{1}{M_i}$, perturbing with parameter $\delta$ to $\hat{\mathbf{x}}_i$, and outputting value $\mathbf{v} = M_i \hat{\mathbf{x}}_i$.

Consider the following menu. The $i$-th menu item costs $M_i \text{gap}^i_\delta$ for an allocation of $\mathbf{x}_i$. We first show that every buyer $\hat{\mathbf{x}}_i \in R^{-1}(\mathbf{x}_i,\delta)$ prefers the $i$-th menu item. First, notice that the utility of such a buyer for the $i$-th item is $M_i \hat{\mathbf{x}}_i \mathbf{x}_i - M_i \text{gap}^i_\delta \geq M_i \hat{\mathbf{x}}_i \mathbf{x}_j$, for all $j<i$ (from the definition of $\text{gap}^i_\delta$), which is the utility from getting $j$-th menu item without any payment. For $j > i$, the utility of buyer $\hat{\mathbf{x}}_i$ for the $j$-th menu item is: $M_i \hat{\mathbf{x}}_i \mathbf{x}_j - M_j gap^j_\delta \leq 2 (1 + \delta) M_i - 4 M_{j-1} < 0$, since $j-1 \geq i$ and $\delta < 1/2$. Finally, the revenue of this auction is $\sum_{i=1}^\infty M_i \text{gap}^i_\delta \frac{1}{M_i} = \sum_{i=1}^\infty \text{gap}^i_\delta$.

It remains to show that $\brev{\wiggleD} \in O(1)$. For any point $\hat{\mathbf{x}}_i$, the buyer's value (and seller's revenue) for the grand bundle is at most $2 M_i (1+\delta)$ (post perturbing). Since $2M_{i-1} (1+\delta) < M_{i}$, the only types willing to buy the bundle at price $M_i \hat{\mathbf{x}}_i$ are those $j \geq i$. Then the probability of selling is at most $\sum_{j \geq i} \frac{1}{M_i} \leq \frac{2}{M_i}$ (since the series of $M_i$ is dominated by a geometric sequence with rate of growth of 2). Therefore the revenue of selling the grand bundle at any price $M_i\hat{\mathbf{x}}_i$ is at most $\frac{2}{M_i} 2 M_i (1+\delta) = 4(1+\delta)$.  
\end{proof}

\begin{proof}[Proof of Claim~\ref{claim:ezpz}]

Recall that the $k^{th}$ point on shell $N$ has angle $\frac{\pi}{2} \cdot (1-3\delta)^{k-1}$. Observe that for any $j > k$, we have $\theta_j = (1-3\delta)^{j-k} \cdot \theta_k \leq (1-3\delta)\theta_k$. In particular, this implies that $\theta_k - \theta_j \geq 3\delta \theta_k$. 

For the other direction, if $j < k$, we have just shown that $|\theta_k - \theta_j| \geq 3\delta \theta_j \geq 3\delta \theta_k$, since $\theta_k$ is a decreasing sequence. 
\end{proof}

\begin{proof}[Proof of Claim~\ref{claim:lb on theta}]
The most extreme angles $\hat{\theta}_i$ a point $p_i$ can be mapped to are $\arctan(\tan(\theta_i (1+\delta)))$ and $\arctan(\tan(\frac{\theta_i}{1+\delta}))$. We now bound how big the gap can be, as a function of $\theta_i$. We do so by examining the integral of the derivative of the function $\arctan((1+x)\tan(\theta_i))$ with respect to $x$.

\begin{align*}
\hat{\theta}_i-\theta_i &\leq \arctan((1+\delta)\tan(\theta_i))-\arctan \left(\tan \left(\theta_i \right)\right) \\ 
&\leq \int_{x=0}^{x=\delta} \frac{\tan(\theta_i)}{(\tan(\theta_i)+x\tan(\theta_i))^2+1}dx \\ 
&\leq \delta \frac{\tan(\theta_i)}{\tan(\theta_i)^2+1} \\
&= \delta \sin \theta_i \cos \theta_i \leq \delta \theta_i.\qedhere
\end{align*}

Similarly, we have (by examining the function $\arctan((1+x)\tan(\theta_i)/(1+\delta))$:
\begin{align*}
{\theta_i}-\hat{\theta}_i &\leq \arctan(\tan(\theta_i))-\arctan \left(\tan \left(\theta_i \right)/(1+\delta)\right) \\ 
&\leq \int_{x=0}^{x=\delta} \frac{\tan(\theta_i)/(1+\delta)}{(\frac{\tan(\theta_i)+x\tan(\theta_i)}{1+\delta})^2+1}dx \\ 
&\leq \delta \frac{(1+\delta)\tan(\theta_i)}{\tan(\theta_i)^2+(1+\delta)^2} \\
&\leq \delta \frac{(1+\delta) \sin \theta_i \cos \theta_i}{1+(2\delta-\delta^2)\cos^2\theta_i} \leq \delta \frac{\theta_i + \delta \theta_i \cos^2\theta_i}{1+\delta \cos^2 \theta_i} = \delta \theta_i.\qedhere
\end{align*}

The final line is due to the fact that $\theta_i \geq \sin \theta_i \cos \theta_i$ on $[0,\pi/2]$, and also that $\theta_i \cos \theta_i \geq \sin \theta_i$ on $[0,\pi/2]$ (and that $2\delta - \delta^2 \geq \delta$).\qedhere

\begin{proof}[Proof of Claim~\ref{claim:gap in omega}]
\begin{talign*}
&\mathbf{\hat{x}_i} \left( \mathbf{x}_i - \mathbf{x}_{j} \right)= \|\mathbf{\hat{x}}_i  \| \| \mathbf{x}_i \| \cos \left(  \theta_i - \hat{\theta}_i \right) - \|\mathbf{\hat{x}}_i\| \|\mathbf{x}_j\| \cos \left( \hat{\theta}_i - \theta_j \right)  \\
&= \|\mathbf{\hat{x}}_i  \| \ell_N \cos \left( \theta_i - \hat{\theta}_i \right) - \|\mathbf{\hat{x}}_i\| \ell_N \cos \left(\hat{\theta}_i - \theta_j \right) \\
&\geq \ell^2_N \left( \cos \left(  \theta_i - \hat{\theta}_i \right) -  \cos \left( \hat{\theta}_i - \theta_j \right) \right)~~~~~~~~~~~~~\left[ \| \mathbf{\hat{x}}_i \| \geq \ell_N \right] \\
&=  \ell^2_N \left(2 \sin \left(  \frac{|\theta_i - \theta_j|}{2} \right) \sin \left( \frac{|2\hat{\theta}_i - \theta_i - \theta_j |}{2} \right) \right)~~~~~~~~~~~\left[ \cos{x} - \cos{y} = 2 \sin{\frac{y+x}{2}} \sin{\frac{y-x}{2}},\ \ \ \sin x = \sin |x| \right]\\
&\geq  \frac{2}{\pi^2} \ell^2_N \left( |\theta_i-\theta_j| \right) \left( |2\hat{\theta}_i - 2\theta_i + \theta_i - \theta_j|\right)~~~~~~~~\left[ \text{Jordan's inequality:} \sin x \geq \frac{2}{\pi} x \right]\\
&\geq  \frac{2}{\pi^2} \ell^2_N \left( |\theta_i - \theta_j| - 2 |\hat{\theta}_i-\theta_i|  \right)^2 \\
&\geq \frac{2}{\pi^2} \ell^2_n \left(3 \delta \theta_i - 2 \delta \theta_i\right)^2 .~~~~~~~~~~~~~~~~~~~~~~~~~~~~\left[\text{Claims~\ref{claim:lb on theta} and~\ref{claim:ezpz} }\right]
\end{talign*}
\end{proof}

\end{proof}

\subsection{Square-Shift Lower Bound for Two Items}\label{app:lower bound square two}

\begin{lemma}
There exists a bivariate distribution $\realD$  such that for its corresponding perturbed distribution $\wiggleD$, $\rev{\wiggleD} \in \Omega\left( (\frac{1}{\delta})^{1/7} \right) \brev{\wiggleD}$.
\end{lemma}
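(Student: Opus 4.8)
Proof proposal.

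The plan is to mirror the Rectangle-Shift lower bound (Lemmas~\ref{lemma:lb auction} and~\ref{lemma:points}) but adapt it to the Square-Shift perturbation. There are two ingredients. First, a ``menu-packing'' lemma: given any \emph{finite} family of points $\mathbf{x}_1,\dots,\mathbf{x}_k\in[0,1]^2$, there is a bivariate $\realD$ whose Square-Shift smoothing $\wiggleD$ satisfies $\brev{\wiggleD}\in O(1)$ and $\rev{\wiggleD}\ge\sum_{i=1}^k\gap^i_\delta$, where now $\gap^i_\delta$ is defined using the Square-Shift map $R^{-1}$. Second, a point-packing construction for which $\sum_i\gap^i_\delta\in\Omega((1/\delta)^{1/7})$. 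Combining the two immediately gives the lemma.

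For the first ingredient I would reuse the construction from the proof of Lemma~\ref{lemma:lb auction} essentially verbatim: take $\mathbf{v}_i=M_i\mathbf{x}_i$ with probability $1/M_i$ where $M_i=4^i/\prod_{j<i}\gap^j_\delta$, and offer menu item $i$ = lottery $\mathbf{x}_i$ at price $M_i\gap^i_\delta$. The only properties of the perturbation used are: (a) it only \emph{increases} each coordinate, so $\hat{\mathbf{x}}_i\cdot\mathbf{x}_i\ge\|\mathbf{x}_i\|^2\ge\gap^i_\delta$ and the worst-case perturbation for a ``radial'' comparison is $\hat{\mathbf{x}}_i=\mathbf{x}_i$ itself; and (b) it inflates $\|\cdot\|$ by at most a factor $1+O(\delta)$, needed for the $j>i$ argument and for $\brev{\wiggleD}\in O(1)$. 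Both hold in Square-Shift (for $m=2$, $\|\hat{\mathbf x}\|\le\|\mathbf x\|\sqrt{1+2\sqrt2\,\delta+2\delta^2}$), so the same computation yields $\rev{\wiggleD}\ge\sum_i\gap^i_\delta$ and $\brev{\wiggleD}\in O(1)$; one should double-check that the $M_i$-growth step still forces every post-perturbation type to buy its own menu item, which it does since the gaps in the construction are all comparable.

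For the second ingredient the key structural contrast with Rectangle-Shift is that in Square-Shift a valuation's \emph{angle} is perturbed by an \emph{additive} $\Theta(\delta)$ — even near an axis — rather than by a multiplicative $\Theta(\delta\theta)$. Hence we cannot squeeze angles into a geometric progression as in Lemma~\ref{lemma:points}; within a single shell of radius $\Theta(1)$ we can fit only $\Theta(1/\delta)$ angularly $\Omega(\delta)$-separated points (bounded away from the axes), and within-shell neighbours then have gap only $\Omega(\delta^2)$. To amplify, stack many shells of radii $\ell_1<\ell_2<\cdots$ inside a band such as $[1/2,1]$, using the \emph{same} angular grid on every shell, so that (as in Lemma~\ref{lem:gap}) the minimizing competitor of a shell-$N$ point is either the same-angle point on shell $N-1$ (gap $\ell_N(\ell_N-\ell_{N-1})$) or an angular neighbour on shell $N$ (gap $\Omega(\delta^2)$, via the $\cos A-\cos B$ / Jordan-inequality manipulation of Claim~\ref{claim:gap in omega}, after bounding $|\hat\theta_i-\theta_i|=O(\delta)$ as in Claim~\ref{claim:lb on theta}). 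Choosing the radial spacing $\ell_N-\ell_{N-1}$ and the angular spacing appropriately, and optimizing the number of shells against points-per-shell subject to keeping the band $\Theta(1)$-wide and the angles bounded away from $0$ and $\pi/2$, makes $\sum_i\gap^i_\delta$ grow like $(1/\delta)^{1/7}$.

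I expect the main obstacle to be the joint tuning and bookkeeping in the second ingredient: one must simultaneously (a) control the worst-case angle perturbation uniformly over the band \emph{and} over the whole square $R^{-1}(\mathbf{x}_i,\delta)$ (not just over $\hat{\mathbf x}_i=\mathbf x_i$), (b) rule out that some far-away point on an intermediate shell becomes the minimizing competitor — which is exactly where shells placed too close together or points placed too near an axis would break the argument, since a perturbed point's length can reach across $\Theta(1/\delta)$ neighbouring shells — and (c) balance all the parameters to extract the largest possible power of $1/\delta$; each step is individually routine, but the interaction among them is what prevents a cleaner exponent.
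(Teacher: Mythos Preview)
Your approach is genuinely different from the paper's, and considerably more laborious. The paper does \emph{not} redo the menu-packing construction inside the Square-Shift model at all. Instead it works as follows: take the (unperturbed) Hart--Nisan construction truncated to its first $n$ points, so that $\brev{\realD}\le 4$, $\rev{\realD}\in\Omega(n^{1/7})$, and the expected welfare $\mathbb{E}[v_1+v_2]\in\Theta(n)$. Then apply the revenue-stability theorem of \citet{RubinsteinW15}: since $\wiggleD$ stochastically dominates $\realD$ with $\textsc{Val}(\Delta)\le 2\delta\,\mathbb{E}[v_1+v_2]\in\Theta(\delta n)$, one gets $\rev{\wiggleD}\ge\tfrac12\bigl(\rev{\realD}-\Theta(\delta n)\bigr)$. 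Choosing $n\in\Theta(1/\delta)$ makes the subtracted term $O(1)$ and leaves $\rev{\wiggleD}\in\Omega((1/\delta)^{1/7})$, while $\brev{\wiggleD}\le\brev{\realD}+O(\delta n)\in O(1)$. The exponent $1/7$ is thus inherited wholesale from Hart--Nisan's $gap^k\in\Omega(k^{-6/7})$ and the balancing $n\sim 1/\delta$; no Square-Shift geometry is ever touched.

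Your direct-construction route is plausible in outline, but it has a real gap: you never explain why your shell packing yields the exponent $1/7$ in particular. In your sketch, a single shell carries $\Theta(1/\delta)$ points with within-shell gap $\Omega(\delta^2)$, and stacking $K$ shells in a band of width $\Theta(1)$ gives radial gap $\Theta(1/K)$; balancing naively suggests $K\sim 1/\delta^2$ and total gap $\Omega(1/\delta)$ --- which is \emph{better} than $(1/\delta)^{1/7}$, but this ignores exactly the cross-shell interference you flag in point (b), since a Square-Shift perturbation moves the length by $\Theta(\delta)$ and hence spans $\Theta(\delta K)$ adjacent shells. Working out which competitor $\mathbf{x}_j$ actually minimizes $\hat{\mathbf{x}}_i\cdot(\mathbf{x}_i-\mathbf{x}_j)$ once shells are that dense is nontrivial, and until that is done there is no reason to expect $1/7$ rather than some other power. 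The paper's black-box route sidesteps all of this.
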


\begin{proof}
First, we restate some known results that we will need.\footnote{These propositions appear in the arXiv version of~\cite{HartN13} uploaded on 22 April, 2013.}

Let $q^1, q^2, \dots$ be a finite or countably infinite sequence of points in $[0, 1]^k$. Define
\[
gap^n = q^n \cdot q^n - \max_{j < n} q^j \cdot q^n.
\]

\begin{proposition}[Proposition $5.1$ in~\cite{HartN13}]
For every (finite or countably infinite) sequence $q^1, q^2, \dots$ of points in $[0, 1]^k$ there exists a distribution $F$ on $\mathbb{R}^k_+$ such that $\brev{F} \leq 2k$, $\rev{F} \geq \sum_{n} gap^n$ and $\mathbb{E}[v_1 + v_2] = \sum_{n} \| q^n \|_1$. Specifically, for each $n$, let $M^n = (2k)^n/(\prod_{j=1}^n gap^j )$. Then, $F$ is the distribution that puts probability $1/M^n$ on the point $x^n = M^n q^n$, and puts the remaining probability on the point $0$.
\end{proposition}

\begin{proposition}[Proposition $6.1$ in~\cite{HartN13}]
There exists an infinite sequence of points $q^1, q^2, \dots$ of points in $[0,1]^2$ with $\| q^k \|_2 \leq 1$, such that taking for all $k$, $gap^k \in \Omega(k^{-6/7})$. The construction puts points in ``shells''. The $N$-th shell has $N^{3/4}$ points with length $\sum_{\ell = 1}^N \ell^{-3/2} / (\sum_{\ell = 1}^{\infty} \ell^{-3/2})$.
\end{proposition}

Combining these two results we can get a distribution $\realD$ over $n$ points, such that $\brev{\realD} \leq 4$ and $\rev{\realD} \geq \sum_{k=1}^n gap^k = \sum_{k=1}^n k^{-6/7} \in \Omega( n^{1/7} )$. The largest possible value for any of the two items is $M^n q^n \leq 2 M^n \leq  2 \frac{4^n}{\prod_{j=1}^n gap^j } \leq 4^{n+1}$. The expected welfare of $\realD$ is $\sum_{k=1}^n \| q^k \|_1 \in \Theta (  \sum_{k=1}^n \| q^k \|_2 ) = \Theta( \sum_{N=1}^{N_{max}} N^{3/4} \frac{\sum_{\ell = 1}^N \ell^{-3/2}}{\sum_{\ell = 1}^{\infty} \ell^{-3/2}} ) = \Theta( N_{max}^{7/4} )$, where $N_{max}$ is the number of shells. The total number of points is $n$, and the $N$-th shell has $N^{3/4}$ points, therefore, $N_{max} \in \Theta( n^{4/7} )$, and therefore the expected welfare is in $\Theta(n)$.

To complete the proof of the lemma, we find a $\delta$ that is large, but at the same time the gap between $\rev{\wiggleD}$ and $\brev{\wiggleD}$  remains sufficiently large as well. We use the following theorem:

\begin{theorem}[\cite{RubinsteinW15}]
Let $D$ be a distribution, and $D^+$ be a distribution that stochastically dominates $D$. Let $\Delta$ denote the random function $v^+ - v$, when couples $v^+$ and $v$ are sampled jointly from $D^+$ and $D$. Finally, let $\textsc{Val}(\Delta)$ be the expected welfare of VCG with buyers whose types are distributed according to $\Delta$. Then, for all such $D$, $D^{+}$, $\Delta$, and for every mechanism $M$ for $D$ and every $\epsilon > 0$, there exists a mechanism $M'$ for $D^{+}$ such that
\[
\rev{ M', D^{+} } \geq (1-\epsilon) \cdot \left( \rev{ M, D} - \frac{\textsc{Val}(\Delta)}{\epsilon} \right).
\]
\end{theorem}

We apply this theorem for $D = \realD$, and $D^{+} = \wiggleD$, which stochastically dominates $\realD$, and $M$ the optimal mechanism for $\realD$. Observe that $\textsc{Val}(\Delta) = \mathbb{E}[ \sum_{i=1}^2 (\hat{v_i} - v_i) ] \leq 2 \delta \mathbb{E}[ \max_i v_i ] \leq 2 \delta \mathbb{E}[ v_1 + v_2 ] \in \Theta( \delta n )$. We set $\epsilon = \frac{1}{2}$. Then:
\begin{align*}
\rev{\wiggleD} &\geq \rev{ M', \wiggleD } \\
&\geq \frac{1}{2} \cdot \left( \rev{\realD} - \Theta( \delta n  ) \right) \\
&\geq \frac{n^{1/7}}{2} - \Theta( \delta n  ).
\end{align*}
Picking $n \in \Theta( \frac{1}{\delta} )$ gives $\rev{\wiggleD}  \in \Omega( n^{1/7} )$, while  $\brev{\wiggleD} \leq \brev{\realD} + \delta n \in O(1)$.
\end{proof}

\subsection{Square-Shift Lower Bound for Multiple Items}\label{app:lower bound square}

In this subsection we present a lower bound for the Square-Shift model with a single buyer interested in multiple items. 

\begin{restatable}{thm}{tightUB}
For small enough $\delta$, there exists a distribution over $m$ items $\realD$ such that $\brev{\wiggleD} \in O(m^2)$ but $\rev{\wiggleD} \in \Omega\left(m 2^m\right)$.
\end{restatable}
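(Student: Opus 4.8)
The plan is to build $\realD$ from a Hart--Nisan-style point packing indexed by the nonempty subsets of $[m]$, feed it into Proposition~$5.1$ of~\cite{HartN13} (the general-$k$ version quoted in Appendix~\ref{app:lower bound square two}, applied with $k=m$) to obtain a distribution with large $\textsc{Rev}$ and small $\textsc{BRev}$, and then transport both guarantees to $\wiggleD$: the $\textsc{BRev}$ upper bound via a one-line coupling argument, and the $\textsc{Rev}$ lower bound via the Rubinstein--Weinberg theorem quoted above, exactly as in the two-item lower bound of Appendix~\ref{app:lower bound square two}.

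\emph{The point packing.} Order the $2^m-1$ nonempty subsets $S\subseteq[m]$ by increasing cardinality (ties broken arbitrarily), and for the $n$-th subset $S$ in this order set $q^n=\mathbf{1}_S\in\{0,1\}^m\subseteq[0,1]^m$. I claim $gap^n\ge 1$ for every $n$: we have $q^n\cdot q^n=|S|$, while every subset $T$ appearing earlier satisfies $|T|\le|S|$ with $T\ne S$, hence $q^T\cdot q^n=|T\cap S|\le|S|-1$, so $gap^n=|S|-\max_{j<n}q^j\cdot q^n\ge1$. Therefore $\sum_n gap^n\ge 2^m-1$. Moreover $\max_i q^n_i=1$ for every nonempty $S$, so $\sum_n\max_i q^n_i=2^m-1$.

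\emph{The distribution and its $\textsc{BRev}$.} Proposition~$5.1$ of~\cite{HartN13} turns this sequence into a distribution $F$ on $\reals^m_+$ that places mass $1/M^n$ on $M^n q^n$ (and the rest on $\mathbf 0$), with $\brev{F}\le 2m$ and $\rev{F}\ge\sum_n gap^n\ge 2^m-1$; from the explicit form, $\E_F[\max_i v_i]=\sum_n(1/M^n)\cdot M^n\max_i q^n_i=2^m-1$. Let $\realD:=m\cdot F$ (all values scaled by $m$), so $\brev{\realD}\le 2m^2$, $\rev{\realD}\ge m(2^m-1)$, and $\E_{\realD}[\max_i v_i]=m(2^m-1)$; let $\wiggleD$ be its Square-Shift perturbation and fix $\delta\le\frac1{2m}$. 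Couple $\wiggleD$ to $\realD$ by $\hat v_i=v_i+\varepsilon_i\max_k v_k$ with $\varepsilon_i\sim U[0,\delta]$ i.i.d.; then $\sum_i\hat v_i\le(1+m\delta)\sum_i v_i$ pointwise, so for every bundle price $p$, $\ p\Pr_{\wiggleD}[\sum_i\hat v_i\ge p]\le p\Pr_{\realD}[\sum_i v_i\ge p/(1+m\delta)]\le(1+m\delta)\brev{\realD}$, whence $\brev{\wiggleD}\le(1+m\delta)\brev{\realD}\le 4m^2\in O(m^2)$.

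\emph{Lower bounding $\textsc{Rev}$, conclusion, and the crux.} Since $\wiggleD$ stochastically dominates $\realD$, apply the Rubinstein--Weinberg theorem with $D=\realD$, $D^+=\wiggleD$, $M$ the revenue-optimal mechanism for $\realD$, and $\epsilon=\frac12$: there is a mechanism for $\wiggleD$ of revenue at least $\frac12\big(\rev{\realD}-2\,\textsc{Val}(\Delta)\big)$. Here $\textsc{Val}(\Delta)=\E[\sum_i(\hat v_i-v_i)]=\E[(\sum_i\varepsilon_i)\max_k v_k]=\frac{m\delta}2\,\E_{\realD}[\max_k v_k]=\frac{m^2\delta}{2}(2^m-1)$, using independence of the noise from $\mathbf v$. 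Hence $\rev{\wiggleD}\ge\frac12\big(m(2^m-1)-m^2\delta(2^m-1)\big)=\frac{m(2^m-1)(1-m\delta)}{2}\ge\frac{m(2^m-1)}{4}\in\Omega(m2^m)$ for $\delta\le\frac1{2m}$, which together with the $\textsc{BRev}$ bound proves the theorem (confirming the exponential-in-$m$ factor of Theorem~\ref{thm: n=1,general m box} is unavoidable). I do not expect a genuine obstacle: both black boxes are given, and the only real content is the packing — needing $\sum_n gap^n=\Omega(2^m)$ from points inside $[0,1]^m$, which the size-ordered subset indicators deliver with a trivial gap computation. The rest is bookkeeping: scaling $F$ by $m$ so $\rev$ lands at $\Omega(m2^m)$ while $\brev$ stays $O(m^2)$, and choosing $\delta=O(1/m)$ small enough that the multiplicative blow-up of $\brev$ and the additive $\textsc{Val}(\Delta)$ loss are both lower-order (if one avoids the explicit form of $F$, the cruder bound $\E_{\realD}[\max_i v_i]\le\E_{\realD}[\sum_i v_i]=m\sum_n\|q^n\|_1=m^2 2^{m-1}$ still works, at the price of needing $\delta=O(1/m^2)$).
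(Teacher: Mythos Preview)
Your proof is correct, but it takes a genuinely different route from the paper's.

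The paper builds $\realD$ explicitly: it uses only the $\binom{m}{m/2}$ subsets of size $m/2$, puts mass $\propto m^{-2j}$ on type $j$ with per-item value $2m^{2j}$ on $S_j$ (and $0$ off $S_j$), and then analyzes $\wiggleD$ \emph{directly}. It verifies by hand that the menu ``buy $S_j$ for price $m^{2j}$'' remains truthful after perturbation (the key being that the perturbed value for items outside $S_j$ is only $O(\delta m^{2j})$, too small to tempt type $j$ toward a cheaper $S_k$), and similarly bounds $\brev{\wiggleD}$ by a direct case analysis of which types would buy the bundle at each candidate price. No black boxes are invoked.

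You instead reuse the two-item template from Appendix~\ref{app:lower bound square two}: pack points (all nonempty indicators, size-ordered, giving $gap^n\ge1$ trivially), feed them through Hart--Nisan Proposition~5.1 to manufacture $\realD$, and then transport the guarantees to $\wiggleD$ via the Rubinstein--Weinberg theorem and a simple coupling for $\textsc{BRev}$. This is more modular and arguably cleaner --- the $gap^n\ge1$ computation is a one-liner, and you never need to open up the mechanism on $\wiggleD$. The price is reliance on two cited results and the need for $\delta\le1/(2m)$ (the paper's direct argument also needs $\delta$ small but does not tie it to $m$ as explicitly). Both approaches land at the same $\Omega(m2^m)$ versus $O(m^2)$ bounds; the paper's is more self-contained, yours better exposes that the only new content is the subset-indicator packing.
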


\begin{proof}
Enumerate all subsets of $[m]$ of size $\frac{m}{2}$ in some order: $S_1, S_2, ... , S_M$, $M = \binom{m}{m/2}$, and let $C = \sum_{j=1}^{M} m^{-2j} = \frac{1}{m^2 - 1}$. For $\realD$, the buyer will have type $j$ with probability $m^{-2j}/C$. A buyer of type $j$ is interested in each item in $S_j$ equally, for a value of $2m^{2j}$ per item, and is not interested in items not in $S_j$. 

The valuations in $\wiggleD$ are not significantly different. The buyer with type $j$ has a value of at most $\delta 2 m^{j}$ for an item not in $S_j$ and $(1+\delta)2m^{2j}$ for an item in $S_j$. This means that for the buyer's value for the grand bundle is at most $\frac{m}{2} \delta 2m^{2j} + \frac{m}{2} (1+\delta) 2m^{2j} \leq m^{2j+1}(1+2\delta)$, which is smaller than $2m^{2(j+1)}$. Therefore, if the grand bundle is priced at $2m^{2j}$, only buyers with types $k \geq j$ will purchase. Therefore, the revenue of bundling is $\brev{\wiggleD} \leq (1+2\delta)2m^{2j} \sum_{i \geq j}^{k} \frac{m^{-2i}}{C} \leq 4(1+2\delta) \frac{1}{C} \in O(m^2)$.

Consider the following mechanism. Offer the set $S_j$ at a price $m^{2j}$. We show that a buyer with type $j$ always prefers that tailored offer to any other subset. For $k > j$, the cost of $S_k$ is at least $m^{2j+2}$. The value of $j$ for $S_j$ is at most $\frac{m}{2} 2m^{2j} (1+\delta) = m^{2j+1}(1+\delta)$, which is less than the cost of $S_k$ for $k > j$. For $k<j$, the utility of a type $j$ for getting $S_k$ for free is at most $V_k = \sum_{i=1}^{m-1} \hat{v}_i$, where $\hat{v}_i$ is the value for the $i$-th valuable item after perturbing.  The utility of $j$ for $S_j$ is at least $\sum_{i=1}^{m-1} \hat{v}_i + 2m^{2j} - m^{2j} = V_k + m^{2j}$. Therefore, type $j$ purchases $S_k$ for a price of $m^{2j}$. The total revenue is $\sum_{j=1}^k m^{2j} \frac{m^{-2j}}{C} = \frac{k}{C} \geq m 2^m$.
\end{proof}

\section{Upper Bounds}\label{app:all upper bounds}
\subsection*{Polar Coordinates} 

It will turn out extremely useful for our analysis to look at valuations in polar coordinates as opposed to Cartesian coordinates. We use this space to remind the reader about some useful properties. Given a $m$-dimensional vector $\mathbf{x}$, its polar transformation is $(r, \boldsymbol{\theta})$, where $r$ is a non-negative real number and $\boldsymbol{\theta}$ is an $m-1$ dimensional vector of angles $\theta_i \in [0, \frac{\pi}{2}]$ such that $x_1 = r \cos \theta_1$, $x_2 = r \sin \theta_1 \cos \theta_2$, $x_3 = r \sin \theta_1 \sin \theta_2 \cos \theta_3$ and so on until $x_{m-1} = r \prod_{k=1}^{m-2} \sin \theta_k \cos \theta_{m-1}$, $x_{m} = r \prod_{k=1}^{m} \sin \theta_k$. Let $\trig \left( \boldsymbol{\theta} \right)$ be a vector such that $\trig_j \left( \boldsymbol{\theta} \right) = \cos{\theta_j} \prod_{k=1}^{j-1} \sin{\theta_k}$. We sometimes write $\left( r \cdot \trig \left( \boldsymbol{\theta} \right) \right)$ or $\left( r, \trig \left( \boldsymbol{\theta} \right) \right)$ for the transformation from polar to cartesian coordinates.

\begin{claim}
\label{claim:trigfact}
$\max_{j=1,\dots,m}{(\trig_j(\boldsymbol{\theta}))} \geq \frac{1}{\sqrt{m}}.$
\end{claim}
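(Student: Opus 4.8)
The plan is to show that $\trig(\boldsymbol{\theta})$ is a unit vector with non-negative coordinates, after which the claim follows immediately by pigeonhole. Recall that $\trig_j(\boldsymbol{\theta}) = \cos\theta_j \prod_{k=1}^{j-1}\sin\theta_k$ for $j < m$ and $\trig_m(\boldsymbol{\theta}) = \prod_{k=1}^{m-1}\sin\theta_k$, and that the point $\mathbf{x}$ corresponding to $(r,\boldsymbol{\theta})$ satisfies $x_j = r\,\trig_j(\boldsymbol{\theta})$.

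The key step is the identity $\sum_{j=1}^m \trig_j(\boldsymbol{\theta})^2 = 1$. Conceptually this is just the fact that the polar-to-Cartesian transformation preserves Euclidean norm ($\|\mathbf{x}\|_2 = r$, so $r^2\sum_j \trig_j(\boldsymbol{\theta})^2 = r^2$), but I would prove it directly to keep things self-contained. Write $P_i := \prod_{k=1}^{i}\sin\theta_k$ with $P_0 = 1$, so that $\trig_j(\boldsymbol{\theta}) = P_{j-1}\cos\theta_j$ for $j<m$ and $\trig_m(\boldsymbol{\theta}) = P_{m-1}$. One shows by induction on $i$ that $\sum_{j=1}^{i} P_{j-1}^2\cos^2\theta_j + P_i^2 = 1$: the base case $i=0$ is just $P_0^2 = 1$, and the inductive step uses $P_{i-1}^2 = P_{i-1}^2(\cos^2\theta_i + \sin^2\theta_i) = P_{i-1}^2\cos^2\theta_i + P_i^2$. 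Taking $i = m-1$ gives exactly $\sum_{j=1}^{m-1}\trig_j(\boldsymbol{\theta})^2 + \trig_m(\boldsymbol{\theta})^2 = 1$.

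Given this identity, the $m$ numbers $\trig_1(\boldsymbol{\theta})^2,\dots,\trig_m(\boldsymbol{\theta})^2$ are non-negative and sum to $1$, so at least one of them is $\geq 1/m$, i.e.\ $\max_j \trig_j(\boldsymbol{\theta})^2 \geq 1/m$. Since every $\theta_k\in[0,\pi/2]$, each factor $\cos\theta_j$ and $\sin\theta_k$ is non-negative, so $\trig_j(\boldsymbol{\theta})\geq 0$ for all $j$; taking square roots gives $\max_j \trig_j(\boldsymbol{\theta}) \geq 1/\sqrt{m}$, as claimed. There is no real obstacle here — the only care needed is in handling the index $m$ correctly (the last coordinate carries no cosine factor) and in phrasing the inductive invariant so that it terminates at precisely the $m$-term sum.
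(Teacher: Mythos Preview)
Your proof is correct and follows essentially the same approach as the paper: establish $\sum_{j=1}^m \trig_j(\boldsymbol{\theta})^2 = 1$, conclude by averaging that some $\trig_j(\boldsymbol{\theta})^2 \geq 1/m$, then use non-negativity to take square roots. The only difference is that the paper asserts the sum-of-squares identity without proof (it is the standard fact that the spherical-to-Cartesian map preserves norm), whereas you supply a clean telescoping induction for it.
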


\begin{proof}
$\sum_{j=1}^m \trig(\boldsymbol{\theta})_j^2 = 1$ and therefore the largest entry (of the sum) must be greater than their average, $1/m$. Since the entries are non-negative we can take a square root; the Claim follows. 
\end{proof}  

Let $g_i$ and $\hat{g}_i$ be the densities of $\realD_i$ and $\wiggleD_i$ in polar coordinates. Claim~\ref{claim:jacobian} is a well-known consequence of multi-variable calculus. 

\begin{claim}
\label{claim:jacobian}
Let $\mathbf{x} \in \mathbb{R}^+_m$ and let $(r, \boldsymbol{\theta})$ be its polar transform. For any probability density $f(\mathbf{x})$ defined on $\mathbf{x} \in \mathbb{R}^+_m$, the corresponding density in polar coordinates is given by $g(r, \boldsymbol{\theta}) = J_m f(\mathbf{x}),$
where $J_m =r^{m-1} \prod_{i=1}^{m-2} \sin(\theta_i)^{m-1-j}$. In particular $J_2 = r$, $J_3 = r^2 \sin \theta_1$ and so on. 
\end{claim}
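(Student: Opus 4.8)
The plan is to reduce the claim to computing the Jacobian determinant of the polar-to-Cartesian map $\Phi_m\colon(r,\boldsymbol{\theta})\mapsto\mathbf{x}$ (where $x_i = r\cos\theta_i\prod_{k<i}\sin\theta_k$ for $i<m$ and $x_m = r\prod_{k<m}\sin\theta_k$), and then to quote the standard multivariate change-of-variables formula for densities. On the open region $U=\{r>0,\ \theta_i\in(0,\pi/2)\ \forall i\}$, the map $\Phi_m$ is a diffeomorphism onto the open positive orthant, and its complement (where $r=0$ or some $\theta_i\in\{0,\pi/2\}$) is Lebesgue-null, hence irrelevant to any density identity. So if $\mathbf{x}$ has density $f$, then $(r,\boldsymbol{\theta})$ has density $g(r,\boldsymbol{\theta}) = f(\Phi_m(r,\boldsymbol{\theta}))\,\lvert\det D\Phi_m(r,\boldsymbol{\theta})\rvert$; since $r>0$ and every $\sin\theta_i>0$ on $U$, the absolute value is vacuous. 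Everything therefore comes down to showing $\det D\Phi_m = J_m$ with $J_m = r^{m-1}\prod_{i=1}^{m-2}(\sin\theta_i)^{m-1-i}$ (the exponent in the statement should read $m-1-i$, not $m-1-j$).

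I would prove the determinant formula by induction on $m$ using the recursive structure of hyperspherical coordinates: $x_1 = r\cos\theta_1$, while $(x_2,\dots,x_m)$ is precisely $\Phi_{m-1}$ applied to the radius $s:=r\sin\theta_1$ and the angles $(\theta_2,\dots,\theta_{m-1})$. Thus $\Phi_m$ factors as the composition of (a) the map $(r,\theta_1)\mapsto(x_1,s)=(r\cos\theta_1,r\sin\theta_1)$, acting as the identity on $\theta_2,\dots,\theta_{m-1}$, followed by (b) the map that fixes $x_1$ and sends $(s,\theta_2,\dots,\theta_{m-1})\mapsto(x_2,\dots,x_m)=\Phi_{m-1}(s,\theta_2,\dots,\theta_{m-1})$. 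By the chain rule, $\det D\Phi_m$ is the product of the two Jacobian determinants. The first is the elementary planar polar Jacobian, equal to $r$; the second equals $\det D\Phi_{m-1}$ evaluated at $(s,\theta_2,\dots,\theta_{m-1})$, which by the inductive hypothesis is $s^{m-2}\prod_{i=2}^{m-2}(\sin\theta_i)^{m-1-i}$. Substituting $s=r\sin\theta_1$ and multiplying gives $r\cdot r^{m-2}(\sin\theta_1)^{m-2}\prod_{i=2}^{m-2}(\sin\theta_i)^{m-1-i} = r^{m-1}\prod_{i=1}^{m-2}(\sin\theta_i)^{m-1-i}$, which is $J_m$; the base case $m=2$ is the familiar fact that $(r,\theta_1)\mapsto(r\cos\theta_1,r\sin\theta_1)$ has Jacobian $r\cos^2\theta_1+r\sin^2\theta_1=r$.

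There is no genuine obstacle here — this is a textbook computation — so the only things needing attention are the re-indexing of the exponents in the inductive step and the (routine) remark that the non-diffeomorphic boundary of the parameter domain is a null set and hence does not affect the density identity. If one prefers to avoid the recursion, an equally valid route is to expand $\det D\Phi_m$ directly along its first column and recognize each resulting $(m-1)\times(m-1)$ minor, but the factorization above keeps the exponent bookkeeping transparent and requires no explicit determinant expansion.
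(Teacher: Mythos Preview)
Your proof is correct and in fact more detailed than the paper's treatment: the paper does not prove this claim at all, simply stating that it ``is a well-known consequence of multi-variable calculus.'' Your inductive computation of the Jacobian via the recursive structure of hyperspherical coordinates is a clean standard argument, and your observation that the exponent should read $m-1-i$ rather than $m-1-j$ is correct (this is a typo in the paper).
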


We use the following Lemma in all our upper bounds; it's a generalization of the fixed angle upper bound in the proof of Theorem~\ref{thm:angle theorem new}.

\begin{lemma}
\label{lemma:anglefocus}
For any pricing $p$, for any angles $\boldsymbol{\theta}_1, \dots, \boldsymbol{\theta}_n$ , where $\boldsymbol{\theta}_i = (\theta_{i,1},\dots,\theta_{i,m-1})$, 
\[ \textstyle \int_{\mathbf{r} \in [0,\infty)^{n} } \hat{g}(\mathbf{r} | \boldsymbol{\theta}_1, \dots, \boldsymbol{\theta}_n ) \sum_{i=1}^n p^*_i(\mathbf{r}, \boldsymbol{\theta}_1, \dots, \boldsymbol{\theta}_n ) d\mathbf{r} \leq \rev{ \wiggleD_{\boldsymbol{\theta}_1}, \dots, \wiggleD_{\boldsymbol{\theta}_n} }, \]
where $\wiggleD_{\boldsymbol{\theta}_i}$ is the single parameter distribution of buyer $i$ (perhaps correlated with the distribution of another buyer $j$), where a length vector $\mathbf{r}$ is drawn according to the density function $\hat{g} ( \mathbf{r} | \boldsymbol{\theta}_1, \dots, \boldsymbol{\theta}_n )$, and $\rev{\wiggleD_{\boldsymbol{\theta}}}$ is the revenue of the optimal auction that is truthful in the same sense as $p^*$.
\end{lemma}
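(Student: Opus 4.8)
The plan is to realize $\rev{\wiggleD}$ as a single mechanism and then, for each fixed angle profile, restrict that mechanism to the ``frozen‑angle'' sub‑domain, where it becomes a legal mechanism (of the same truthfulness type) for the single‑parameter instance $\wiggleD_{\boldsymbol\theta_1},\dots,\wiggleD_{\boldsymbol\theta_n}$; the inequality then falls out of the definition of $\rev{\cdot}$ as a maximum over such mechanisms. Concretely, let $M=(\vec x,\vec p)$ be the revenue‑optimal mechanism for $\wiggleD$ that is truthful in the sense attached to $p^*$ (DSIC for a single buyer or for correlated buyers, BIC for independent buyers), with payment rule $p^*$, so that its expected revenue equals $\rev{\wiggleD}$. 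Fix $\boldsymbol\theta=(\boldsymbol\theta_1,\dots,\boldsymbol\theta_n)$ and let $T_i=\{\,r\cdot\trig(\boldsymbol\theta_i):r\ge 0\,\}$ be buyer $i$'s ray of valuation vectors with angle $\boldsymbol\theta_i$. Let $M_{\boldsymbol\theta}$ be the mechanism whose input is a length vector $\mathbf r=(r_1,\dots,r_n)$ and which runs $M$ on the profile $(r_1\trig(\boldsymbol\theta_1),\dots,r_n\trig(\boldsymbol\theta_n))$; equivalently, $M_{\boldsymbol\theta}$ is $M$ with the reportable type profiles restricted to $\prod_i T_i$.

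First I would check that $M_{\boldsymbol\theta}$ is a legal mechanism for $\wiggleD_{\boldsymbol\theta}$: feasibility of the allocation and ex‑post individual rationality are pointwise properties of $M$ that are inherited verbatim when the domain is shrunk. For truthfulness in the single‑buyer and correlated‑buyer (DSIC) cases, the dominant‑strategy incentive inequalities of $M$ hold for every pair of full‑domain types of buyer $i$ and every report profile of the others; in particular they hold when both types of buyer $i$ lie in $T_i$ and the others' reports lie in $\prod_{j\ne i}T_j$, so $M_{\boldsymbol\theta}$ is DSIC on $\wiggleD_{\boldsymbol\theta}$. The revenue bookkeeping is then immediate: under $\wiggleD_{\boldsymbol\theta_1},\dots,\wiggleD_{\boldsymbol\theta_n}$ the length vector $\mathbf r$ is drawn from $\hat g(\cdot\mid\boldsymbol\theta)$ and buyer $i$'s type is $r_i\trig(\boldsymbol\theta_i)$, so the expected revenue of $M_{\boldsymbol\theta}$ equals $\int_{\mathbf r\in[0,\infty)^n}\hat g(\mathbf r\mid\boldsymbol\theta)\sum_i p^*_i(\mathbf r,\boldsymbol\theta)\,d\mathbf r$; since $\rev{\wiggleD_{\boldsymbol\theta_1},\dots,\wiggleD_{\boldsymbol\theta_n}}$ is the supremum of revenue over all mechanisms truthful in the same sense, this quantity is at most $\rev{\wiggleD_{\boldsymbol\theta_1},\dots,\wiggleD_{\boldsymbol\theta_n}}$, which is the claim. (For a single buyer one can alternatively skip the restriction step: $\rev{\cdot}$ is convex in the type distribution — a supremum of functionals linear in the distribution — and $\wiggleD=\mathbb E_{\boldsymbol\theta}[\wiggleD_{\boldsymbol\theta}]$, so $\rev{\wiggleD}\le\int\hat g(\boldsymbol\theta)\rev{\wiggleD_{\boldsymbol\theta}}\,d\boldsymbol\theta$ directly.)

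The one place that needs care — and the main obstacle — is the BIC case for independent buyers, because restricting the domain changes the distribution over which buyer $i$'s interim allocation and payment are averaged: $M$ is incentive compatible only against the marginal $\wiggleD_{-i}$, while BIC for $\wiggleD_{\boldsymbol\theta}$ demands incentive compatibility against the slice $\hat g_{-i}(\cdot\mid\boldsymbol\theta_{-i})$, and an inequality that holds in expectation over $\wiggleD_{-i}$ need not hold on a sub‑slice. Here the right target — and all that is actually used downstream, since the proofs integrate $\rev{\wiggleD_{\boldsymbol\theta}}$ against $\hat g(\boldsymbol\theta)$ anyway — is the aggregated statement $\BICrev{\wiggleD}\le\int_{\boldsymbol\theta}\hat g(\boldsymbol\theta)\,\BICrev{\wiggleD_{\boldsymbol\theta}}\,d\boldsymbol\theta$: treat $\boldsymbol\theta$ as a public signal, observe that a seller who sees it can run the optimal BIC mechanism for the realized single‑parameter instance $\wiggleD_{\boldsymbol\theta}$ (which, being effectively one‑dimensional, has $\BICrev{\wiggleD_{\boldsymbol\theta}}=\drev{\wiggleD_{\boldsymbol\theta}}$), and argue that this signal‑augmented optimum is at least $\BICrev{\wiggleD}$. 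I expect the delicate point to be exactly the verification that conditioning on the public signal cannot hurt the seller in the augmented Bayesian game — i.e. that one can reconstruct, per realized $\boldsymbol\theta$, a BIC mechanism matching the contribution of $M$ to $\BICrev{\wiggleD}$ on that slice — and I would prove it by passing to interim allocation rules and checking that the restriction of $M$'s interim rule to the rays $T_i$ is still feasible (a Border‑type condition) and still incentive compatible under $\hat g(\cdot\mid\boldsymbol\theta)$, using independence across buyers.
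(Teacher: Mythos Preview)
Your approach is essentially the same as the paper's: the paper's proof is a two-sentence observation that the left-hand side is the revenue of the (globally optimal) mechanism $p^*$ when run on the slice distribution $\wiggleD_{\boldsymbol\theta}$, and the right-hand side is by definition the maximum revenue of any mechanism truthful in the same sense on that slice, so the inequality is immediate. Your restriction argument is exactly this, spelled out in more detail; in fact you are more careful than the paper, which does not explicitly justify why the restriction of $p^*$ remains truthful on the slice (your DSIC paragraph handles this cleanly), and which does not single out the BIC case at all --- your observation that the pointwise restriction need not preserve BIC, and that one should instead target the integrated inequality $\BICrev{\wiggleD}\le\int\hat g(\boldsymbol\theta)\,\BICrev{\wiggleD_{\boldsymbol\theta}}\,d\boldsymbol\theta$, goes beyond what the paper's proof supplies.
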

\begin{proof}
The LHS is the revenue of $p$ conditioned on the angles being $\boldsymbol{\theta}$, i.e. the revenue of $p$ for the distribution where we draw $\mathbf{r}$ according to $\hat{g}\left( \mathbf{r} | \boldsymbol{\theta} \right)$, and buyer $i$ has value $r_i \trig_j(\boldsymbol{\theta}_i)$ for item $j$. The RHS is the maximum revenue that a truthful auction can extract from the same distribution.
\end{proof}

\section{Additive noise}\label{app:silly}

We provide almost matching upper and lower bounds for the model where $\realD$ is supported on $[0,v_{\max}]^2$ and we perturb $\mathbf{v}$ by adding a uniformly random vector from $[0,\delta v_{\max}]^2$. 

\begin{lemma}
$\rev{\wiggleD} \in O(\ln(1/\delta)) \brev{\wiggleD}$ 
\end{lemma}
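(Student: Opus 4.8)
The plan is to follow the same two-step template as in the Angle-Shift and Square-Shift proofs: first reduce the multidimensional optimal revenue to an integral of one-dimensional (posted-price) revenues along rays, and then use the additive-noise structure to bound the smoothed density $\hat g$ in terms of the original density $g$, losing only a logarithmic factor. Concretely, write the revenue of $\wiggleD$ in polar coordinates as $\rev{\wiggleD} = \int_{\theta=0}^{\pi/2} \hat g(\theta) \int_{r=0}^\infty p^*(r,\theta)\,\hat g(r|\theta)\,dr\,d\theta$, and as before observe that the inner integral is at most $\rev{\wiggleD_\theta}$, the revenue of the optimal single-parameter mechanism for the conditional-on-$\theta$ distribution, which equals the revenue of a posted price $r_\theta$. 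This gives $\rev{\wiggleD} \le \int_0^{\pi/2} r_\theta \int_{r \ge r_\theta/(\cos\theta+\sin\theta)} \hat g(r,\theta)\,dr\,d\theta$.

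For step two, the additive-noise analog of Equation~\eqref{eq:square shift bound f_hat} is $\hat f(x,y) = \frac{1}{\delta^2 v_{\max}^2}\int f(a,b)\,I\{(a,b) \in [x-\delta v_{\max}, x]\times[y-\delta v_{\max}, y]\}\,da\,db$. The key difference from Square-Shift is that the ``box'' a point maps from has fixed side length $\delta v_{\max}$, so the inner-integral-over-$r$ bound $\int \frac{1}{r} I\{(a,b)\in R(r,\theta)\}\,dr$ is no longer bounded by $\log(1+\delta)$; instead, a ray through a fixed-size box of side $\delta v_{\max}$ near the origin can stay in the box over a range of radii whose \emph{ratio} is large (when $\ell_1$ is tiny). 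So the clean Square-Shift step fails here. The fix is to handle small and large radii separately: when $r_\theta/(\cos\theta+\sin\theta)$ (hence the relevant radii) is at least, say, $\delta v_{\max}$, the ratio argument gives $\int \frac{1}{r}I\{\dots\}dr \le \ln(1 + \sqrt2\,\delta v_{\max}/\ell_1) \le \ln(1 + \sqrt2) = O(1)$ since $\ell_1 \ge \delta v_{\max}$ there. When $r_\theta$ is small (below $\Theta(\delta v_{\max})$), we bound that portion of $\rev{\wiggleD}$ crudely: since $\realD$ (and $\wiggleD$) is supported on $[0, v_{\max}(1+\delta)]^2$, the total contribution of rays with $r_\theta \in O(\delta v_{\max})$ is $O(\delta v_{\max})$, which is $O(\brev{\wiggleD})$ whenever $\brev{\wiggleD} = \Omega(\delta v_{\max})$ — and the trivial grand-bundle price of $\delta v_{\max}$ sells with probability $\ge 1/2$, so indeed $\brev{\wiggleD} \ge \delta v_{\max}/2$. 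For the large-$r_\theta$ regime, after bounding the inner $r$-integral by $\ln(1/\delta)$ — this is where the $\ln(1/\delta)$ enters: summing the per-$\theta$ contributions and using that $\hat f(x,y) \le \frac{C}{\delta^2 v_{\max}^2}\int f(a,b)I\{(a,b)\in R\}$, the accumulated ratio over the full range of relevant radii down to $\Theta(\delta v_{\max})$ is $O(\ln(1/\delta))$ — apply Fubini, interpret the resulting double integral over $(a,b)$ as $\Pr[\hat v_1 + \hat v_2 \ge r_\theta / ((1+\delta)\sqrt2)]$ (using that perturbation only increases values and $\cos\theta+\sin\theta \le \sqrt2$), and conclude via Claim~\ref{claim:brev bigger than pricing} that each $r_\theta \Pr[\dots]$ term is at most $\brev{\wiggleD}$. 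Integrating over $\theta \in [0,\pi/2]$ contributes another $\pi/2$ factor.

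The main obstacle is precisely the failure of the clean Claim~\ref{claim:singledimintbound}-style bound: in additive noise the mapping region does not scale with the point's magnitude, so a ray near the origin threads a fixed box over an unbounded ratio of radii. Making the small-radius cutoff rigorous — i.e. showing that the contribution of rays with small $r_\theta$ is genuinely $O(\brev{\wiggleD})$ and that on the remaining range the integrated $1/r$ factor is $O(\ln(1/\delta))$ rather than unbounded — is the crux. Everything else (the polar-coordinate reduction of step one, Fubini, the probability interpretation, the bound $\brev{\wiggleD}\ge \delta v_{\max}/2$) is routine and parallels the proofs already given in Section~\ref{sec:square}.
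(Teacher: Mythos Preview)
Your approach follows the polar-coordinate template of the Square-Shift proof, but the paper takes a completely different and much shorter route: it simply bounds $\rev{\wiggleD}$ by the expected welfare $\mathbb{E}[\hat v_1 + \hat v_2]$, then bounds the welfare by $\brev{\wiggleD}\bigl(1 + \ln(2/\brev{\wiggleD})\bigr)$ via the elementary tail inequality $\Pr[\hat v_1 + \hat v_2 \geq x] \leq \brev{\wiggleD}/x$ for $x \geq \brev{\wiggleD}$, and finishes with $\brev{\wiggleD} = \Omega(\delta v_{\max})$. No polar coordinates, no density manipulation, no case split.

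More importantly, your proposal has a concrete error in transposing the Square-Shift calculation. You write the inner integral as $\int \tfrac{1}{r}\, I\{(a,b) \in R(r,\theta)\}\, dr$, but in the additive model the $1/r$ never appears. In Square-Shift it came from the bound $\hat f(x,y) \leq \frac{(1+\delta)^2}{\delta^2 \max\{x,y\}^2}\int f\,I$, which combined with $\hat g(r,\theta) = r\hat f(r\cos\theta, r\sin\theta)$ produces the factor $\tfrac{r}{r^2} = \tfrac{1}{r}$. For additive noise, however, $\hat f(x,y) = \frac{1}{\delta^2 v_{\max}^2}\int f\,I$ has no dependence on $(x,y)$ in the prefactor, so after the Jacobian you are left with $\int r\, I\{\dots\}\,dr$, not $\int \tfrac{1}{r}\, I\{\dots\}\,dr$. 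Evaluating honestly, this integral is at most $\ell_2(\ell_2-\ell_1) \leq O(\delta v_{\max}^2)$; against the prefactor $1/(\delta^2 v_{\max}^2)$ this yields a $1/\delta$ factor, and the whole argument recovers only the trivial $O(1/\delta)$ bound from the introduction, not $O(\ln(1/\delta))$. The small-$r_\theta$/large-$r_\theta$ split you propose does not rescue this: the loss is already baked into the density bound, not into any ratio of radii along a ray. So as written the approach does not reach the claimed bound; the paper's welfare argument is both correct and far simpler.
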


\begin{proof}
Observe that for any $\wiggleD$, and any $x \geq \brev{\wiggleD}$, we have that $Pr[ \hat{v}_1 + \hat{v}_2 \geq x] \leq \frac{\brev{\wiggleD}}{x}$. Otherwise, if there exists $x^* \geq \brev{\wiggleD}$ such that $Pr[ \hat{v}_1 + \hat{v}_2 \geq x^*] > \frac{\brev{\wiggleD}}{x^*}$, setting a posted price of $x^*$ for the grand bundle would yield strictly more revenue than $\brev{\wiggleD}$; a contradiction.
\begin{align*}
\mathbb{E}[ \hat{v}_1 + \hat{v}_2 ] &= \int_{x=0}^2 Pr[ \hat{v}_1 + \hat{v}_2 \geq x ] dx \\
&=\int_{x = 0}^{\brev{\wiggleD}} Pr[ \hat{v}_1 + \hat{v}_2 \geq x ] dx  + \int_{x = \brev{\wiggleD}}^2 Pr[ \hat{v}_1 + \hat{v}_2 \geq x ] dx \\
&\leq \brev{\wiggleD} + \int_{x = \brev{\wiggleD}}^2 \frac{\brev{\wiggleD}}{x} dx \\
&= \brev{\wiggleD} + \brev{\wiggleD} \ln\left( \frac{2}{\brev{\wiggleD}} \right),
\end{align*}
which immediately implies that $\rev{\wiggleD} \leq \brev{\wiggleD} + \brev{\wiggleD} \ln\left( \frac{2}{\brev{\wiggleD}} \right)$. Furthermore, $\brev{\wiggleD} \geq \delta$, by simply selling the grand bundle for a price of $\delta$. The lemma follows.
\end{proof}

\begin{lemma}
There exists a bivariate distribution $\realD$  such that for its corresponding perturbed distribution $\wiggleD$, $\rev{\wiggleD} \in \Omega\left( \ln(\frac{1}{\delta})^{1/7} \right) \brev{\wiggleD}$.
\end{lemma}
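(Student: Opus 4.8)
The plan is to run the argument of Appendix~\ref{app:lower bound square two} almost verbatim, tracking carefully how large the additive perturbation $\delta v_{\max}$ is permitted to be. For a parameter $n$ to be chosen at the end, feed the first $n$ points of the shell construction of Proposition~6.1 of~\cite{HartN13} into Proposition~5.1 of~\cite{HartN13}. This produces a distribution $\realD$ supported inside $[0,v_{\max}]^2$ on points $x^k = M^k q^k$ with $M^k = 4^k/\prod_{j\le k}\gap^j$, satisfying $\brev{\realD}\le 4$, $\rev{\realD}\ge\sum_{k\le n}\gap^k = \Omega(n^{1/7})$, and $v_{\max}=\max_k\|x^k\|_\infty\le M^n$. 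The one estimate going beyond Appendix~\ref{app:lower bound square two} is the size of $M^n$: since $\gap^j=\Theta(j^{-6/7})$ we get $\prod_{j\le n}\gap^j = \Theta(1)^n (n!)^{-6/7}$, hence $\ln v_{\max}\le\ln M^n = \tfrac{6}{7}n\ln n + O(n) = \Theta(n\log n)$.

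Next I would show the perturbation moves both $\brev$ and $\rev$ by only $O(\delta v_{\max})$. For $\brev$: for any bundle price $p$ we have $\Pr_{\wiggleD}[\hat v_1+\hat v_2\ge p]\le\Pr_{\realD}[v_1+v_2\ge p-2\delta v_{\max}]$ since the noise adds at most $2\delta v_{\max}$ to $v_1+v_2$, so the revenue of price $p$ under $\wiggleD$ is at most $\brev{\realD}+2\delta v_{\max}$, giving $\brev{\wiggleD}\le 4+2\delta v_{\max}$. For $\rev$: apply the~\cite{RubinsteinW15} theorem quoted in Appendix~\ref{app:lower bound square two} with $D=\realD$, $D^+=\wiggleD$ (which stochastically dominates $\realD$ because the noise is non-negative), $\textsc{Val}(\Delta)=\mathbb{E}[(\hat v_1-v_1)+(\hat v_2-v_2)]=\delta v_{\max}$, and $\epsilon=\tfrac12$, obtaining $\rev{\wiggleD}\ge\tfrac12(\rev{\realD}-2\delta v_{\max})$. (Alternatively, one checks directly that the~\cite{HartN13} menu with every price lowered by $2\delta v_{\max}$ stays incentive compatible for each perturbed type $\hat x^k=x^k+u$, using $|u\cdot(q^k-q^j)|\le 2\delta v_{\max}$, $x^k\cdot(q^k-q^j)\ge M^k\gap^k$ and $M^k\gap^k=4M^{k-1}$; this earns $\sum_k\gap^k-2\delta v_{\max}\sum_k(M^k)^{-1}\ge\rev{\realD}-O(\delta v_{\max})$ once $\delta v_{\max}$ is below a small absolute constant.)

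Finally, choose $n$ as large as the requirement $\delta v_{\max}\le\tfrac12$ allows, i.e.\ $n$ with $n\log n=\Theta(\log(1/\delta))$, so $n=\Theta(\log(1/\delta)/\log\log(1/\delta))$. Then $\brev{\wiggleD}=O(1)$, $\rev{\wiggleD}=\Omega(n^{1/7})$, and hence $\rev{\wiggleD}=\Omega(n^{1/7})\,\brev{\wiggleD}$, which is the claimed $\Omega(\log(1/\delta)^{1/7})$ bound up to the $\log\log$ factor hidden in the choice of $n$. The main obstacle is exactly the estimate isolated above: unlike the Square-Shift model, where the slack term $\textsc{Val}(\Delta)$ was $O(\delta n)$ and depended only on the number of support points, here it is $\delta v_{\max}$ with $v_{\max}=e^{\Theta(n\log n)}$, and this near-factorial growth is what forces $n$ all the way down to $\Theta(\log(1/\delta)/\log\log(1/\delta))$ --- and is the reason the gap in this model degrades to merely polylogarithmic in $1/\delta$. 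Establishing $\ln M^n=\Theta(n\log n)$ tightly and performing the resulting inversion is the only genuinely new computation; the remainder is a routine adaptation of Appendix~\ref{app:lower bound square two}.
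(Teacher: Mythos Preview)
Your approach is essentially the paper's: take the first $n$ points of the Hart--Nisan shell construction, invoke Proposition~5.1 to get the distribution, transfer revenue across the perturbation via the \cite{RubinsteinW15} theorem with $\varepsilon=\tfrac12$, and solve for the largest admissible $n$. The paper additionally rescales by the maximum support value so that $v_{\max}=1$ (and hence the additive noise is literally $U[0,\delta]^2$), whereas you keep $v_{\max}$ explicit and carry a $\delta v_{\max}$ error term; this is cosmetic, and both routes land on the same constraint $\delta v_{\max}=O(1)$.

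The one substantive divergence is the estimate of $v_{\max}=M^n=4^n/\prod_{j\le n}\gap^j$. The paper asserts $M^n\le 4^{n+1}$ and can therefore take $n=\Theta(\log(1/\delta))$, matching the stated bound exactly. You compute $\ln M^n=\Theta(n\log n)$ from $\gap^j=\Theta(j^{-6/7})$ and are forced down to $n=\Theta(\log(1/\delta)/\log\log(1/\delta))$, with the $(\log\log)^{1/7}$ shortfall you flag. Your estimate is the careful one: the paper's bound $M^n\le 4^{n+1}$ would require $\prod_{j\le n}\gap^j\ge 1/2$, but in the shell construction the gaps really are $\Theta(j^{-6/7})$ and the product decays like $(n!)^{-6/7}$. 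So your argument and the paper's coincide in strategy; the paper obtains the cleaner exponent via an overly optimistic bound on $M^n$, while your more accurate bookkeeping yields the same result up to the $\log\log$ factor. Your direct alternative (lowering all menu prices by $2\delta v_{\max}$ and rechecking IC/IR) is also fine and gives the same $\rev{\realD}-O(\delta v_{\max})$ conclusion without invoking \cite{RubinsteinW15}.
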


\begin{proof}
First, we restate some known results that we will need.\footnote{These propositions appear in the arXiv version of~\cite{HartN13} uploaded on 22 April, 2013.}

Let $q^1, q^2, \dots$ be a finite or countably infinite sequence of points in $[0, 1]^k$. Define
\[
gap^n = q^n \cdot q^n - \max_{j < n} q^j \cdot q^n.
\]

\begin{proposition}[Proposition $5.1$ in~\cite{HartN13}]
For every (finite or countably infinite) sequence $q^1, q^2, \dots$ of points in $[0, 1]^k$ there exists a distribution $F$ on $\mathbb{R}^k_+$ such that $\brev{F} \leq 2k$, and $\rev{F} \geq \sum_{n} gap^n$. Specifically, for each $n$, let $M^n = (2k)^n/(\prod_{j=1}^n gap^j )$. Then, $F$ is the distribution that puts probability $1/M^n$ on the point $x^n = M^n q^n$, and puts the remaining probability on the point $0$.
\end{proposition}

\begin{proposition}[Proposition $6.1$ in~\cite{HartN13}]
There exists an infinite sequence of points $q^1, q^2, \dots$ of points in $[0,1]^2$ with $\| q^k \|_2 \leq 1$, such that taking for all $k$, $gap^k \in \Omega(k^{-6/7})$.
\end{proposition}

Combining these two results we can get a distribution $D$ over $n$ points in $[0,1]^2$, such that $\brev{D} \leq 4$ and $\rev{D} \geq \sum_{k=1}^n gap^k = \sum_{k=1}^n k^{-6/7} \in \Omega( n^{1/7} )$. The largest possible value for any of the two items is $M^n q^n \leq 2 M^n \leq  2 \frac{4^n}{\prod_{j=1}^n gap^j } \leq 4^{n+1}$.

Let $\realD$ be $D / 4^{n+1}$. Then, $\brev{\realD} \leq 4^{-n}$ and $\rev{\realD} \in \Omega( n^{1/7} / 4^{n} )$.
To complete the proof of the lemma, we find a $\delta$ that is large, but at the same time the gap between $\rev{\wiggleD}$ and $\brev{\wiggleD}$  remains sufficiently large as well. We use the following theorem:

\begin{theorem}[\cite{RubinsteinW15}]
Let $D$ be a distribution, and $D^+$ be a distribution that stochastically dominates $D$. Let $\Delta$ denote the random function $v^+ - v$, when couples $v^+$ and $v$ are sampled jointly from $D^+$ and $D$. Finally, let $\textsc{Val}(\Delta)$ be the expected welfare of VCG with buyers whose types are distributed according to $\Delta$. Then, for all such $D$, $D^{+}$, $\Delta$, and for every mechanism $M$ for $D$ and every $\epsilon > 0$, there exists a mechanism $M'$ for $D^{+}$ such that
\[
\rev{ M', D^{+} } \geq (1-\epsilon) \cdot \left( \rev{ M, D} - \frac{\textsc{Val}(\Delta)}{\epsilon} \right).
\]
\end{theorem}

We apply this theorem for $D = \realD$, and $D^{+} = \wiggleD$, which stochastically dominates $\realD$, and $M$ the optimal mechanism for $\realD$. Our noise model adds at most $\delta$ to the value of each item, therefore $\textsc{Val}(\Delta) \leq 2 \delta$. We set $\epsilon = \frac{1}{2}$. Then:
\begin{align*}
\rev{\wiggleD} &\geq \rev{ M', \wiggleD } \\
&\geq \frac{1}{2} \cdot \left( \rev{\realD} - 4\delta \right) \\
&\geq \frac{n^{1/7}}{4^{n+1}} - 2\delta.
\end{align*}
Finally, $\brev{\wiggleD} \leq \brev{\realD} + 2\delta \leq 4^{-n} + 2 \delta$.
Picking $n \in \Theta( \ln( \frac{1}{\delta} ) )$ completes the lemma.
\end{proof}
\section{Unilluminating Alternative Models}\label{app:badsmoothed}
In this section we briefly overview a few seemingly natural alternatives that unfortunately have an unilluminating analysis (similar to the additive model from Section~\ref{sec:notation}). \textbf{It is certainly a worthwhile direction for future research to identify different illuminating models}, but the point of this section is just to provide further evidence that doing so requires care. This section only contains sketches and no formal proofs. Both alternatives below propose some restrictions on the class of distributions $D$ considered, and ask what guarantees can be made subject to these restrictions.

\subsection{Lower-bounded Density}
Consider the following restriction on distributions: for all $\vec{v}$ in the support of $\mathcal{D}$, $f(\vec{v}) \in [\delta, 1/\delta]$. Let $K$ denote the support of $\mathcal{D}$. There are two cases to consider:

First, maybe $K$ is not convex. If $K$ is not required to be convex, then we can take exactly the construction of~\cite{BriestCKW15, HartN13} and turn each discrete point into a tiny uniform ball where the density is in the range $[\delta, 1/\delta]$, all without affecting the properties at all.

Second, maybe $K$ is required to be convex. If $K$ is convex, then consider $L = \max_{\vec{v} \in K}\{v_1 + v_2\}$. First, observe that at least $1/4$ of the volume of $K$ (not weighted by density in $\mathcal{D}$) lies above the line $v_1 + v_2 \geq L/2$. To see this, consider shrinking $K$ by a factor of $2$ in all dimensions, and moving it back to touch the line $v_1 + v_2 = L$. Then clearly this lies inside $K$, and has volume at least $\text{vol}(K)/4$. Now, observe that because the density lies in $[\delta, 1/\delta]$ everywhere, this means that the total \emph{density} of $\mathcal{D}$ above the line $v_1+v_2 \geq L/2$ is at least $\delta^2/4$. Therefore, by simply setting price $L/2$ on the grand bundle, we can generate revenue $L\delta^2/8$. The optimum revenue is clearly at most $L$, so this guarantees an $O(\delta^2)$ approximation. Like with the initial silly model, we simply don't learn anything from this analysis. All we learn is that if the densities are sufficiently bounded, then you can get a decent approximation by picking a specific type and targeting them while ignoring the rest. 

\subsection{Bounded away from Axes}
Consider the following restriction on distributions: with probability $1-\delta$, $\vec{v} \leftarrow \mathcal{D}$ satisfies $v_1 \geq \delta v_2 \geq \delta^2 v_1$. In other words, $\vec{v}$ is sufficiently bounded away from the axes with non-negligible probability. Unfortunately this buys us absolutely nothing. Consider the construction of~\cite{BriestCKW15, HartN13}, and the distribution $\mathcal{D}'$ which draws from their construction with probability $\delta$, and otherwise draw from $U([1,2]\times[1,2])$. Then the optimal revenue of this distribution is still infinite, and the revenue of any mechanism with menu complexity $C$ is at most $O(C)$, even though with high probability it is bounded away from the axes. 

The above construction in fact works for any property. Consider restricting to distributions such that with probability $1-\delta$, $\vec{v} \leftarrow \mathcal{D}$ has property X. Assuming that some $\vec{v}$ exists with property X, let $\mathcal{D}'$ draw from the~\cite{BriestCKW15, HartN13} construction with probability $\delta$, and draw $\vec{v}$ otherwise. Then the optimal revenue of this distribution is still infinite, yet the optimal revenue for any mechanism with menu complexity $C$ is at most $O(C(v_1+v_2))$.

\end{appendix}

\bibliographystyle{plainnat}
\bibliography{refs}

\end{document}